\documentclass[11pt]{article}

\usepackage[margin=1in]{geometry}
\usepackage{amsmath,amssymb,amsthm}
\usepackage{braket}
\IfFormatAtLeastTF{2026-06-01}
  {}
  {\usepackage{aliascnt}}
\usepackage{algorithm}
\usepackage{algpseudocode}
\usepackage{microtype}
\usepackage[dvipsnames]{xcolor}
\usepackage[pdftex,bookmarks, plainpages=false, pdfpagelabels=true]{hyperref}
\hypersetup{
    bookmarksnumbered=true, 
    unicode=false, 
    pdfstartview={FitH}, 
    pdftitle={Near-Optimal Separations of Certificate Complexity from Randomized and Quantum Query Complexity}, 
    pdfauthor={Andris Ambainis, Jānis Iraids, Martins Kokainis}, 
    pdfsubject={}, 
    pdfcreator={}, 
    pdfproducer={}, 
    pdfkeywords={}, 
    pdfnewwindow=true, 
    colorlinks=true, 
    urlcolor=WildStrawberry, 
    linkcolor=NavyBlue!75, 
    citecolor=ForestGreen, 
    filecolor=BrickRed 
}
\usepackage[nameinlink,noabbrev]{cleveref}
\crefname{appendix}{appendix}{appendices}
\Crefname{appendix}{Appendix}{Appendices}
\crefname{subappendix}{appendix}{appendices}
\Crefname{subappendix}{Appendix}{Appendices}
\makeatletter
\renewcommand*{\theHALG@line}{\thealgorithm.\arabic{ALG@line}}
\makeatother

\newtheorem{theorem}{Theorem}[section]
\IfFormatAtLeastTF{2026-06-01}{
  \newtheorem{lemma}[theorem]{Lemma}
}{
  \newaliascnt{lemma}{theorem}
  \newtheorem{lemma}[lemma]{Lemma}
  \aliascntresetthe{lemma}
}

\theoremstyle{definition}
\IfFormatAtLeastTF{2026-06-01}{
  \newtheorem{definition}[theorem]{Definition}
}{
  \newaliascnt{definition}{theorem}
  \newtheorem{definition}[definition]{Definition}
  \aliascntresetthe{definition}
}
\theoremstyle{plain}

\title{Near-Optimal Separations of Certificate Complexity from Randomized and Quantum Query Complexity}

\usepackage[affil-it]{authblk}

\author[1]{Andris Ambainis}
\author[1]{J\={a}nis Iraids}
\author[1]{Martins Kokainis}

\affil[1]{Center for Quantum Computer Science, Faculty of Science and Technology, University of Latvia}
\date{}

\begin{document}

\maketitle

\begin{abstract}
We study how large the certificate complexity $\operatorname{C}(f)$ of a total Boolean function can be relative to its randomized and quantum query complexities. We construct a total Boolean function $f$ whose randomized query complexity with one-sided error satisfies $\operatorname{R}_1(f) = \Theta(\sqrt{\operatorname{C}(f)})$. This separation is optimal even when two-sided error is allowed.

From this construction, we obtain another total Boolean function $F$ with bounded-error quantum query complexity $\operatorname{Q}(F) = \widetilde O(\operatorname{C}(F)^{1/4})$, attaining the general quartic bound up to polylogarithmic factors. For the same function, both exact and zero-error quantum query complexities are $\widetilde O(\sqrt{\operatorname{C}(F)})$. We also prove matching lower bounds for these two measures on $F$.
\end{abstract}

\section{Introduction}
Certificate complexity $\operatorname{C}(f)$ 
is one of the most fundamental combinatorial complexity measures for Boolean functions (along with block sensitivity and polynomial
degree). $\operatorname{C}(f)$ is tightly connected to query complexity of $f$ in deterministic, randomized and quantum models of computation.
Let $\operatorname{D}(f), \operatorname{R}(f)$ and $\operatorname{Q}(f)$ denote these three complexity measures (with bounded error in the latter two cases).
For total Boolean functions $f$, we have 
\begin{itemize}
    \item 
    $\operatorname{C}(f) \leq \operatorname{D}(f) \leq \operatorname{C}(f)^2$ \cite{Tardos1989},
    \item
    $\operatorname{R}(f) = \Omega(\sqrt{\operatorname{C}(f)})$ and $\operatorname{R}(f) \leq \operatorname{C}(f)^2$ \cite{Nisan1991}, and
    \item
    $\operatorname{Q}(f) = \Omega(\sqrt[4]{\operatorname{C}(f)})$ and $\operatorname{Q}(f) \leq \operatorname{C}(f)^2$ \cite{Beals2001}.
\end{itemize}
In particular, this implies that deterministic, randomized and quantum complexities are all polynomially related. 

Several of the best-known relations among $\operatorname{D}(f)$, $\operatorname{R}(f)$, and $\operatorname{Q}(f)$ have used certificate complexity as an intermediate measure. For example, Nisan’s bound $\operatorname{D}(f)=O(\operatorname{R}(f)^3)$ \cite{Nisan1991}, still the best general relation between randomized and deterministic query complexity, proceeds through $\operatorname{C}(f)$. The earlier bound $\operatorname{D}(f)=O(\operatorname{Q}(f)^6)$ of Beals et al. \cite{Beals2001} likewise uses certificate complexity, although the more recent bound $\operatorname{D}(f)=O(\operatorname{Q}(f)^4)$ \cite{Aaronson2021} does not.

Given the fundamental nature of $\operatorname{C}(f)$ and its close relationship to query complexity measures, it is interesting to settle the questions about maximum possible gap between $\operatorname{C}(f)$ and various randomized and quantum query complexities. The research in the last decade \cite{Aaronson2021,cornelissen} has settled many of the other gaps between query and combinatorial complexity measures for total Boolean functions but these questions have remained open.

For the $\operatorname{Q}$-vs-$\operatorname{C}$ question, the best previously known result was 
\[
\operatorname{C}(f)=\Theta(\operatorname{Q}(f)^2),
\]
for the OR function. On the other hand, the general bound $\operatorname{C}(f)=O(\operatorname{Q}(f)^4)$ \cite{Aaronson2021} leaves open the possibility of a quartic separation. Thus the optimal exponent relating $\operatorname{C}(f)$ to $\operatorname{Q}(f)$ was known only to lie between 2 and 4.

In this paper, we close the gap (up to polylogarithmic factors), by constructing a function $F$ for which $\operatorname{Q}(F) = \widetilde{O}(\sqrt[4]{\operatorname{C}(F)})$.

The same function also substantially improves the best known separations between certificate complexity and exact and zero-error quantum query complexities. Let $\operatorname{Q}_{\mathrm E}(f)$ and $\operatorname{Q}_0(f)$ denote these two measures. Here, the biggest known separation was $\operatorname{C}(f)=\Omega(\operatorname{Q}_{\mathrm E}(f)^{1.15...})$ \cite{ambainis2013}. For our function, both 
$\operatorname{Q}_{\mathrm E}(F)$ and $\operatorname{Q}_0(F)$ are $\widetilde{O}(\sqrt{\operatorname{C}(F)})$, significantly improving over the best previously known separation. 

Unlike the bounded error result, it is not known whether these results are optimal: the best upper bounds are $\operatorname{C}(f)=O(\operatorname{Q}_{\mathrm E}(f)^3)$ and $\operatorname{C}(f)=O(\operatorname{Q}_0(f)^3)$ which follow from $\operatorname{D}(f)=O(\operatorname{Q}_{\mathrm E}(f)^3)$ \cite{midrijanis2004} and $\operatorname{D}(f)=O(\operatorname{Q}_0(f)^3)$ \cite[Theorem~6]{midrijanis2005}.

For the $\operatorname{R}$-vs-$\operatorname{C}$ question, we obtain a logarithm-free separation: $\operatorname{R}(f)=O(\sqrt{\operatorname{C}(f)})$. Along with the simultaneous work by Ben-David and Kothari \cite{BenDavidKothari2026}, this is the first example of an asymptotic separation between these two complexity measures.
Since $\operatorname{C}(f)=O(\operatorname{R}(f)^2)$ for any total $f$ \cite{Nisan1991}, this result is optimal.

Our construction builds on the work of Pabbaraju \cite{pabbaraju2026}, who constructed a family of total Boolean functions satisfying $\operatorname{C}(f) = \widetilde{\Omega}(\widetilde{\operatorname{deg}}(f)^4)$. Here, $\widetilde{\operatorname{deg}}(f)$ is the least degree of a real polynomial that approximates $f$ with error at most $1 / 3$ on every input. 
Given the connection between quantum algorithms and polynomials, it is plausible to expect a quantum algorithm of a similar complexity. However, the approximate degree can be asymptotically smaller than quantum query complexity. Thus the approximate-degree separation does not by itself provide the quantum upper bound required here.

We construct randomized and quantum algorithms for computing modifications of Pabbaraju's function.

In more detail, \cite{pabbaraju2026} uses a partial function whose domain is specified by a disjunctive normal form (DNF) with mutually incompatible conjunctions of input literals. A satisfied conjunction identifies a block and a set of positions within it; the partial function returns the OR of additional input bits at those positions. The cheat-sheet construction \cite{AaronsonBenDavidKothari2016} totalizes this partial function using an array of certificate descriptions; at most one cell is valid, and the total function is one if and only if such a cell exists. \cite{pabbaraju2026} obtains the approximate degree upper bound by summing polynomials that approximate the individual cell tests.

We also use the cheat-sheet totalization of the partial function and retain the way the partial function is constructed from the underlying DNF, while modifying how the DNF is defined to obtain a slightly different total Boolean function. The underlying DNF achieves the quadratic separation between $\operatorname{R}(f)$ and $\operatorname{C}(f)$.

On positive inputs, our quantum algorithm first computes the values determining the cell address and then verifies the addressed cell. Computing each value requires locating the block identified by the satisfied conjunction. We locate this block by finding a source in a directed complete graph defined by comparisons between blocks. The algorithm adapts the variable-time search construction of \cite{AmbainisKokainisVihrovs}; its query bound follows from estimates for the successive candidate sets and the query counts of their membership tests.

\paragraph*{Concurrent work.}
Concurrently and independently, \cite{BenDavidKothari2026} constructs a different total Boolean function with certificate complexity $\Omega(n^2)$, randomized query complexity $O(n\ln n)$, and bounded-error quantum query complexity $\widetilde O(\sqrt{n})$. Its input includes lists of candidate labels for pairs of groups, together with claims about the candidates' positions in those lists. Checking these claims permits comparisons using $O(\ln n)$ queries and an application of a known algorithm for finding a sink.

One can also obtain $\Theta(n\ln n)$ bounds for its exact and zero-error quantum query complexities as follows. For the exact upper bound, the candidate sampling in \cite[Lemma~3]{BenDavidKothari2026} is modified to choose one list independently and uniformly per group, retaining the first entry only if it names a candidate from that group. On positive inputs, the success probability is at least $1/2$ and can be computed from an accepted description, so the weighting argument in \Cref{sec:exact-query} applies. For the lower bound, a restriction fixes one candidate first in every list involving its group and varies only its description. The restricted function is one exactly when all $\Theta(n\ln n)$ bits of that description are zero, giving the zero-error lower bound by the argument of \Cref{lem:zero-error-lower}.

The construction in \cite{BenDavidKothari2026} suggests an alternative totalization of the DNF $f(x)$ on $n$ blocks. One can define a total function $F'_n$ whose input contains $x$ and descriptions $z_{i,p}$ indexed by a block and a position. A description is accepted when it verifies that $x$ satisfies a term with block index $i$ and that $p$ is the least position containing a one in that block. The function is one if and only if some description is accepted. Its quantum algorithm runs \textsc{FindSource}, finds the least nonzero position in the returned block, and checks the indexed description. This gives
\[
 \operatorname{C}(F'_n) = \Theta(n^2), \qquad
 \operatorname{Q}(F'_n) = O(\sqrt{n}\ln^2 n).
\]
The variant has $\Theta(n^3\ln n)$ input bits and randomized query complexity $O(n\ln n)$; its exact and zero-error quantum query complexities are both $\Theta(n\ln n)$. The construction and bounds are given in \Cref{app:alternative-totalization}.

\section{Definitions and results}\label{sec:results}

\subsection{Notation and complexity measures}

For a positive integer $m$, write $[m] = \{1,\ldots,m\}$, and put $[0] = \emptyset$. The notation $\widetilde O$, $\widetilde\Omega$, and $\widetilde\Theta$ suppresses polylogarithmic factors.

A partial assignment fixes the values of a subset of the input bits; its size is the number of bits it fixes. For a total Boolean function $g : \{0,1\}^m \to \{0,1\}$ and $b \in \{0,1\}$, a $b$-certificate is a partial assignment such that $g(x) = b$ for every completion $x$. For $x \in g^{-1}(b)$, let $\operatorname{C}_b(g,x)$ be the minimum size of a $b$-certificate consistent with $x$. The certificate complexity is $\operatorname{C}(g) = \max_x \operatorname{C}_{g(x)}(g,x)$.

For $b \in \{0,1\}$, let $\operatorname{UC}_b(g)$ be the least $k$ for which there is a family of $b$-certificates of size at most $k$ such that every input in $g^{-1}(b)$ is consistent with exactly one member. Put $\operatorname{UC}(g) = \max_b \operatorname{UC}_b(g)$ and $\operatorname{UC}_{\min}(g) = \min_b \operatorname{UC}_b(g)$.

Let $\operatorname{D}(g)$ denote deterministic query complexity, and let $\operatorname{R}(g)$ and $\operatorname{Q}(g)$ denote bounded-error randomized and quantum query complexity, respectively. The error is at most $1 / 3$ on every input, and query counts are maximized over inputs and executions. Let $\operatorname{R}_1(g)$ denote randomized query complexity with one-sided error: negative inputs are rejected with certainty, and positive inputs are accepted with probability at least $2 / 3$. Let $\operatorname{R}_0(g)$ denote the minimum $T$ for which a randomized algorithm makes at most $T$ queries on every input $x$, outputs $g(x)$ with probability at least $1 / 2$, and otherwise outputs the inconclusive symbol \texttt{?}. For a partial function, correctness is required only on its domain, while the query bound applies on every input.

Let $\operatorname{Q}_{\mathrm E}(g)$ denote the minimum fixed query count of a circuit computing $g$ with certainty. Define $\operatorname{Q}_0(g)$ by replacing randomized algorithms with quantum algorithms in the definition of $\operatorname{R}_0(g)$.

A disjunctive normal form (DNF) is a disjunction of terms, each a conjunction of variables or their negations. It is \emph{unambiguous} if every input satisfies at most one term.

\subsection{Main results}

The relations in \cite[Section~2.3]{AaronsonBenDavidKothari2016} imply the bound $\operatorname{C}(g) = O(\operatorname{Q}(g)^4)$ for every total Boolean function $g$. We show that the exponent four is attained up to logarithmic factors. Independent proofs of the analogous separation between certificate complexity and approximate degree appear in \cite[Theorem~4]{pabbaraju2026} and \cite[Theorem~1.1]{Balodis}.

\begin{theorem}\label{thm:main}
For all sufficiently large $n$, there is a total Boolean function $F_n$ on $\Theta(n^3 \ln^2 n)$ bits such that
\[
 \operatorname{C}(F_n) \geq n \left(\left\lceil \frac{n}{2}\right\rceil + 1\right) = \Omega(n^2),
 \qquad
 \operatorname{Q}(F_n) = O\left(\sqrt{n} \ln^3 n\right).
\]
\end{theorem}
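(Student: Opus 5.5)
We will construct $F_n$ in three layers and then prove the two bounds separately.

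\emph{Construction.} Start from an unambiguous DNF $f$ on $n$ blocks that realizes the quadratic $\operatorname{R}$-versus-$\operatorname{C}$ gap. Each block encodes a label, or a tuple of pointers, of $O(\ln n)$ bits. For any ordered pair of blocks $(i,j)$, a constant number of queries plus $O(\ln n)$ bits decides which of the two ``beats'' the other. This defines a tournament on the blocks, i.e.\ an orientation of a complete graph. A term of $f$ is satisfied exactly when some block $i$ is a source, i.e.\ beats all other blocks. The term then certifies this with $\Theta(n)$ comparisons, each of size $\Theta(n)$ in the worst case, so any $1$-certificate has size about $n(\lceil n/2\rceil+1)$.

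Next form the partial function $g$. On inputs satisfying the term with source $i$, $g$ returns the OR of extra bits at positions named by block $i$. Finally, totalize $g$ with the cheat-sheet construction of \cite{AaronsonBenDavidKothari2016}. We use $c=\Theta(\ln n)$ copies of $g$; their outputs address one of $2^c=\mathrm{poly}(n)$ cells, and each cell holds a description of certificates for the $c$ copies, of size $O(n^2\ln n)$ bits. $F_n=1$ exactly when the addressed cell is valid. The input size is then $\Theta(n^3\ln^2 n)$ bits, which fixes the parameters.

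\emph{Certificate lower bound.} Restrict to inputs where the cheat sheet is irrelevant, for instance inputs with all cells invalid. By the standard cheat-sheet argument, $\operatorname{C}(F_n)\ge \operatorname{C}(g)$, or at least $\operatorname{C}_0$ of the underlying $f$. So it suffices to exhibit one input of $f$ whose certificate must fix $n(\lceil n/2\rceil+1)$ bits. Our choice is a regular tournament in which every block loses to about $n/2$ others. Refuting ``$i$ is a source'' for every $i$ then requires, for each block, evidence of a loss. An adversary or sensitivity counting argument then shows that each of the $n$ blocks forces $\lceil n/2\rceil+1$ fixed bits. This step is combinatorial and we expect it to be routine once $f$ is pinned down.

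\emph{Quantum upper bound.} This is the main step. Run the following for each of the $c$ copies:
\begin{enumerate}
\item Find the source block with a \textsc{FindSource} procedure, adapted from the variable-time search of \cite{AmbainisKokainisVihrovs}. Maintain a candidate set $S_t$ and sample a comparison partner. A candidate that loses is eliminated, and the survivors are filtered by a variable-time Grover search for ``some block beats $i$''.
\item Compute the OR over the indicated positions with Grover search, at cost $O(\sqrt n)$.
\end{enumerate}
With $c$ copies and $O(\ln n)$ error reduction each, this produces the address. We then check the addressed cell's $O(n^2\ln n)$-bit certificate description. A naive check is too expensive, so we use Grover search over its $O(n^2)$ claimed comparisons, each checkable with $O(\ln n)$ queries, for $O(n\ln n)$ total. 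That is still too much, so the cell must instead be checked by a search for a violated claim that runs \textsc{FindSource}-style structure. Concretely, the cell is accepted iff it agrees with what we computed, verified by $O(\sqrt n)$ random-access checks.

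The hardest part is the \textsc{FindSource} analysis. It requires bounding $\sum_t \sqrt{\sum_{i\in S_t} T_i^2}$, where $T_i$ is the cost of the membership test for candidate $i$. We will try to show that $|S_t|$ roughly halves each round, via the $\lceil n/2\rceil$ losses in the tournament, and that $T_i=O(\sqrt{n/|S_t|}\,\ln n)$. The sum then telescopes to $O(\sqrt n\ln n)$. Adding an $O(\ln n)$ factor for error reduction across $c=O(\ln n)$ copies gives $O(\sqrt n\ln^3 n)$.
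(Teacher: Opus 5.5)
\textbf{There are genuine gaps, and the most serious is in the construction.} You never specify $f$, and the property you do attribute to it is the opposite of what is needed. You make each $1$-certificate (term) have size about $n(\lceil n/2\rceil+1)=\Theta(n^2)$. Your sketch is also self-contradictory here: if comparisons are decided by $O(\ln n)$ bits, the $\Theta(n)$ comparisons in a term cannot each cost $\Theta(n)$ bits.

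The construction must go the other way: every term fixes only $O(n)$ bits, while refuting the \emph{all-zero} input needs $\Theta(n^2)$ bits. The paper achieves this with terms $T_{i,S}$ where $|S|=\lfloor n/2\rfloor$, zero-sets $A_{ij}(S)$ that are prefixes of random position lists, and the bound $\sum_j m_j(S)\le 8n$ for all $|S|\ge n/3$ (\Cref{lem:first-hit-sum}). Descriptions then have $d=\Theta(n\ln n)$ bits and $O(n)$ checks of $O(\ln n)$ queries each (\Cref{lem:description-verification}). So Grover search over the addressed cell costs $O(\sqrt{nr}\ln n)$.

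With your $\Theta(n^2)$-size certificates, a $0$-input of $F_n$ can address a cell that violates just one of $\Theta(n^2)$ checks. Searching for that violation costs $\Omega(n)$ queries, which you yourself flag as too expensive. Your fallback of accepting after ``$O(\sqrt n)$ random-access checks'' is unsound, because it accepts such a cell with high probability. Your bit count also fails. Cells of $\Theta(n^2\ln n)$ bits leave room for only $\widetilde O(n)$ cells within $\Theta(n^3\ln^2 n)$ bits. The untouched-cell argument, however, needs more than $\tau_n=\Theta(n^2)$ cells.

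\textbf{The certificate lower bound is not established.} Reducing to $\operatorname{C}(g)$ or $\operatorname{C}_0(f)$ is not enough. A $0$-certificate for $F_n$ can instead fix $y$-bits so that one value of $H$ is impossible in every copy, and then only a single cell needs to be invalidated. What you need at the all-zero input is two things. First, every partial assignment of fewer than $\tau_n$ zeros must leave both values of $H$ possible in every copy (\Cref{lem:undefined-input}). Second, some cell must remain untouched. For the paper's $f$ this is a pigeonhole count: some block has $k$ positions where neither the $x$-bit nor the $y$-bit is fixed, and fixed zeros never contradict the zero literals of $T_{i,S}$. Your regular-tournament argument does not give $\lceil n/2\rceil+1$ bits per block, since evidence of one loss is a single comparison.

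\textbf{The \textsc{FindSource} analysis lacks its key ingredients.} The bound $T_i=O(\sqrt{n/|S_t|}\ln n)$ is unjustified. Even granting it, $\sum_t\sqrt{|S_t|\cdot(n/|S_t|)\ln^2 n}$ over $\Theta(\ln n)$ rounds is $\Theta(\sqrt n\ln^2 n)$, not $O(\sqrt n\ln n)$. Nothing halves the candidate set ``via $\lceil n/2\rceil$ losses'' either: on a positive input the tournament merely has a source.

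The paper instead uses four ingredients:
\begin{itemize}
\item the dense/sparse classification, so that \Cref{lem:first-hit-sum} covers every costly comparison;
\item the resulting bound $\sum_v\theta_{uv}^2=O(n\ln n)$;
\item the per-vertex selection bound $C/m$ of variable-time search, which together with an outdegree count gives expected shrinkage $\Omega(m)$;
\item the estimate $\mathbb E\sum_v t_v(W_{\mathrm{end}})^2=O(n\ln^2 n)$ (\Cref{lem:candidate-estimates}).
\end{itemize}
With \textsc{FindSource} at $O(\sqrt n\ln^2 n)$, your extra $\ln n$ factor for per-copy error reduction would overshoot the target bound. The paper removes that factor with the Buhrman--Newman--R\"ohrig--de Wolf recovery (\Cref{lem:query-transfer}).
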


The two bounds in \Cref{thm:main} follow from \Cref{lem:certificate-lower,lem:query-transfer,lem:query-upper}.

The same function also satisfies $\operatorname{R}_0(F_n) = \widetilde\Theta(\operatorname{Q}(F_n)^4)$. This follows from \Cref{thm:main}, $\operatorname{C}(F_n) \leq \operatorname{R}_0(F_n) \leq \operatorname{D}(F_n)$, and the general bound $\operatorname{D}(g) = O(\operatorname{Q}(g)^4)$ for total functions in \cite{Aaronson2021}.

The underlying DNF gives a quadratic separation between randomized query complexity with one-sided error and certificate complexity.

\begin{theorem}\label{thm:randomized-upper}
For all sufficiently large even $n$, the DNF $f$ constructed in \Cref{sec:construction} satisfies
\[
 \operatorname{C}(f) = \Theta(n^2),
 \qquad
 \operatorname{R}_1(f) = O(n).
\]
\end{theorem}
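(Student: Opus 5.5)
The plan is to prove the two bounds separately, using the structure of the DNF $f$ from \Cref{sec:construction}. There are $n$ blocks, and each term of $f$ asserts that a particular block is a ``source'': it wins every pairwise comparison against the other blocks. The terms are mutually incompatible, so $f$ is unambiguous. I expect the asymmetry to be that $1$-certificates are short, of width $O(n)$, while $0$-inputs may need $\Theta(n^2)$ bits to certify. This asymmetry is what makes a one-sided-error algorithm possible at all. A one-sided algorithm may accept only after it has actually seen a satisfied term, so it can afford this only if the terms have width $O(n)$.

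\emph{Certificate complexity.} For the upper bound, $\operatorname{C}_1(f)$ is at most the width of a term, which is $O(n)$. For $\operatorname{C}_0(f)=O(n^2)$, note that a $0$-certificate only has to falsify each of the $n$ terms. Falsifying the term of block $i$ amounts to exhibiting one lost comparison of block $i$, and I expect a lost comparison to be witnessed by $O(n)$ bits. Summing over the $n$ blocks gives $O(n^2)$.

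For the lower bound $n(\lceil n/2\rceil+1)$, I would build a hard $0$-input on which the comparisons are arranged redundantly. The aim is that falsifying the term of any single block $i$ forces the certificate to fix at least $\lceil n/2\rceil+1$ bits that belong to block $i$ alone. An adversary argument should show this. Suppose a partial assignment fixes fewer than $\lceil n/2\rceil+1$ bits inside block $i$. Then the unfixed bits can be completed so that block $i$ wins all its comparisons, its term becomes satisfied, and the assignment is not a $0$-certificate. Because these bit sets are disjoint across the $n$ blocks, the counts add up to the stated bound.

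\emph{Randomized upper bound.} The algorithm runs in two phases.
\begin{enumerate}
\item Find the unique candidate source using a tournament source-finding procedure. It keeps a current champion, and each challenger is compared against it. This uses $n-1$ comparisons, or $O(n)$ comparisons in any linear-time variant.
\item Query every bit of the candidate's term and accept only if the term is satisfied.
\end{enumerate}
Phase 2 costs $O(n)$ queries and never accepts a $0$-input, so the error is one-sided. The cost of Phase 1 is the crux. A single comparison may need up to $\Theta(n)$ queries to decide deterministically, yet we need the total over all comparisons to be $O(n)$. I would try to make each comparison cost $O(1)$ queries in expectation by sampling random positions, relying on the construction to make a mismatch, when present, appear in a constant fraction of positions. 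One-sided correctness in Phase 1 is not required, since Phase 2 rechecks everything.

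Summing gives $O(n)$ expected queries on positive inputs. Truncating the run at a constant multiple of this expectation and applying Markov's inequality yields a worst-case bound of $O(n)$ with success probability at least $2/3$. The main obstacle is the constant-expected-cost comparison. If it does not hold directly, the fallback is to charge the comparisons along the champion chain carefully, so that deterministic checks are spent only on $O(1)$ comparisons per champion change.
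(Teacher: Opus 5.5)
The gap is in Phase~1, which is where all the difficulty lies. The direction between blocks $i$ and $j$ is decided by the first-hit indices $m_{ij}=m_j(S_i)$ and $m_{ji}=m_i(S_j)$, that is, by where the first one appears when block $i$ is scanned along $P_j$ and block $j$ along $P_i$. Nothing in the construction makes a ``mismatch'' occupy a constant fraction of positions, so sampling random positions cannot decide a comparison.

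Worse, the costly comparisons are those between sparse blocks, which the construction does not control: \Cref{lem:first-hit-sum} bounds $\sum_j m_j(S)$ only for $|S|\ge n/3$. Take a positive input in which every block other than $i^\star$ is identically zero. Two such blocks have first-hit index $M+1$ in both directions, so each comparison between them scans all $M=8n$ prefix positions. A champion tournament in random order makes $\Theta(n)$ such comparisons before meeting $i^\star$, for $\Theta(n^2)$ queries.

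Cutting comparisons off at constant length is not safe either. $i^\star$ dethrones a champion $w$ only at step $m_w(S^\star)$, these values are bounded only on average over $w$, and a dense block with large $m_w(S^\star)$ can hold the championship. Your fallback of charging along the champion chain does not say how to avoid either problem.

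The paper supplies two ingredients you are missing.

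First, randomness is used to discard sparse candidates, not to decide comparisons. The current candidate $w$ carries a counter, started at a constant $h$, that gains $3$ for each sampled one of $x_w$ and loses $2$ for each sampled zero, with one sample per comparison step. The candidate $w$ is discarded once the counter is nonpositive. A block with fewer than $n/3$ ones has negative drift and is discarded after $O(h)$ expected samples (\Cref{lem:expected-stopping-time}). The satisfied block has density exactly $1/2$ (this is where evenness of $n$ is used) and is discarded with probability at most $(2/3)^{h/3}$ (\Cref{lem:density-stopping-probability}).

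Second, for a candidate $w$ with at least $n/3$ ones, comparing with $u$ costs at most $m_{wu}$, and \Cref{lem:first-hit-sum} gives $\sum_u m_{wu}\le 8n$. So a challenger drawn uniformly from $m$ remaining candidates costs $O(n/m)$ in expectation. The probability that $i^\star$ is still among those $m$ untouched candidates is at most $m/n$, so each stage contributes $O(1)$. Comparisons involving $i^\star$ cost at most $\sum_u m_{i^\star u}\le 8n$ in total. This yields $\mathbb E[T]=O(n)$ on positive inputs, after which your Phase~2 check and Markov truncation work as you describe.

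Your certificate bounds are correct, but the explanation is off. The upper bound is just the input length $n^2$; note there are $n\binom{n}{k}$ terms, not $n$. The hard input is simply $0^{n^2}$, which is negative because no block has $k$ ones, not because of lost comparisons. The adversary works there because every literal of $T_{i,S}$ outside block $i$ is negative. So if at most $n-k=\lceil n/2\rceil$ bits of block $i$ are fixed, you can put $k$ ones on free positions of block $i$ and zeros everywhere else.
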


The algorithm is given in \Cref{sec:randomized-algorithm}; the certificate bound follows from \Cref{lem:undefined-input} and the input length. Since $\operatorname{R}(f) \leq \operatorname{R}_1(f)$, the theorem also gives $\operatorname{R}(f) = O(\sqrt{\operatorname{C}(f)})$. The general bound $\operatorname{C}(g) = O(\operatorname{R}(g)^2)$ in \cite[Section~2.3]{AaronsonBenDavidKothari2016} then gives $\operatorname{R}_1(f) = \Theta(\sqrt{\operatorname{C}(f)})$.

\cite{pabbaraju2026} constructs a DNF exhibiting a quadratic separation of both $\operatorname{C}$ and $\operatorname{UC}$ from $\operatorname{UC}_{\min}$; our DNF retains this separation.

For the same total function $F_n$, we also determine the exact and zero-error quantum query complexities.

\begin{theorem}\label{thm:exact-zero-error}
For all sufficiently large $n$, the functions $F_n$ in \Cref{thm:main} satisfy
\[
 \operatorname{Q}_{\mathrm E}(F_n) = \Theta(n \ln^2 n),
 \qquad
 \operatorname{Q}_0(F_n) = \Theta(n \ln^2 n).
\]
\end{theorem}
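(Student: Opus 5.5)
The plan is to prove matching upper and lower bounds, using $\operatorname{Q}_0(F_n)\le \operatorname{Q}_{\mathrm E}(F_n)$ to reduce the four claims to two: an exact upper bound $\operatorname{Q}_{\mathrm E}(F_n)=O(n\ln^2 n)$ and a zero-error lower bound $\operatorname{Q}_0(F_n)=\Omega(n\ln^2 n)$.

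\textbf{Upper bound.} The idea is to turn a randomized one-sided-error procedure into an exact quantum algorithm by amplitude amplification with \emph{known} success probability.
\begin{enumerate}
\item Take the $O(n)$-query procedure behind \Cref{thm:randomized-upper}, which locates the satisfied term of the DNF $f$. Combine it with reading the cell address and verifying the addressed cheat-sheet cell. A cell description has $O(n\ln^2 n)$ bits, matching the input length $\Theta(n^3\ln^2 n)$ over $\Theta(n^2)$ cells, so deterministic verification costs $O(n\ln^2 n)$ queries.
\item Arrange the sampling so that on positive inputs the success probability $p\ge 1/2$ can be computed exactly from an accepted description. This is the ``weighting argument'' of \Cref{sec:exact-query}: the algorithm's randomness is reweighted so that acceptance probability is a known function of the certificate it finds.
\item Apply exact amplitude amplification, in which one oracle-free rotation tunes the amplitude to reach acceptance probability $1$. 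This multiplies the cost by only $O(1)$.
\end{enumerate}
There is a subtlety: $p$ must be known before amplifying, yet it is readable only after an accepted description is found. I would handle this by reading the description, which determines $p$, as part of the coherent computation, and by making every branch's final check deterministic so that negative inputs are never accepted. This gives $O(n\ln^2 n)$ queries with certainty.

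\textbf{Lower bound.} I would use a restriction argument, as sketched for \Cref{lem:zero-error-lower}.
\begin{enumerate}
\item Fix $x$ so that a unique term of $f$ is satisfied. Fix all cells except the addressed one to invalid contents. In the addressed cell, fix everything except a block of $m=\Theta(n\ln^2 n)$ bits of the description, chosen so that the restricted function equals $1$ exactly when all $m$ free bits take prescribed values.
\item The restricted function is then an $\mathrm{AND}_m$ (or $\mathrm{NOR}_m$) up to negations.
\item Since $\operatorname{Q}_0(\mathrm{AND}_m)=\Theta(m)$ (zero-error quantum complexity is at least certificate complexity on the $0$-side in the relevant sense, or one can use the known linear bound for OR under zero error via polynomial degree), we get $\operatorname{Q}_0(F_n)=\Omega(n\ln^2 n)$.
\end{enumerate}

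\textbf{Main obstacle.} I expect the hardest step to be in the lower bound: ensuring that the cell description genuinely contains $\Theta(n\ln^2 n)$ bits that all must be checked. That is, the encoding (pointers with $\ln n$-bit addresses, each repeated or with $\ln n$-size sub-blocks) must admit no redundancy that lets a verifier skip bits. I would first examine the description format and choose the free bits to be the encoded certificate values themselves, each of which is independently checked.
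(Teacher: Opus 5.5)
\textbf{Lower bound.} This half matches the paper's argument in \Cref{lem:zero-error-lower}. Put each of the $r$ inputs to $H$ in the domain with $x_1$ supported on $[k]$ and $y=0$, so that only cell $0^r$ can accept. There, leave free exactly the count fields $w_p$ of the $r$ descriptions; they are forced to encode $\min\{p,k\}$. This gives $\operatorname{AND}_m$ with $m=rn\lceil\log_2(n+1)\rceil$. The count fields are the right choice: other fields, such as $\sigma_j$ or $p^\star$ when $\beta=1$, are not uniquely forced. Together with $\operatorname{Q}_0\le\operatorname{Q}_{\mathrm E}$, this half is fine.

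\textbf{Upper bound: missing ingredient.} Your upper bound has a genuine gap at its central step. One round of exact amplification needs a circuit whose acceptance probability is known in advance; the paper makes it exactly $f(x)/4$. For that you need a term-finding procedure in which each branch's survival probability can be computed from that branch's own record (candidate sets and the found term), not from a cheat-sheet description. The algorithm of \Cref{sec:randomized-algorithm} does not have this property, and ``arrange the sampling'' does not say how to obtain it. There, $i^\star$ is lost only through its density walk, but the number of density samples it receives depends on when it becomes the current candidate and which blocks remain in $C$, hence on all other blocks. Moreover, because of the cutoff $t\le cn$, survival of $i^\star$ does not imply success: the loop can stop with $|C|\ge 2$ and $w\ne i^\star$. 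Dropping the cutoff leaves only the worst-case bound of $n-1$ comparisons of up to $M$ steps each, i.e.\ $O(n^2)$ queries, and an exact algorithm is charged its worst case. So neither a term-dependent success probability nor a valid pathwise reweighting is available for this procedure.

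\textbf{How the paper fills it.} The paper uses \textsc{Eliminate}, a knockout tournament on $N=2^{\lceil\log_2 n\rceil}$ labels. Round $a$ compares random ordered pairs at the fixed prefix length $32\cdot 2^a$, and ties go to the first label of the pair. On a positive input, $i^\star$ is lost in round $a$ only if its partner $j$ has $m_j(S)>32\cdot 2^a$ and $i^\star$ is second in the pair. Thus the retention probability $p_a(\omega)$ is determined by the recorded $U_a$, the term, and the fixed lists, and survival through every round implies acceptance. The weight $\mu(\omega)=\prod_a p_a(\omega)^{-1}$ then satisfies $\mathbb E\mu=f(x)$ by backward induction, and $\mu\le 2$ pathwise by \Cref{lem:first-hit-sum}. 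A flag rotation by $\sqrt{\mu/4}$ therefore gives acceptance exactly $f(x)/4$, and one amplification round finds the term with certainty in $3T_n=O(n\ln n)$ queries.

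\textbf{Per-copy amplification.} This must be done separately for each of the $r$ copies of $H$: find the term exactly, read $y^{(\ell)}_i$ on $S$, then check the addressed cell deterministically. If instead you amplify the combined event over all copies, the base probability can be as small as $2^{-r}=\Theta(n^{-2})$. Exact amplification then costs a factor $\Theta(n)$, not $O(1)$.
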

The bounds follow from the exact algorithm in \Cref{lem:exact-total-upper}, the restriction of the count fields in \Cref{lem:zero-error-lower}, and $\operatorname{Q}_0(F_n) \leq \operatorname{Q}_{\mathrm E}(F_n)$. Together with \Cref{thm:main}, they give both complexities as $\widetilde O(\sqrt{\operatorname{C}(F_n)})$.

\section{Construction and basic bounds}\label{sec:construction}
We define the function $F_n$ that gives the separation in three steps:
\begin{enumerate}
\item
we first define an unambiguous DNF $f(x)$, $ x\in\{0, 1\}^{n^2}$;
\item
we use $f(x)$ to define a partial Boolean function $H(x, y)$, $x, y \in\{0, 1\}^{n^2}$;
\item 
we apply the cheat-sheet construction to $H(x, y)$ to obtain $F_n$.
\end{enumerate}

For better flow of the text,  the proofs of most lemmas are postponed to \Cref{app:construction}.

\subsection{Unambiguous DNF and partial function \texorpdfstring{$H$}{H}}

Write each $x, y \in \{0,1\}^{n^2}$ as $n$ blocks of length $n$, with $x_i(p)$ and $y_i(p)$ denoting the bits at position $p$ in block $i$.
Let $k = \lfloor n / 2 \rfloor$.
The unambiguous DNF $f(x)$ is defined as
\[
 f(x) = \bigvee_{i \in [n]}\ \bigvee_{\substack{S \subseteq [n]\\ |S| = k}} T_{i, S}, \]
\[ T_{i, S} =  \left(\bigwedge_{p \in S} x_i(p)\right)
 \left(\bigwedge_{p \in [n] \setminus S} \neg x_i(p)\right)
 \left(\bigwedge_{j \neq i}\ \bigwedge_{p \in A_{ij}(S)} \neg x_j(p)\right).
\]
for appropriately chosen sets $A_{ij}(S)$: $i, j\in [n], S\subseteq [n], |S|=k$. 
Each term $T_{i,S}$ specifies that block $i$ has ones exactly at the positions in the set $S$, and requires selected positions, described by sets $A_{ij}(S)$, in every other block to be zero. We choose these sets so that different terms are incompatible, while each term fixes only $O(n)$ bits.

The construction of sets $A_{ij}(S)$ is probabilistic. 
Let $M=8n$.
For each $j \in [n]$, we choose $P_j \in [n]^M$ uniformly at random. Let $P_j(a)$ denote the $a^{\rm th}$ entry.  
For $S \subseteq [n]$, we define $m_j(S) = \min\{a \in [M] : P_j(a) \in S\}$ when this set is nonempty, and define $m_j(S) = M + 1$ otherwise.

\begin{lemma}\label{lem:first-hit-sum}
Let $P_1, \ldots, P_n$ be independent and uniformly random. With probability $1-o(1)$, simultaneously for all
$S\subseteq[n]$ with $|S|\ge n/3$, we have 
\[
 \sum_{j = 1}^n m_j(S) \leq 8 n .
\]
\end{lemma}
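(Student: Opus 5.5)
The plan is a Chernoff bound for a single fixed set $S$, followed by a union bound over sets. Two reductions make this work.

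\textbf{Monotonicity.} If $S \subseteq S'$, then $m_j(S') \le m_j(S)$ for every $j$, since a first hit in $S$ is also a hit in $S'$. Hence it suffices to prove the bound for all $S$ with $|S| = s := \lceil n/3 \rceil$. Every larger set contains such a set, and its sum can only be smaller. This leaves at most $\binom{n}{s} \le \exp(n H(1/3) + o(n))$ sets to control, where $H(1/3) \approx 0.6365$ is the binary entropy in nats.

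\textbf{Fix $S$ with $|S| = s$.} The variables $m_1(S), \dots, m_n(S)$ are independent because the $P_j$ are. Each $m_j(S)$ is the index of the first success in i.i.d.\ trials $\mathbf{1}[P_j(a) \in S]$, each with success probability $q = s/n \ge 1/3$, truncated at $M+1$. The truncation only lowers the value relative to an untruncated geometric variable. So we couple all of them to a single infinite sequence of i.i.d.\ Bernoulli($1/3$) trials, broken into consecutive geometric waiting times. Under this coupling, $\sum_j m_j(S)$ is dominated by the position of the $n$-th success.

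The event that this position exceeds $8n$ is exactly the event that the first $8n$ trials contain fewer than $n$ successes. Therefore
\[
 \Pr\Bigl[\sum_{j=1}^n m_j(S) > 8n\Bigr] \le \Pr\bigl[\mathrm{Bin}(8n, 1/3) < n\bigr] \le \exp\bigl(-8n\, D(1/8 \,\|\, 1/3)\bigr).
\]
Here $D$ is the Kullback--Leibler divergence in nats, and the last step is the Chernoff--Hoeffding bound, valid since $1/8 < 1/3$. (Using success probability $1/3$ instead of $q \ge 1/3$ is also justified by stochastic domination.)

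\textbf{The numeric comparison.} This is the main obstacle: the constants must leave room for the union bound, and a crude multiplicative Chernoff bound may not suffice. The plan is to compute the divergence directly:
\[
 D(1/8 \,\|\, 1/3) = \tfrac18 \ln\tfrac38 + \tfrac78 \ln\tfrac{21}{16} \approx -0.1226 + 0.2379 \approx 0.115.
\]
This gives a per-set failure probability of at most $e^{-0.92n}$. The union bound over at most $e^{0.637n + o(n)}$ sets then gives total failure probability at most $e^{-0.28n + o(n)} = o(1)$, which proves the lemma. If the rounding $s = \lceil n/3 \rceil$ needs care, $q \ge 1/3$ still holds, and $\binom{n}{s}$ is bounded by Stirling's formula with only $o(n)$ loss in the exponent.
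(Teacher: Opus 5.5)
Your proof is correct and is essentially the paper's argument: bound each truncated $m_j(S)$ above by an independent geometric variable, derive an exponential tail bound for the sum at a fixed $S$, and take a union bound over sets. The paper's tail bound is more direct: it applies Markov's inequality to $(5/4)^{\sum_j m_j(S)}$, where each factor has expectation at most $5p/(5p-1)\le 5/2$, and unions over all $2^n$ sets to get $[5(4/5)^8]^n$; your sharper exponent $8D(1/8\,\|\,1/3)\approx 0.92>\ln 2$ would also let you skip the monotonicity reduction.
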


We now fix a choice of $P_1, \ldots, P_n$ that satisfies the property of \Cref{lem:first-hit-sum}.
For each pair of distinct $i,j$ and $|S| \geq n / 3$, define the prefix set
\[
 A_{ij}(S) = \{P_i(a) : 1 \leq a \leq m_j(S)\}.
\]
The list $P_j$ determines the prefix length $m_j(S)$, and the first $m_j(S)$ entries of $P_i$ specify the positions required to be zero in block $j$. Since the set $A_{ij}(S)$ consists of the distinct entries among $P_i(1),\ldots,P_i(m_j(S))$, we have $|A_{ij}(S)| \leq m_j(S)$.
Also,
\[
\lvert T_{i,S}\rvert
= |S|+\bigl(n-|S|\bigr)
  +\sum_{j\in[n]\setminus\{i\}}\lvert A_{ij}(S)\rvert 
  \le n+\sum_{j\in[n]\setminus\{i\}}m_j(S) \le n+\sum_{j=1}^{n}m_j(S)
\le 9n,
\]
where the last inequality follows from \Cref{lem:first-hit-sum}, since
$|S|=k=\lfloor n/2\rfloor\ge n/3$.

We show that the DNF $f$ is unambiguous.

\begin{lemma}\label{lem:dnf-unambiguous}
For each $x\in\{0, 1\}^{n^2}$, at most one term in $f(x)$ is true.
\end{lemma}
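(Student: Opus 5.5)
The plan is to take two distinct terms $T_{i,S}$ and $T_{i',S'}$, both with $|S|=|S'|=k$, assume some $x$ satisfies both, and derive a contradiction. I split into two cases according to whether the block indices coincide.

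\emph{Case $i=i'$.} Then $S\neq S'$. But $T_{i,S}$ forces block $i$ to have ones exactly on $S$, and $T_{i,S'}$ forces them to be exactly on $S'$. Hence $S=S'$, a contradiction.

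\emph{Case $i\neq i'$.} Write $j=i'$, $a=m_j(S)$ and $b=m_i(S')$. First I check that these prefix lengths are finite, i.e.\ $a,b\le M$. Since $k\ge n/3$, \Cref{lem:first-hit-sum} gives $\sum_{\ell} m_\ell(S)\le 8n$. Each summand is at least $1$, so $m_\ell(S)\le 8n-(n-1)<M+1$ for every $\ell$; the same holds for $S'$. Consequently $P_j(a)\in S$ and $P_i(b)\in S'$, and the sets $A_{ij}(S)$ and $A_{ji}(S')$ are well-defined prefixes of length $a$ and $b$ respectively.

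Now I use the two constraints. The term $T_{i,S}$ requires block $j$ to be zero on $A_{ij}(S)=\{P_i(1),\dots,P_i(a)\}$. Meanwhile $T_{j,S'}$ says block $j$ has ones exactly on $S'$. So no $P_i(c)$ with $c\le a$ lies in $S'$, which means $b=m_i(S')>a$. Symmetrically, $T_{j,S'}$ requires block $i$ to be zero on $\{P_j(1),\dots,P_j(b)\}$, while block $i$ has ones exactly on $S$. So no $P_j(c)$ with $c\le b$ lies in $S$, giving $a=m_j(S)>b$. The inequalities $b>a$ and $a>b$ contradict each other, which proves the lemma.

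The only delicate point is the finiteness of $m_j(S)$. If $m_j(S)$ could equal $M+1$, the prefix set would reference an undefined entry, and the comparison argument would not apply. I handle this with the consequence of \Cref{lem:first-hit-sum} derived above.
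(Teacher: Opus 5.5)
Your proof is correct and is essentially the paper's argument. The paper assumes without loss of generality that $m_j(S)\le m_i(S')$ and exhibits the conflicting literal at $p=P_j(m_j(S))$, which is the contrapositive of your derivation of $a>b$ (and symmetrically $b>a$); your explicit check via \Cref{lem:first-hit-sum} that $m_j(S)\le M$, so the prefix sets are well defined, makes precise a point the paper leaves implicit.
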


\begin{proof}
Two terms with the same block index $i$ but different sets $S,S'$ require different assignments in that block.  Consider $T_{i,S}$ and $T_{j,S'}$ with $i \neq j$ and suppose $m_j(S) \leq m_i(S')$.  The position $p = P_j(m_j(S))$ belongs to $S$ and occurs among the first $m_i(S')$ entries of $P_j$, so $p \in A_{ji}(S')$.  Thus $T_{i,S}$ requires $x_i(p) = 1$, while $T_{j,S'}$ requires $x_i(p) = 0$.  Interchanging the two terms handles the opposite inequality.
\end{proof}

We now define the partial function $H(x, y)$.
If $x$ satisfies the term corresponding to $(i, S)$, we define
\begin{equation}\label{eq:partial-function}
 H(x,y) = \bigvee_{p \in S} y_i(p).
\end{equation}
We set $H(x,y) = *$ (undefined) when $f(x) = 0$, so the domain of $H$ consists of the pairs $(x,y)$ with $f(x) = 1$.

We claim that the function has two properties. First, if $H(x, y)=\beta, \beta\in\{0, 1\}$, then this can be certified via a certificate description $z$ consisting of $O(n \ln n)$ bits. Moreover, a description $z$ can be checked by verifying a number of conditions each of which requires querying only $O(\ln n)$ variables.

\begin{lemma}\label{lem:description-verification}
There is a verification procedure $A(x, y, \beta, z)$, $z \in \{0,1\}^d$, $d = \Theta(n \ln n)$ such that
\begin{enumerate}
\item
$H(x, y)=\beta, \beta\in\{0, 1\}$ if and only if there exists $z$ such that $A(x, y, \beta, z)=1$;
\item
$A(x, y, \beta, z) = \bigwedge_{i=1}^m A_i(x, y, \beta, z)$ where $m=O(n)$ and each $A_i$ can be evaluated deterministically with $O(\ln n)$ queries to $(x,y,\beta,z)$.
\end{enumerate}
\end{lemma}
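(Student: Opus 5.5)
The description $z$ will record the block index $i$, the set $S$, and, in the case $\beta=1$, a witness position. The difficulty is that $S$ has size $k=\lfloor n/2\rfloor$, so listing it explicitly costs $\Theta(n\ln n)$ bits. That budget is acceptable, but a test whose outcome depends on all of $S$ at once is not. We therefore split every check into $O(n)$ local conditions, each reading $O(\ln n)$ bits.

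\textbf{Layout of $z$.} The description consists of the following fields:
\begin{itemize}
\item a block index $i$ ($\lceil\log n\rceil$ bits);
\item for each position $p\in[n]$, a copy of the index $i$ and one bit $s_p$ claiming whether $p\in S$;
\item for each $j\ne i$, the claimed value $m_j=m_j(S)\in[M+1]$;
\item for each $\beta=1$ description, a witness position $q$;
\item for each $j\neq i$, the first-hit indices for $P_j$ written in a form that can be checked locally.
\end{itemize}
Only $O(\ln n)$ bits are stored per item and there are $O(n)$ items, so $d=\Theta(n\ln n)$; we pad with zeros to reach exactly $d$. The copies of $i$ let each local condition read $i$ directly instead of reconstructing it.

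\textbf{Local conditions.} The conjunction $A$ consists of the following checks.
\begin{itemize}
\item[(a)] For each $p$: the copy of $i$ stored at $p$ equals the global $i$, and $x_i(p)=s_p$. This reads $O(\ln n)$ bits. Together these conditions certify that block $i$ has ones exactly on $S=\{p:s_p=1\}$.
\item[(b)] $|S|=k$. A single condition cannot read all $n$ bits of $s$, so we add a running-count field $c_p=\sum_{p'\le p}s_{p'}$ with $O(\ln n)$ bits per $p$. We check $c_p=c_{p-1}+s_p$ for each $p$, together with $c_n=k$.
\item[(c)] For each $j\ne i$, verify $m_j=m_j(S)$. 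This requires $P_j(m_j)\in S$ and $P_j(a)\notin S$ for all $a<m_j$. The lists $P_j$ are fixed, so the first requirement is one lookup of $s_{P_j(m_j)}$. The second is the obstacle: a single condition would have to read up to $m_j$ bits of $s$. To avoid this we index conditions by pairs $(j,a)$ with $a\le M$. Each such condition reads $m_j$, and if $a<m_j$ it checks $s_{P_j(a)}=0$. Since $M=8n$, there are $O(n^2)$ such pairs, not $O(n)$.

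To recover $m=O(n)$, we would use \Cref{lem:first-hit-sum}: the true values satisfy $\sum_j m_j\le 8n$. We add prefix sums $\sigma_j=\sum_{j'\le j}m_{j'}$, each checkable from $O(1)$ fields, with a final check $\sigma_n\le 8n$. We then index the checks by a global counter $t\in[8n]$. A field $\tau_t=(j,a)$ records which pair the $t$-th check handles, and consistency is enforced by requiring $\tau_t$ to lie in the range determined by $\sigma$. Concretely, $\tau_t=(j,a)$ is accepted only if $\sigma_{j-1}<t\le\sigma_j$ and $a=t-\sigma_{j-1}$. The same indexing serves the zero conditions of the term: for $a\le m_j$, the condition for $t$ checks $x_j(P_i(a))=0$. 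This covers exactly $A_{ij}(S)$.
\item[(d)] If $\beta=1$: check $s_q=1$ and $y_i(q)=1$. If $\beta=0$: for each $p$, check $s_p=1\Rightarrow y_i(p)=0$.
\end{itemize}

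\textbf{Correctness.} If $H(x,y)=\beta$, the honest $z$ built from the unique satisfied term (\Cref{lem:dnf-unambiguous}) passes every condition. The bound $\sigma_n\le 8n$ holds by \Cref{lem:first-hit-sum}, since $|S|=k\ge n/3$. Conversely, if all conditions pass, then (a)–(c) show that $x$ satisfies $T_{i,S}$, so $f(x)=1$. Condition (d) then forces $\bigvee_{p\in S}y_i(p)=\beta$, hence $H(x,y)=\beta$. The main obstacle is making the first-hit verification in (c) use only $O(n)$ local conditions. The prefix-sum and counter indexing above is what I expect to resolve it.
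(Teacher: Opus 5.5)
Your construction is correct and is essentially the paper's. It uses running counts to certify $|S|=k$, cumulative prefix lengths $\sigma_j$ whose total is bounded by \Cref{lem:first-hit-sum}, and one field per $t\in[M]$ naming the block $j$ with $\sigma_{j-1}<t\le\sigma_j$, so that the check at $t$ tests $x_j(P_i(t-\sigma_{j-1}))=0$. The paper is slightly leaner: it only checks $x_i(P_j(\ell_j))=1$ with $\ell_j=\sigma_j-\sigma_{j-1}$, which gives $\ell_j\ge m_j(S)$ and already suffices for soundness, so your exact first-hit checks, the bits $s_p$, and the copies of $i$ are unnecessary. It also fixes $\ell_i=0$ and allows the null value $j_t=0$ for $t>\sigma_n$; you need to add that null case, since otherwise honest descriptions with $\sigma_n<M$ would be rejected.
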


The description and checks defining $A$ are given in \Cref{app:description-verification}.

Second, if $(x, y)$ is the all-zero input, both $H(x, y)=0$ and $H(x, y)=1$ remain possible after seeing less than a constant fraction of all variables.

\begin{lemma}\label{lem:undefined-input}
Let $\tau_n = n (\lceil n / 2 \rceil + 1)$.
Every partial assignment consistent with $(x, y) = 0^{2 n^2}$ that fixes fewer than $\tau_n$ variables has completions $(x^{(0)}, y^{(0)})$ 
and $(x^{(1)}, y^{(1)})$ with 
$H(x^{(0)}, y^{(0)})=0$ and $H(x^{(1)}, y^{(1)})=1$.
\end{lemma}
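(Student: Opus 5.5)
The plan is to use averaging to find one block $i$ that the partial assignment barely touches. Inside that block we plant a satisfied term $T_{i,S}$, and we set the $y$-bits so that $H$ takes either value.

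First fix notation. Let $\rho$ be a partial assignment consistent with $0^{2n^2}$ that fixes fewer than $\tau_n$ variables; every variable it fixes is set to $0$. For each block $i\in[n]$, let $a_i$ be the number of fixed variables among $x_i(1),\dots,x_i(n)$ and $b_i$ the number among $y_i(1),\dots,y_i(n)$. Since $\sum_i (a_i+b_i) < n(\lceil n/2\rceil+1)$, some block $i$ has $a_i+b_i \le \lceil n/2\rceil = n-k$. Fix this $i$ and write $a=a_i$, $b=b_i$.

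Next choose $S$. Call a position $p$ \emph{x-free} if $x_i(p)$ is not fixed by $\rho$; there are $n-a \ge k+b$ of them. At most $b$ of these have $y_i(p)$ fixed. Since $n-a \ge k+b \ge b+1$ (using $k\ge 1$ for large $n$), some x-free position $p^*$ has $y_i(p^*)$ unfixed. Let $S$ be any $k$-subset of the x-free positions that contains $p^*$; it exists because there are at least $k$ x-free positions.

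Now define $x$. Set $x_i(p)=1$ for $p\in S$ and set every other $x$-bit in every block to $0$. This agrees with $\rho$, because all ones lie at x-free positions. Block $i$ then has ones exactly on $S$, and every block $j\ne i$ is all zero, so all constraints $\neg x_j(p)$ for $p\in A_{ij}(S)$ hold. Hence $T_{i,S}$ is satisfied, and $f(x)=1$ (this is the unique satisfied term by \Cref{lem:dnf-unambiguous}). So $(x,y)$ lies in the domain of $H$ for every $y$, and $H(x,y)=\bigvee_{p\in S}y_i(p)$.

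Finally define the two completions. Take $y^{(0)}=0^{n^2}$, which is consistent with $\rho$ and gives $H=0$. Take $y^{(1)}$ to be zero except $y_i(p^*)=1$; this is allowed because $y_i(p^*)$ is unfixed, and since $p^*\in S$ it gives $H=1$. The only delicate step is making sure $S$ avoids all fixed $x$-positions while still containing a position with free $y$. The counting in the choice of $S$ handles this, and it is exactly where the threshold $\lceil n/2\rceil+1$ per block comes from.
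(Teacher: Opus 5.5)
Your proof is correct and follows essentially the same route as the paper: pigeonhole to a block touched by at most $n-k$ fixed variables, plant $T_{i,S}$ there with every other $x$-bit zero, and set a single $y$-bit in $S$ to get value one. The only difference is minor: the paper chooses $S$ among the at least $k$ positions where both $x_i(p)$ and $y_i(p)$ are unfixed, while you require only that $S$ avoid fixed $x$-bits and contain one position $p^*$ whose $y$-bit is free.
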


The key idea for the proof 
is that if less than $\tau_n = n (\lceil n / 2 \rceil + 1)$ variables are queried in $(x, y)$,
there is a block $i$ with the property that less than $\lceil n/2 \rceil + 1$ variables are queried in $x_i$ and $y_i$ together. 
Then, we can choose $S$ so that $T_{i, S}$ is satisfied
and $\bigvee_{p \in S} y_i(p)$ can be set to any of values 0 and 1, depending on unqueried variables.

\subsection{Cheat-sheet totalization}

We now apply the standard cheat-sheet construction of \cite{AaronsonBenDavidKothari2016}. 
Put $r = \lfloor \log_2 \tau_n \rfloor + 1$. An input to $F_n$ consists of
\begin{enumerate}
\item
$r$ inputs $(x^{(1)}, y^{(1)})$, $\ldots$, $(x^{(r)}, y^{(r)})$ to function $H$;
\item
$2^r$ cheat-sheets, with each cheat-sheet consisting of $rd$ input bits.
\end{enumerate}

$F_n=1$ if $H(x^{(1)}, y^{(1)})=\beta_1, \ldots, H(x^{(r)}, y^{(r)}) = \beta_r$ and 
the $(\beta_1 \ldots \beta_r)^{\rm th}$ cheat-sheet contains 
$(z^{(1)}, \ldots, z^{(r)})$ such that, for all $i\in [r]$, $A(x^{(i)}, y^{(i)}, \beta_i, z^{(i)}) = 1$.
Otherwise, $F_n=0$.

From the standard cheat-sheet framework, we have
\[ \operatorname{Q}(F_n) = O( r \ln r \operatorname{Q}(H) + \sqrt{r n} \ln n ).\]
In \Cref{lem:query-transfer}, we will sharpen this, removing the $\ln r$ factor.

We also have

\begin{lemma}\label{lem:certificate-lower}
The all-zero input $Z_0$ to $F_n$ satisfies $\operatorname{C}_0(F_n,Z_0) = \tau_n$.
\end{lemma}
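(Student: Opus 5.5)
We prove the two inequalities $\operatorname{C}_0(F_n,Z_0)\le\tau_n$ and $\operatorname{C}_0(F_n,Z_0)\ge\tau_n$ separately. Throughout, $Z_0$ is the all-zero input, so every $(x^{(i)},y^{(i)})=0^{2n^2}$ and every cheat-sheet bit is zero.

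\emph{Upper bound.} Consider the all-zero instance $(x,y)=0^{2n^2}$. Every term $T_{i,S}$ requires $|S|=k\ge 1$ ones in block $i$, so $f(0)=0$ and $H(0,0)=*$. To certify $F_n=0$ it suffices to certify that $f(x^{(1)})=0$. Indeed, $F_n=1$ would require $H(x^{(1)},y^{(1)})\in\{0,1\}$, i.e.\ $f(x^{(1)})=1$. So we need a set of zeros in $x^{(1)}$ that falsifies every term $T_{i,S}$.

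Fixing $n-k+1=\lceil n/2\rceil+1$ zeros in each block $i$ suffices. Then block $i$ has at most $k-1$ unfixed positions, so no $S$ of size $k$ can have all of its positions equal to one, and every $T_{i,S}$ is falsified. This uses $n(\lceil n/2\rceil+1)=\tau_n$ bits, so $\operatorname{C}_0(F_n,Z_0)\le\tau_n$.

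\emph{Lower bound.} Let $\rho$ be any partial assignment consistent with $Z_0$ that fixes fewer than $\tau_n$ bits. We build a completion $W$ with $F_n(W)=1$.

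1. \emph{Choose the target values.} Let $\beta=(1,0,\dots,0)$, or any fixed $r$-bit string whose cheat-sheet index is chosen as below. Since $2^r>\tau_n>|\rho|$, some cheat-sheet has no fixed bit. Let $\beta$ be the index of such an untouched cheat-sheet.

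2. \emph{Realize each value.} Each instance $i\in[r]$ receives fewer than $\tau_n$ fixed bits from $\rho$. By \Cref{lem:undefined-input}, it has a completion $(x^{(i)},y^{(i)})$ consistent with $\rho$ satisfying $H(x^{(i)},y^{(i)})=\beta_i$.

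3. \emph{Fill in the cheat-sheet.} By \Cref{lem:description-verification}, there exist descriptions $z^{(i)}$ with $A(x^{(i)},y^{(i)},\beta_i,z^{(i)})=1$. We write these into the untouched cheat-sheet $\beta$, which $\rho$ does not constrain. All remaining bits are set arbitrarily in a way consistent with $\rho$. The resulting input $W$ has $F_n(W)=1$.

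Hence no certificate of size below $\tau_n$ exists, which gives equality $\operatorname{C}_0(F_n,Z_0)=\tau_n$.

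\emph{Main obstacle.} The delicate point is the counting in step 1. The value $\beta$ must be chosen before the completions in step 2, and it must index a cheat-sheet untouched by $\rho$. This is consistent because \Cref{lem:undefined-input} allows each $\beta_i$ to be realized independently. The other requirement, $2^r>\tau_n$, holds since $r=\lfloor\log_2\tau_n\rfloor+1$. The upper-bound count $n-k+1=\lceil n/2\rceil+1$ should also be checked against the definition of $\tau_n$; it matches for both parities of $n$.
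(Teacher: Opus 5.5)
Your proof is correct and follows the paper's argument essentially step for step: for the upper bound you fix $n-k+1=\lceil n/2\rceil+1$ zeros in every block of one copy $x^{(1)}$; for the lower bound you pick an untouched cell via $2^r>\tau_n$, realize its address on each copy with \Cref{lem:undefined-input}, and fill that cell using \Cref{lem:description-verification}. The only blemish is the stray initial choice $\beta=(1,0,\dots,0)$ in step 1, which you should delete, since you then correctly redefine $\beta$ as the index of the untouched cell.
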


To see the lower bound part of this lemma
intuitively, consider a partial assignment of fewer than 
$\tau_n$ variables consistent with $Z_0$. By \Cref{lem:undefined-input}, both values remain possible for every copy of $H$. Moreover, since $2^r>\tau_n$, some cheat-sheet cell is completely untouched. We can choose its address, complete the copies of $H$ to the corresponding values, and fill that cell with accepting descriptions. Thus the partial assignment is not a zero-certificate.

\Cref{lem:certificate-lower} immediately yields the lower bound on $\operatorname{C}(F_n)$ for \Cref{thm:main}.
It remains to prove an upper bound on $\operatorname{Q}(H)$;
\Cref{lem:query-transfer} will then transfer this bound to $F_n$.

\subsection{Comparison to construction of Pabbaraju}\label{subsec:construction-comparison}
In \cite{pabbaraju2026}, a separate bijection from block $i$ to $[n]$ is chosen for each ordered pair $(i,j)$ of distinct blocks. The quantity $m_{i,j}(S)$ is the minimum value of this bijection on $S$. In the present work, the $n$ position lists $P_j$, with repetitions allowed, are shared across blocks, and $m_j(S)$ is the least index of an entry in $S$. This replaces $n(n-1)$ bijections by $n$ lists; the proof of \Cref{lem:first-hit-sum} then uses a union bound over sets $S$ instead of pairs $(i,S)$.

The descriptions use binary counts and cumulative prefix lengths, as specified in \Cref{app:description-verification}. The position lists and description format make $F_n$ a slightly different total Boolean function from $G$ in \cite{pabbaraju2026}. The incompatibility argument for the DNF, the definition of $H$ from the satisfied term, the cheat-sheet totalization, and the argument for the certificate lower bound follow \cite{pabbaraju2026}.

The analysis of the quantum algorithm also applies to the original construction in \cite{pabbaraju2026} when its bijections are chosen so that $\sum_{j \neq i}m_{i,j}(S) \leq 8n$ for every block $i$ and every subset $S$ of that block with $|S| \geq n/3$. \Cref{app:bijection-extension} proves that such choices exist and that, with the original description format and totalization, the resulting function $G$ satisfies
\[
 \operatorname{C}(G) = \Omega(n^2),
 \qquad
 \operatorname{Q}(G) = \widetilde O(\sqrt{n}).
\]

Similarly, we believe that our randomized algorithm can be easily adapted to the original construction of Pabbaraju (with the aforementioned extension from sets $S$ with $|S|=\frac{n}{2}$ to sets with $|S|\geq \frac{n}{3}$) and produce the same asymptotic gap.

\section{Main algorithmic ideas}

{\bf Finding a source.}
We cast the problem of finding the clause $T_{i, S}$ that is true in $f(x)$ as a problem of finding a source in a directed complete graph.

Let $S_i$ be the set of locations of 1s in the $i^{\rm th}$ block:
$S_i = \{ a : x_i(a) = 1\}$.
For distinct $i,j \in [n]$, define $m_{ij} = m_j(S_i)$, so that $m_{ij}$ is the first position in the list $P_j$ that points to a one in block $i$.

If $x$ satisfies $T_{i,S}$, then $S_i=S$ and
$m_{ij}=m_j(S)$. Moreover, the zero constraints in $T_{i,S}$
give
\[
    x_j(P_i(a))=0
    \qquad
    \text{for every $a\le m_j(S)$}.
\]
Consequently,
\[
    m_{ji}>m_j(S)=m_{ij}.
\]

We now define a directed complete graph $\mathcal G$ on $1, \ldots, n$ by defining that
$i \rightarrow j$ if $m_{ji}>m_{ij}$, or if $m_{ji} = m_{ij}$ and $i < j$. If $T_{i, S}$ is true, then $i\rightarrow j$ for all $j \neq i$. That is, $i$ is the source of $\mathcal G$.

{\bf Sparse and dense vertices.}
We also classify the vertices of $\mathcal G$ into sparse and dense vertices. The classification is performed by random sampling. We require that, with a high probability:
\begin{itemize}
\item 
All vertices $i$ with $|S_i|\leq n/3$ are classified as sparse;
\item 
All vertices $i$ with $|S_i| = k$ are classified as dense.
\end{itemize}

For dense vertices, \Cref{lem:first-hit-sum} is used to estimate the time necessary to check the existence of edges $i\rightarrow j$. For sparse vertices,
random sampling is used to determine the sparseness and eliminate them from consideration.

\section{Randomized algorithm for \texorpdfstring{$f(x)$}{f(x)}}\label{sec:randomized-algorithm}

\begin{theorem}\label{thm:randomized-upper-improved}
    Let $n$ be even. If $f(x)$ is an unambiguous DNF as defined earlier satisfying \Cref{lem:first-hit-sum}, then there is a randomized bounded-error algorithm with one-sided bounded error computing $f(x)$ in $O(n)$ queries.
    \end{theorem}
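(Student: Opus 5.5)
The algorithm is a one-sided test: it tries to locate a block $i$, then deterministically verifies that $x$ satisfies $T_{i,S_i}$, where $S_i=\{p : x_i(p)=1\}$. Verification costs $n$ queries to read block $i$ and check $|S_i|=k$. It then costs $\sum_{j\ne i}|A_{ij}(S_i)|\le 8n$ queries to check the zero constraints. The sets $A_{ij}(S)$ are determined by the fixed lists $P_1,\dots,P_n$, so computing them costs nothing. The algorithm accepts only after a successful verification, so a negative input is never accepted. Since $T_{i,S}$ has size at most $9n$, the entire proof reduces to a search phase. The search phase must use $O(n)$ queries and return the unique satisfied index $i^\ast$ with constant probability whenever $f(x)=1$. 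Any attempt that exceeds a budget $Cn$ is aborted and rejects, which affects only positive inputs. Constant success probability is then amplified to $2/3$ by $O(1)$ repetitions.

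For the search I would use the tournament on $\mathcal G$ from the previous section. The key primitive is an interleaved comparison of blocks $c$ and $j$. For $a=1,2,\dots$ it queries $x_c(P_j(a))$ and $x_j(P_c(a))$ alternately, stopping at the first one found. This determines the edge between $c$ and $j$ using at most $2\min(m_{cj},m_{ji\,\text{swap}})+2=2\min(m_{cj},m_{jc})+2$ queries. If $c=i^\ast$, the zero constraints force $m_{jc}>m_{cj}=m_j(S)$, so $i^\ast$ wins every comparison. Moreover, by \Cref{lem:first-hit-sum} with $|S|=k\ge n/3$, the total cost of all comparisons made while $i^\ast$ is the candidate is at most $2\sum_j m_j(S)+2n\le 18n$. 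Also, a comparison in which $i^\ast$ is the challenger against some $c$ costs at most $2m_{i^\ast c}+2\le 2M+2=O(n)$. The scan processes the blocks in a uniformly random order, keeping the current winner as candidate. The difficulty is therefore the cost incurred before $i^\ast$ is reached, when the candidate and the challengers are arbitrary blocks.

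To control that pre-phase I would add the sparse/dense filter. Before a block $j$ becomes a candidate, sample $O(1)$ uniformly random positions of $x_j$, and discard $j$ if too few of the sampled bits are ones. Block $i^\ast$ has density exactly $1/2$, so it survives with constant probability. A block with density at most $1/3$ is discarded with constant probability. Any candidate $c$ that survives and has $|S_c|\ge n/3$ satisfies the bound $\sum_j m_j(S_c)\le 8n$ from \Cref{lem:first-hit-sum}. Each comparison against such a candidate costs at most $2m_j(S_c)+2$, since the cost is at most $2m_{cj}+2$. So the total charge to any dense candidate over its reign is $O(n)$.

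The remaining task is to bound the number of candidate changes before $i^\ast$ appears, or more precisely the sum of the reign costs. This is the step I expect to be the main obstacle. My first attempt would be a random restart scheme instead of a single scan. Repeatedly pick a uniformly random block, filter it, and let it challenge everyone with a per-candidate budget of $20n$. Abort a candidate as soon as it loses a comparison or exceeds its budget. Then argue, via the budget and a Markov bound, that the expected number of candidates tried before hitting $i^\ast$ times their expected cost is $O(n)$. If the naive bound gives $O(n)$ candidates, I would instead interleave comparisons across candidates and use that a losing non-source is usually detected after a charge proportional to $m_j(S_c)$ for a random $j$, which is $O(1)$ in expectation by \Cref{lem:first-hit-sum}. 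This would bring the expected cost per rejected candidate down to $O(1)$ and the total to $O(n)$.
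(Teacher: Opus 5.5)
Your outer shell is sound and matches the paper. You accept only after the deterministic check of $T_{i,S_i}$, which costs at most $9n$ queries because $|S_i|=k$ is checked first. You cut off at $cn$ queries. You search with a tournament on $\mathcal G$ using interleaved prefix comparisons, in which $i^\star$ wins every comparison at total cost $O(n)$.

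However, the step you flag as the main obstacle is the entire content of the theorem, and you leave it open. Accounting per reign is the wrong granularity: with $\Theta(n)$ candidate changes, an $O(n)$ bound per reign gives only $O(n^2)$. The restart scheme also does not obviously give $O(n)$. A dense non-source candidate may be beaten by only a few blocks, so it survives many random challenges. Summing over the $\Theta(n)$ candidates tried before $i^\star$ can then cost order $n\log n$, for instance when the non-source blocks are nearly totally ordered. For the same reason, a random $j$ does not ``usually'' detect a losing candidate.

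The paper charges per comparison instead. Every comparison permanently removes at least one block, and the challenger $u$ is uniform over the surviving set, of size about $m$. So for a dense candidate $w$ the expected cost of the comparison is at most $\frac1m\sum_u m_{wu}\le 8n/m$ by \Cref{lem:first-hit-sum}. This cost is charged only while $i^\star$ has not yet been drawn, an event of probability at most about $m/n$. Each value of $m$ therefore contributes $O(1)$, for $O(n)$ in total. Your own random-order scan already admits this argument for dense candidates; no restarts are needed.

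The more serious gap is the density filter. A one-shot test with $O(1)$ samples passes a block of density just below $1/3$ with constant probability. Once such a block $c$ is the candidate, nothing bounds its comparisons. \Cref{lem:first-hit-sum} says nothing about $m_j(S_c)$ when $|S_c|<n/3$. The alternative bound $\min\{m_{cj},m_{jc}\}\le m_{jc}=m_c(S_j)$ is summable over the list index $c$ for fixed $j$, not over the challengers $j$ for a fixed candidate $c$. Taking $\Theta(\log n)$ samples per block to make such survivors negligible would already cost $\Theta(n\log n)$.

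The missing idea is to interleave the density test with the comparison itself. At every comparison step, sample one fresh uniform bit of the current candidate $w$. Add $+3$ for a one and $-2$ for a zero to a counter started at a constant $h$, and discard $w$ once the counter is nonpositive. For density below $1/3$ the drift $5p-2$ is negative. Optional stopping then bounds the number of steps with $w$ as candidate by $3(h+1)$ in expectation, however long its comparisons would have run (\Cref{lem:expected-stopping-time}). This gives $O(n)$ in total for sparse candidates. For $i^\star$, with $p=1/2$, a gambler's-ruin estimate shows the counter ever becomes nonpositive with probability at most $(2/3)^{h/3}$ over the whole unbounded run (\Cref{lem:density-stopping-probability}). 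So $h=18$ keeps $i^\star$ with probability at least $9/10$. These estimates, the dense-candidate charge above, and the $O(n)$ cost once $i^\star$ is reached give $O(n)$ expected queries on positive inputs. Markov's inequality at the cutoff $cn$ then completes the proof.
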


\begin{proof}

    The main idea of the algorithm is that on average each non-satisfied block $w$ can be eliminated from consideration in two ways at least one of which on average takes a constant number of queries ammortized over all blocks. One way is elimination due to low density ($<n/3$ ones), the other way -- a random element $u$ not yet eliminated has an edge $u\rightarrow w$ in $\mathcal{G}$. Thus after $n-1$ eliminations the remaining block should be satisfying. We can test that one block in $O(n)$ queries deterministically.
    
    The algorithm will maintain a current candidate $w$ for the satisfying block, a set of $C\subseteq[n]$ of candidates, and a ``density counter'' $H$ for the current candidate $w$. $H$ is initially be set to some positive integer constant $h$, and it is updated by sampling $x_w$ during comparison of $x_w$ with the other blocks. If $H$ ever becomes non-positive, $w$ is removed from $C$ and comparison with this block is halted and never resumed. Similarly, a block is removed from $C$ if it is determined that is not the unique source of the graph $\mathcal{G}$. The algorithm will select a random element $u$ from $C\setminus\{w\}$ and replace $w$ with $u$ if $m_{uw}<m_{wu}$ by comparing successive elements of $x_w$ and $x_u$ according to position lists $P_u$ and $P_w$, correspondingly. For each position, in addition a random element from $w$ is sampled -- if a 1 is sampled, $H$ is increased by 3; if 0, $H$ is decreased by 2. We show in Appendix~\ref{app:density-walk} that it discards $i^\star$ with very small probability, whereas it very quickly discards sparse blocks. 
    
    Denote by $E(A)$ a uniformly random independently chosen element from set $A$. Full pseoducode for the algorithm is arranged into three parts. The entry point of the algorithm is in Algorithm~\ref{alg:randomized-improved}. Subroutine \textsc{CompareBlocks} discards one or both of two blocks by comparing their $m_{uw}$ and $m_{wu}$ while simultaneously trying to discard based on low density. Subroutine \textsc{IsSatisfied}$(i)$ checks if there exists a term $(i, S)$ that is satisfied. 
    \begin{algorithm}[H]
\caption{Randomized algorithm for $f(x)$}
\label{alg:randomized-improved}
\begin{algorithmic}[1]
    \State $C\gets [n]$
    \State $t\gets 0$
    \State $w\gets 0$
    \While{$t\leq cn$ and $|C|\geq 2$}
        \If{$w=0$}
            \State $w\gets E(C)$
            \State $H\gets h$
        \EndIf
        \State $u\gets E(C\setminus\{w\})$
        \State \textsc{CompareBlocks}$(w,u)$
        \If{$w \notin C$}
            \If{$u \in C$}
                \State $w\gets u$
                \State $H\gets h$
            \Else
                \State $w\gets 0$
            \EndIf
        \EndIf
    \EndWhile
    \If{$C=\{z\}$}
        \State $w \gets z$
    \EndIf
    \If {$w\neq 0$ and \textsc{IsSatisfied}$(w)$}
        \State \Return 1
    \Else
        \State \Return 0
    \EndIf
\end{algorithmic}
\end{algorithm}

\begin{algorithm}[H]
\caption{Subroutine \textsc{CompareBlocks}$(w,u)$}
\label{alg:randomized-compare-blocks}
\begin{algorithmic}[1]
    \State $p\gets 1$
    \While{$p\leq M$ and $\{w,u\}\subseteq C$}
        \State $H\gets H+5x_w(E([n]))-2$
        \If{$H\leq0$}
            \State $C\gets C\setminus\{w\}$
        \EndIf
        \If{$x_w(P_u(p))=1$}
            \State $C \gets C\setminus\{u\}$
        \EndIf
        \If{$x_u(P_w(p))=1$}
            \State $C \gets C\setminus\{w\}$
        \EndIf
        \State $t\gets t+3$
        \State $p\gets p+1$
    \EndWhile
    \If {$p>M$ and $\{w,u\}\subseteq C$}
        \State $C\gets C\setminus\{w,u\}$
    \EndIf
\end{algorithmic}
\end{algorithm}

\begin{algorithm}[H]
\caption{Subroutine \textsc{IsSatisfied}$(i)$}
\label{alg:issatisfied}
\begin{algorithmic}[1]
    \State Query all variables in block $i$: $x_i(1), \ldots, x_i(n)$
    \If{$\sum_{p\in[n]}{x_i(p)}\neq n/2$}
        \State \Return 0
    \Else
        \For{$j\in [n]\setminus \{i\}$}
            \For{$p\in A_{ij}(S_i)$}
                \If{$x_j(p)=1$}
                    \State \Return 0
                \EndIf
            \EndFor
        \EndFor
        \State \Return 1
    \EndIf
\end{algorithmic}
\end{algorithm}  

    We will say that queries are made in the same step if $t$ is the same during those queries.  Note that the algorithm always stops after $O(n)$ steps and, furthermore, if $f(x)=0$ it will always output 0. The rest of the analysis deals with the $f(x)=1$ case. In this case there exists a unique satisfiable term of $f$; let $(i^\star, S^\star)$ denote this term. We will say, that a block is discarded if it is removed from $C$ at some point.
    \begin{lemma}\label{lem:discard-probability}
        $\Pr[i^\star\text{ is discarded}]\leq(2/3)^{h/3}$.
    \end{lemma}
    \begin{proof}
        $i^\star$ can only be discarded due to $H$ becoming $\leq 0$. Let $X_k$ be the value of $5x_{i^\star}(E([n]))-2$ in the $k$-th time the algorithm density-samples block $i^\star$ and let $H_k=h+\sum_{i=1}^k{X_i}$ be the intermediate values of $H$. Denote by $\tau_h$ the smallest $k$ such that $H_k\leq 0$ or, equivalently, that $H_k-h\leq -h$. Then $\Pr[i^\star\text{ is discarded}]\leq \Pr[\tau_h<\infty]$. Now Lemma~\ref{lem:density-stopping-probability} applies with the same $X_k$, $p=\frac{|S_{i^\star}|}{n}=\frac{1}{2}$ and $Y_k=H_k-h$.
    \end{proof}
    
    For the moment assume that there is no query limit. We will upper bound the expected number of queries $T$ used until $i^\star$ is found or discarded. We split the costs into three contributions:
    \begin{equation}
        \mathbb{E}[T]
        \leq 
        \mathbb{E}
        \left[
        3\sum_{i\in [n]}{L_i}+3\sum_{m\in \{0,\ldots, n-2\}}{D_m}+V 
        \right]
    \end{equation}
    where 
    \begin{equation*}
        L_i=\begin{cases}\text{\# of steps used with }w=i, \text{ if }|S_i|< \frac{n}{3}\\0,\text{ otherwise}\end{cases}
    \end{equation*}
    is the contribution from sparse blocks as the candidate $w$. Define $U=C\setminus\{u,w\}$. For $m\in \{0,\ldots, n-2\}$ let
    \begin{equation*}
        D_m=\begin{cases}\text{\# of queries used while }|U|=m \text{ if }|S_w|\geq \frac{n}{3}\text{ and }i^\star \in U\\0,\text{ otherwise}\end{cases},
    \end{equation*}
    be the contribution from dense blocks that don't involve $i^\star$. Finally, $V$ is the number queries spent on comparing $i^\star$ and then testing its satisfiability.  In the above quantities $L_i$ and $D_m$ steps are counted and not queries. Each step uses at most 3 queries.
    
    First we show that for sparse blocks, the counter walk reaches a nonpositive value in $O(h)$ updates in expectation, thus bounding the comparison steps used while $w$ is sparse.
    \begin{lemma}
        For all $i\in[n]$: $\mathbb{E}[L_i]= O(h)$.
    \end{lemma}
    \begin{proof}
        Let $X_k$, $H_k$, $Y_k$ and $\tau_h$ be as for the $i^\star$ block. Apply Lemma~\ref{lem:expected-stopping-time} with $p=\frac{|S_i|}{n}<\frac{1}{3}$ obtaining $\mathbb E[\tau_h]\leq 3(h+1)$. Thus the expected total number of steps that sample $i$-th block is at most: $3(h+1)$.
    \end{proof}

    \begin{lemma}
        For all $m\in\{0, \ldots, n-2\}$: $\mathbb{E}[D_m]=O(1)$.
    \end{lemma}
    \begin{proof}
        Firstly, $D_0=0$. In the following assume that $m\geq 1$.
        
        To upper bound $D_m$, let $A_m$ be the event that $i^\star\in U$ and $|S_w|\geq n/3$. Since blocks are selected for comparison in uniformly random order, $\Pr[A_m]\leq \frac{m}{n}$. Just before selecting $u$ for comparison, conditioning on $A_m$ there are $m$ choices for $u$. Cost of comparing $w$ with $u$ is at most $m_{wu}$. Thus $\mathbb{E}[D_m \mid A_m]\leq \sum_{u \in [n]\setminus \{w\}}{\frac{m_{wu}}{m}} \leq \frac{8n}{m}$ where the last inequality uses Lemma~\ref{lem:first-hit-sum}. Since $D_m=0$ when $A_m$ does not occur:
        \begin{align*}
            &\mathbb{E}[D_m] = \Pr[A_m]\mathbb{E}[D_m \mid A_m] \leq \frac{m}{n}\cdot\frac{8n}{m}=8.
        \end{align*}
        \end{proof}
    Since there are at most $n$ valid $m$ values for which $D_m>0$, $\mathbb{E}[\sum_m{D_m}]=O(n)$.

    Finally, $V$ that counts the number of queries used involving block $i^\star$, can be bounded by the comparison cost $\sum_{u\in[n]\setminus\{i^\star\}}{m_{i^\star u}}\leq 8n$ plus verification cost (reading $x_{i^\star}$ and comparing) $n+8n=9n$.

    This concludes the proof that $\mathbb{E}[T]=O(n)$, provided $f(x)=1$. Run the algorithm for $cn$ queries where $10\mathbb{E}[T]\leq cn$. Let $A$ denote the event that $i^\star$ is discarded in $\leq cn$ queries and $B$ denote the event that $i^\star$ is found in $\leq cn$ queries. By setting $h=18$ Lemma~\ref{lem:discard-probability} gives $\Pr[A]\leq \frac{1}{10}$. By Markov's inequality $\Pr[A\vee B]=1-\Pr[\neg(A\vee B)]=1-\Pr[T > cn]> 1-\frac{1}{10}=\frac{9}{10}$. Then
    \begin{equation*}
        \Pr[\text{alg. outputs 1}\mid f(x)=1]\geq\Pr[B]\geq \Pr[A\vee B]-\Pr[A] > \frac{9}{10}-\frac{1}{10}=\frac{4}{5}.
    \end{equation*}
\end{proof}

\section{Bounded-error quantum query complexity}\label{sec:source-proof}
\subsection{Overview}\label{subsec:block-graph}

We aim to replace classical search for a source in a similar graph by a quantum search. Our main algorithm is described as procedure \textsc{FindSource}. 

\begin{algorithm}[H]
\caption{The procedure \textsc{FindSource}}
\label{alg:sourcefinding}
\begin{algorithmic}[1]
\Procedure{FindSource}{}
  \State Choose the table of random positions used by the test
  \Statex Throughout the procedure, including within \textsc{Search}, return $1$ before a classification or membership circuit or its inverse would make the total query count exceed $T_{\max}$.
  \State Choose $w$ uniformly from $[n]$, set $W \gets (w)$, and store the classification of $w$
  \While{$|W| < n$}
    \State $v \gets \mathrm{Search}(U(W),T_2)$
    \If{$v = \bot$}
      \State \Return $w_{|W|}$
    \EndIf
    \If{$v \notin [n] \setminus W$}
      \State \Return $1$
    \EndIf
    \State Append $v$ to $W$ and store its classification
  \EndWhile
  \State \Return $w_{|W|}$
\EndProcedure
\end{algorithmic}
\end{algorithm}

It maintains an ordered list $W = (w_1,\ldots,w_h)$ of distinct vertices of $\mathcal G$. For this list, we define the candidate set
\begin{equation}\label{eq:candidate-set}
 U(W) = \{v \in [n] \setminus W : v \to w \text{ for every } w \in W\},
 \qquad U(\emptyset) = [n].
\end{equation}
If $v \in U(W)$ and $W' = (w_1,\ldots,w_h,v)$, then
\begin{equation}\label{eq:candidate-update}
 U(W') = \{u \in U(W) : u \to v\} \subseteq U(W) \setminus \{v\}.
\end{equation}
If $s$ is a source, it belongs to $U(W)$ until it is appended to $W$. Once appended, no vertex points to it, so the next candidate set is empty. Therefore repeatedly selecting a vertex from $U(W)$, appending it to $W$, and returning the last appended vertex when the updated candidate set is empty returns $s$.

A search in $U(W)$ uses a membership test that classifies $v$ as sparse or dense and compares it with the stored vertices in order, stopping when it finds $v = w_d$ or $w_d \to v$. 

\paragraph{Classification test.}
The bound in \Cref{lem:first-hit-sum} applies to blocks with at least $n / 3$ ones. To use it throughout the algorithm, put $s_n = \lceil 288 \ln(12 n) \rceil$ and choose $s_n$ positions in each block independently and uniformly with replacement. The table of positions is fixed throughout \textsc{FindSource}. A block is declared dense if at least $3s_n / 8$ of its bits at these positions are one, and declared sparse otherwise. Its classification is computed when needed, using $s_n$ queries; write $\gamma_i = 1$ for a block declared dense and $\gamma_i = 0$ otherwise.

In \Cref{app:block-classification}, we show that, with probability at least $11 / 12$, every block with $k$ ones is declared dense and every block declared dense has more than $n / 3$ ones. 

For \textsc{FindSource}, we modify $\mathcal G$ by directing edges from blocks declared dense to blocks declared sparse, and edges between two blocks declared sparse from the smaller index to the larger. The block of a satisfied term remains the source whenever the classification test succeeds.

\paragraph{Direction test.}
For two blocks declared dense by the classification, we compare $m_{v w}$ with $m_{w v}$ by searching successively doubled prefixes, ending at length $M$. 

We now define parameters describing the complexity of this test.
Choose an absolute constant $c_0\geq1$ so that the circuit in \Cref{lem:prefix-search}, including the simulation in \Cref{sec:edge-proof}, uses at most $\lceil c_0\sqrt{L\ln n}\rceil$ queries to $x$ for every $L\in[M]$.
For distinct $u,v\in[n]$, let $\Delta_{uv} = \min\{m_{uv}, m_{vu}\}$. If $u$ and $v$ are declared dense and $u\neq v$,
let $\theta_{uv} = \left\lceil c_0\sqrt{\Delta_{uv}\ln n}\right\rceil$.
Otherwise, $\theta_{uv}=0$.

\begin{lemma}
There is a subroutine that determines whether $u\rightarrow v$ for $u, v$ declared dense by classification, 
using $O(\theta_{uv})$ queries when all prefix comparison answers are correct, and achieving a probability of correct answer at least $1-O(\frac{1}{n^5})$. 
\end{lemma}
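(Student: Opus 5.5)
The plan is to implement the direction test as a doubling search for the first index at which either list witnesses a one. For $L=1,2,4,\ldots$ up to $M$, ask whether there is some $a\le L$ with $x_v(P_w(a))=1$ or $x_w(P_v(a))=1$. This prefix question is an OR over at most $2L$ input bits. The idea is that $\Delta_{uv}=\min\{m_{uv},m_{vu}\}$ is exactly the first index $a$ at which such a one appears in either list, and the doubling search finds it at cost governed by $\Delta_{uv}$. We assume \Cref{lem:prefix-search} provides a circuit that answers such a prefix-OR question with $\lceil c_0\sqrt{L\ln n}\rceil$ queries and error $O(n^{-6})$; this is the amplified Grover search, with $\ln n$ repetitions. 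We proceed in four steps.

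\emph{Doubling phase.} Run the prefix test for $L=2^0,2^1,\ldots$ and stop at the first $L^\ast$ that answers ``yes''. If every answer is ``no'' up to $L=M$, then $m_{uv}=m_{vu}=M+1$, the edge is decided by index order, and we output according to $u<v$.

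\emph{Binary search.} Within the window $(L^\ast/2,L^\ast]$, locate the exact first index $a^\ast=\Delta_{uv}$ by binary search with the same prefix tests. Each step queries a prefix of length at most $L^\ast\le 2\Delta_{uv}$.

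\emph{Direction.} At $a^\ast$, query the two bits $x_v(P_w(a^\ast))$ and $x_w(P_v(a^\ast))$ directly. This tells us which of $m_{vw}$, $m_{wv}$ equals $a^\ast$, or that both do. In the tie case, fall back to the index rule. This reproduces exactly the definition of $\to$ in $\mathcal G$.

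\emph{Cost and correctness.} When all answers are correct, the doubling phase costs $\sum_{j\le \log_2 L^\ast}\lceil c_0\sqrt{2^j\ln n}\rceil=O(\sqrt{\Delta_{uv}\ln n}+\log n)$. The geometric sum is dominated by its last term, and the ceilings contribute $O(\log \Delta_{uv})$, which is $O(\theta_{uv})$ since $\theta_{uv}\ge\sqrt{\ln n}\cdot\sqrt{\Delta}$. The binary search, done naively, costs $O(\log \Delta_{uv})$ tests of length up to $2\Delta_{uv}$, giving $O(\sqrt{\Delta\ln n}\log\Delta)$. That is too much. To avoid it, we do not need the exact minimum. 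It suffices to ask separately whether the prefix of length $L$ of $P_w$ hits $S_v$ and whether that of $P_v$ hits $S_w$. Then run the doubling on both lists in parallel: at the first $L$ where some list says ``yes'', compare the two answers. If exactly one list says ``yes'', the direction is settled. If both do, both $m$ values lie in $(L/2,L]$, and we resolve them by binary search using prefix tests on the \emph{difference window}. For this we would use the variant of the search in \Cref{lem:prefix-search} whose cost is $\sqrt{\text{window}}$, so the costs decrease geometrically: $\sum_t\sqrt{L/2^t\cdot\ln n}=O(\sqrt{L\ln n})$.

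For correctness, the total number of test invocations is $O(\log n)$, each failing with probability $O(n^{-6})$. A union bound gives overall error $O(n^{-5})$, which is where the required amplification level in \Cref{lem:prefix-search} is fixed.

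The main obstacle is this last point: keeping the binary-search refinement in the tie-window within $O(\sqrt{\Delta\ln n})$ rather than paying an extra $\log\Delta$ factor. The first approach to try is the shrinking-window argument above, where each refinement searches only the sub-window not yet resolved. A second concern is the cost on inputs where some prefix answer is wrong. The cost bound is only claimed under correct answers, with \textsc{FindSource}'s global cap $T_{\max}$ handling the rest.
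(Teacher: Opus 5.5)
Your proposal is correct and is essentially the paper's argument. The paper also compares prefixes of doubled lengths $L_\ell=\min\{2^\ell,M\}$ of the combined string $b_a=x_u(P_v(a))\lor x_v(P_u(a))$, and locates its least one by a binary search whose verified Grover calls (error $n^{-6}$) cover only the earlier half of the unresolved interval, so the costs are geometric and total $O(\sqrt{L\ln n})$; it then reads the two defining bits and uses the index rule for ties or an all-zero prefix. The only differences are organizational. In the paper this shrinking-window search is \Cref{lem:prefix-search} itself, not a plain prefix-OR test, and it is run at every doubled length; since it works directly on the combined string without an extra $\log\Delta_{uv}$ factor, your detour through testing the two lists separately is unnecessary.
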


\paragraph{Variable-time search.}
As $\Delta_{uv}$ depends on the vertex $u$ that is being tested, the test for $u\in U(W)$ has different time complexities for different $u$. We use variable-time search (in the form described in Ambainis, Kokainis and Vihrovs \cite{AmbainisKokainisVihrovs}, with some slight modifications described in \cref{app:variable-search}).

When the classification test succeeds, the complexity of variable-time search relies on the property
\begin{equation}
\label{eq:edge-square-sum}    
\max_u \sum_{v} \theta^2_{uv} \leq K n \ln n
\end{equation} 
which can be obtained from Lemma \ref{lem:first-hit-sum}.

\subsection{Complexity bounds for membership and variable-time search}\label{subsec:candidate-sets}

Let $t_v(W)$ be the sum of the query counts of the circuits for testing if $v\in U(W)$ until the membership test returns zero or one (for the sequence of tests implemented if all tests give correct answers). For a fixed input and table, these counts are deterministic (by the construction in \cref{subsec:membership}). In \Cref{subsec:membership}, we show:

\begin{lemma}\label{lem:membership-counts}
For every input and table of positions, one has $t_v(\emptyset) = s_n$ for every $v \in [n]$. Let $W = (w_1,\ldots,w_h)$ be an ordered list of distinct vertices, let $w \in U(W)$, and put $W' = (w_1,\ldots,w_h,w)$. Then, for every $v \in [n]$,
\begin{equation}\label{eq:membership-completion-update}
 t_v(W') \geq t_v(W) \quad\text{and}\quad t_v(W') - t_v(W) = O(\theta_{vw}).
\end{equation}
The difference is zero for $v \notin U(W)$ and for $v = w$.
\end{lemma}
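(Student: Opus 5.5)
The plan is to fix the structure of the membership test as a sequential computation and read off the three claims from that structure. For a list $W=(w_1,\ldots,w_h)$, the test for $v\in U(W)$ first computes the classification $\gamma_v$ with $s_n$ queries to the fixed table. It then runs through $d=1,\ldots,h$ in order. At each step it stops with answer zero if $v=w_d$ or $w_d\to v$, and otherwise continues. It outputs one after the last step. The comparison at step $d$ is one of two kinds. When both $v$ and $w_d$ are declared dense, it calls the direction-test subroutine. In every other case it is decided from the stored classifications alone, either by the dense-to-sparse rule or by the index order between sparse blocks. The check $v=w_d$ is free.

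Under the hypothesis that all answers are correct, we then have $t_v(W)=s_n+\sum_{d\le d^\ast}c(v,w_d)$. Here $d^\ast$ is the first index with $v=w_d$ or $w_d\to v$, or $h$ if there is none. The per-comparison cost $c(v,w_d)$ is $O(\theta_{v w_d})$ by the direction-test lemma, and it is $0$ when one of the blocks is declared sparse, which matches $\theta=0$ there. The circuit's query counts for a fixed input and table are determined by $(v,w_d)$ alone. I will make this precise by padding each comparison circuit to its worst case given the correct prefix answers, so the counts are deterministic as stated. The case $W=\emptyset$ then gives $t_v(\emptyset)=s_n$ at once.

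For the update $W'=(W,w)$ I split into cases. If $v\notin U(W)$, then $v\in W$ or some $w_d\to v$, so the test stops at some $d^\ast\le h$. Appending $w$ never reaches step $h+1$, the sum is unchanged, and the difference is $0$. If $v=w$, then $w\notin W$ and $w\in U(W)$, so all $h$ earlier comparisons are passed. Step $h+1$ detects $v=w_{h+1}$ with no queries, so again the difference is $0$. If $v\in U(W)\setminus\{w\}$, then the first $h$ steps all continue and step $h+1$ is executed. This adds exactly $c(v,w)\ge 0$, which gives $t_v(W')\ge t_v(W)$ and $t_v(W')-t_v(W)=O(\theta_{vw})$. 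Monotonicity holds in every case because the test on $W'$ runs the test on $W$ as a prefix.

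The main obstacle is making sure that the cost of a single dense-dense comparison is bounded by $O(\theta_{vw})$ and not by the worst case $O(\sqrt{M\ln n})$, and that this cost does not depend on which direction the edge has. I would handle this in the doubling scheme. The search over prefixes of length $L=1,2,4,\ldots$ stops at the first $L\ge\Delta_{vw}$ where a hit in either list is found. That is the point where $\min\{m_{vw},m_{wv}\}$ is located, and once it is found the direction is decided, up to a tie-check at that single index. Summing $\lceil c_0\sqrt{L\ln n}\rceil$ over doubled $L$ up to $2\Delta_{vw}$ gives a geometric series bounded by $O(\sqrt{\Delta_{vw}\ln n})=O(\theta_{vw})$. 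When no hit occurs by $M$, we have $\Delta_{vw}=M+1$, and the total is still $O(\sqrt{M\ln n})=O(\theta_{vw})$.
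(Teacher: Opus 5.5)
Your proposal is correct and follows essentially the paper's proof. You use the same three-way case split ($v\notin U(W)$, $v=w$, $v\in U(W)\setminus\{w\}$), based on the fact that the test on $W'$ runs the test on $W$ as a prefix, and you bound the extra cost in the last case by the geometric doubling sum that the paper records as \eqref{eq:comparison-query-sum}.
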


For each integer $q \geq 0$, the checking circuit $\mathcal M_q$ acts on the list register, the register holding the tested vertex, and the response register. On a basis state where the list register encodes $W$ and the tested vertex is $v$, it reports $*$ when $q < t_v(W)$ and otherwise reports the membership bit of $v$ in $U(W)$. It preserves the list and tested vertex and adds the encoded answer to the response register. We count $2 q$ queries for each application of $\mathcal M_q$ or its inverse, matching the query count of its approximate implementation $\widetilde{\mathcal M}_q$ constructed in \Cref{subsec:membership}. The search analysis uses the exact circuits $\mathcal M_q$; \textsc{FindSource} uses their approximations $\widetilde{\mathcal M}_q$.

For fixed $W$, the circuits $\mathcal M_q$ preserve the subspace in which the list register encodes $W$. Restricted to this subspace, they have the checking form of \cite[Section~2]{AmbainisKokainisVihrovs}, with $G = U(W)$ and $t_v = t_v(W)$. More generally, $\mathrm{Search}(G,T_2)$ is defined for any set $G \subseteq [n]$ with fixed, possibly unknown stopping query counts $t_v \geq 1$. It combines searches with different numbers of stages, following \cite[Algorithms~1 and~2]{AmbainisKokainisVihrovs}. The construction and proof of the following lemma are given in \Cref{app:variable-search}.

\begin{lemma}\label{lem:full-search}
For $T_2 \geq \sqrt{n}$, every index returned by $\mathrm{Search}(G,T_2)$ belongs to $G$. Its maximum query count is $O(T_2 \ln n)$, and it returns $\bot$ with certainty when $G$ is empty. Suppose $1 \leq m = |G| < n$ and $\sum_v t_v^2 \leq T_2^2$. Then its expected query count is
\begin{equation}\label{eq:search-expectation}
 O\left(\frac{T_2}{\sqrt{m}}\left(1 + \ln \frac{n}{m}\right)\right),
\end{equation}
and an absolute constant $C \geq 1$ satisfies, for every $v \in G$,
\begin{equation}\label{eq:search-distribution}
 \Pr[\mathrm{Search}(G,T_2) = v] \leq \frac{C}{m}.
\end{equation}
The probability that $\mathrm{Search}(G,T_2)$ returns $\bot$ is at most $\delta_m$, where the numbers $\delta_m$ depend only on $m$ and satisfy
\begin{equation}\label{eq:false-emptiness}
 \sum_{m \geq 1} \delta_m < \frac{1}{100}.
\end{equation}
\end{lemma}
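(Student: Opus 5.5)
The plan is to build $\mathrm{Search}(G,T_2)$ as a classical mixture of variable-time searches, following \cite[Algorithms~1 and~2]{AmbainisKokainisVihrovs}.

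For each guess $\ell \in \{0,1,\ldots,\lceil\log_2 n\rceil\}$ of the scale $m \approx n/2^{\ell}$, set up a search with $O(1+\ell)$ stages. Stage $j$ applies $\mathcal M_{q_j}$ with geometrically growing thresholds $q_j = 2^j$, capped at $T_2$. It then runs amplitude amplification on the indices whose answer is not $*$ or not $0$, with iteration counts tuned so that the stage costs $O(T_2/\sqrt{m_\ell})$ when $\sum_v t_v^2 \le T_2^2$. Here $m_\ell = n/2^\ell$.

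$\mathrm{Search}$ runs these over increasing $\ell$. After each run it measures a candidate index and verifies it classically with $\mathcal M_{T_2}$ plus a completion check. The total cap is $O(T_2 \ln n)$: there are $O(\ln n)$ scales, and a per-scale cost bounded by $O(T_2)$ through the $T_2 \ge \sqrt n$ normalisation. The procedure returns an index only after this verification, so every returned index lies in $G$. If $G=\emptyset$, no verification succeeds and the output is $\bot$ with certainty.

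The expected query count \eqref{eq:search-expectation} comes from the following estimate. When $m_\ell \approx m$, the success probability at scale $\ell$ is a constant, and this follows from the AKV analysis of the stage-wise amplitude. The scales $\ell' < \ell$ cost a geometric sum dominated by $T_2/\sqrt m$ times the number of scales tried, which is $O(1+\ln(n/m))$. Scales beyond the right one are reached only with geometrically decaying probability.

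For \eqref{eq:search-distribution}, I would make the output uniform over a random symmetry. Before the search, apply a uniformly random permutation of the index register, or equivalently a random offset of the relabelling. I would also ensure that in each successful run the final state assigns each $v \in G$ amplitude at most $O(1/\sqrt m)$ relative to the marked mass. The AKV stages only rescale amplitudes by factors depending on $t_v$, and those factors are bounded by constants after normalisation, so the conditional output distribution is within a constant factor of uniform.

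For \eqref{eq:false-emptiness}, repeat the right-scale attempt a number of times $r_m$ that grows with $m$, say $r_m = O(\ln(m+1))$ extra repetitions. The failure probability $\delta_m$ then decays like $(m+1)^{-2}/200$. This depends only on $m$ because the per-attempt success bound is uniform over inputs, and the extra repetitions cost only a constant factor in \eqref{eq:search-expectation} after absorbing the logarithm into $1+\ln(n/m)$.

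I expect the main obstacle to be this last point: making $\delta_m$ depend only on $m$ and be summable without breaking the expectation bound. My fallback is to repeat each scale until either success or a budget of $O(T_2\sqrt{\cdot})$ is exhausted, and to bound the probability of error via a Chernoff bound on independent constant-success trials.
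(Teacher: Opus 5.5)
Your proposal has genuine gaps in three places: the expectation bound \eqref{eq:search-expectation}, the bound on $\delta_m$, and the distribution bound \eqref{eq:search-distribution}.

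\textbf{Expected query count and $\delta_m$.} Verifying each measured candidate with $\mathcal M_{T_2}$ costs $2T_2$ queries. For $m=\Theta(n)$, the lemma demands expected cost $O(T_2/\sqrt n)$, so a single such verification already breaks \eqref{eq:search-expectation}. The final check must be made at the run's own threshold. In the paper, the run $\mathcal B_h$ with $h$ stages checks at $r_h=3^hq_0$, and this is affordable only because the thresholds start at $q_0=\lceil 3T_2/\sqrt n\rceil$, not at $1$. Then after $\ell=\lceil\log_9(n/m)\rceil$ tripling stages one has $r_\ell^2\ge 9T_2^2/m$, so fewer than $m/9$ marked indices remain unresolved. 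Your thresholds $2^j$ with $O(1+\ell)$ stages never reach this level.

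Your repetition plan also fails. The algorithm does not know $m$, and $\ln(m+1)$ cannot be absorbed into $1+\ln(n/m)$: take $m=n/2$. In a linear scan over increasing $\ell$, nothing guarantees that runs tuned for fewer marked indices succeed, since amplification can overshoot. So ``later scales are reached with geometrically decaying probability'' is unjustified.

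The missing idea is the interleaved schedule of \cite[Algorithm~2]{AmbainisKokainisVihrovs}: round $j$ runs $\mathcal B_1,\ldots,\mathcal B_j$ once each. Every round from $\ell$ on then contains a fresh $\mathcal B_\ell$, which fails with probability at most $\rho$. This gives two things at once:
\begin{itemize}
\item Round $j>\ell$ is reached with probability at most $\rho^{j-\ell}$, so the expected cost is a geometric series with ratio $3\rho<1$.
\item There are $J-\ell+1\ge 1+\lfloor\log_9 m\rfloor$ attempts at the right scale. This yields $\delta_m=\rho^{1+\lfloor\log_9 m\rfloor}$ and $\sum_m\delta_m=8\rho/(1-9\rho)<1/100$, at no extra expected cost.
\end{itemize}

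\textbf{Distribution bound.} A random relabelling of the index register accomplishes nothing. The search is already label-symmetric, and the probability of returning $v$ depends on $t_v$, not on its label. It is also false that each successful run's conditional output is within a constant factor of uniform on $G$. An under-staged run $\mathcal B_h$ with $h<\ell$ can only return indices in $R_h=\{v\in G: t_v\le r_h\}$. This set may contain a single index, which is then returned with conditional probability one.

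The argument needs two separate facts.
\begin{itemize}
\item Each $\mathcal B_h$ is uniform on $R_h$, by the amplitude identity in \cite[Lemma~2]{AmbainisKokainisVihrovs}. For $h\ge\ell$ one has $R_h\supseteq R_\ell$ with $|R_\ell|\ge 8m/9$, which gives the bound $9/(8m)$.
\item An \emph{unconditional} bound $\Pr[\mathcal B_h=v]\le c\,9^{h-1}/n$ holds for every $v$. Amplitudes start at $1/\sqrt n$ and each recursion level multiplies them by at most $3$. Each $\mathcal B_h$ with $h<\ell$ is called $O(\ell-h+1)$ times in expectation, so summing gives $O(9^{\ell-1}/n)=O(1/m)$.
\end{itemize}
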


We next use \eqref{eq:search-distribution} to bound the number and sizes of the candidate sets. Starting from $W_0 = \emptyset$ and $U_0 = [n]$, consider a process that at each reached step with $U_h = U(W_h) \neq \emptyset$ may stop or select a vertex of $U_h$ and append it to $W_h$. Given the preceding choices, each vertex has selection probability at most $C / |U_h|$. The process stops when the candidate set is empty, and all sets after stopping are empty. Put $m_h = |U_h|$, let $R = |\{h : m_h > 0\}|$, and let $W_{\mathrm{end}}$ be the final list.

For every $b \geq 0$, there are $O(b)$ vertices whose selection decreases the number of candidates by at most $b$. Together with the bound $C / |U_h|$ on each selection probability, this gives an expected decrease $\Omega(|U_h|)$, which underlies the first two estimates below. The full definition and proof are given in \Cref{app:candidate-estimates}.

\begin{lemma}\label{lem:candidate-estimates}
For the candidate selections described above, starting from $U_0 = [n]$, the following estimates hold:
\begin{equation}\label{eq:completion-profile}
 \mathbb E R = O(\ln n), \quad \mathbb E \sum_{h:m_h > 0} m_h^{-1 / 2} = O(1), \quad \mathbb E \sum_{v = 1}^n t_v(W_{\mathrm{end}})^2 = O(n \ln^2 n).
\end{equation}
The first two estimates follow from the assumptions on the candidate selections. The third additionally assumes \eqref{eq:edge-square-sum}.
\end{lemma}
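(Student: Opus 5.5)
The plan is to prove the three estimates in order, using a single structural fact about how candidate sets shrink. Fix a step $h$ with $U_h\neq\emptyset$. For $u\in U_h$, let $d_h(u)=|\{v\in U_h\setminus\{u\}: v\to u\}|$. By \eqref{eq:candidate-update}, selecting $u$ gives $m_{h+1}=d_h(u)$.

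The vertices $v$ with $v\to u$ are exactly the in-neighbours of $u$ in the tournament induced on $U_h$. A tournament on $m$ vertices has at most $2b+1$ vertices of in-degree at most $b$: inside that set, the sub-tournament has a vertex of in-degree at least $(s-1)/2$. Hence the number of choices that leave at most $b$ candidates is $O(b+1)$. With selection probabilities at most $C/m_h$, this gives
\[
\Pr[m_{h+1}\le b \mid \text{past}]\le C(2b+1)/m_h .
\]
Taking $b=m_h/(8C)$, with probability at least $1/2$ the size drops only by a constant factor, so $m_{h+1}\ge c\,m_h$. More usefully, $\mathbb E[m_{h+1}\mid\text{past}]=\Omega(m_h)$. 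Stopping is also allowed and sends the sizes to $0$.

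This is the wrong direction for bounding $R$, so I will instead use the fact that the vertex selected must be a uniform-ish random vertex. The set of candidates surviving the selection of $u$ has size $d_h(u)$, and in-degrees of a tournament sum to $\binom{m_h}{2}$. Then
\[
\mathbb E[m_{h+1}]\le \sum_u (C/m_h)\,d_h(u)=C(m_h-1)/2 ,
\]
which alone is not a contraction when $C>2$.

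For $\mathbb ER=O(\ln n)$, I would therefore use a potential such as $\ln m_h$ and show an expected constant decrease. The quantity $\mathbb E[\ln(m_{h+1}/m_h)]$ is controlled by the upper-tail bound: since at most $2b+1$ vertices have in-degree at most $b$, at least $m_h-2b-1$ vertices have in-degree above $b$. The upper-tail probability of $m_{h+1}\ge (1-\epsilon)m_h$ equals the total selection probability of vertices with in-degree at least $(1-\epsilon)m_h$. There are at most about $m_h/(2(1-\epsilon))\cdot$const such vertices, by the average in-degree argument, and each has probability at most $C/m_h$. This bounds that tail by a constant, but not one below $1$ unless $C$ is small.

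This is the main obstacle. To handle it I would use a ``domination'' argument: the high in-degree vertices $X$ form a set in which, by averaging, the sub-tournament on $X$ forces many of them to have in-degree from within $X$. I would pick $\epsilon$ depending on $C$ so that $|\{u: d_h(u)\ge(1-\epsilon)m_h\}|\le m_h/(2C)$. This holds because these vertices' in-degrees within themselves average $(|X|-1)/2$, so they need about $(1-\epsilon)m_h-|X|/2$ in-edges from outside each, and counting gives $|X|(1-\epsilon-|X|/(2m_h))m_h\le |X|(m_h-|X|)$, forcing $|X|\le 2\epsilon m_h$. Choosing $\epsilon=1/(4C)$ gives $\Pr[m_{h+1}\le(1-\epsilon)m_h]\ge1/2$. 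A standard drift argument on $\ln m_h$ then yields $\mathbb ER=O(\ln n/\epsilon)=O(\ln n)$.

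For the second estimate, I would use the same geometric decrease with probability at least $1/2$ per step. The sum $\sum m_h^{-1/2}$ is dominated by its last terms. Group the steps into phases where $m_h\in(n/2^{j+1},n/2^j]$ (with the base adjusted to $1/(1-\epsilon)$). Each phase has $O(1)$ expected steps by the drift bound, conditioned on entering it, and contributes $O((2^j/n)^{1/2})$ per step. The geometric series in $j$ up to $m\ge1$ sums to $O(1)$.

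For the third estimate, I would apply \Cref{lem:membership-counts} telescopically. For each $v$,
\[
t_v(W_{\mathrm{end}})=s_n+\sum_{h}\bigl(t_v(W_{h+1})-t_v(W_h)\bigr),
\]
where the increments are nonzero only while $v\in U_h$ and each is $O(\theta_{v w_{h+1}})$. By Cauchy--Schwarz,
\[
t_v^2\le 2s_n^2+O(R_v)\sum_h\theta^2_{vw_{h+1}},
\]
with $R_v\le R$ counting the steps during which $v$ is still a candidate. Summing over $v$ and using \eqref{eq:edge-square-sum} in the form $\sum_{v}\theta^2_{vw}\le Kn\ln n$ for each appended $w$ (by symmetry of $\theta$) gives $O(n\ln^2 n)+O(\mathbb E[R^2]\,n\ln n)$. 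This is too weak, so I would refine it. I would exchange the order and bound $\sum_v R_v\theta^2_{vw_{h+1}}$ by noting that $v$ contributes only while in $U_h$, and that $\mathbb E[R\cdot\,\cdot]$ can be handled by the tail of $R$ (geometric by the drift argument), giving $\mathbb E R^2=O(\ln^2 n)$.

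The resulting bound $O(n\ln^3 n)$ still exceeds the target by a $\ln n$ factor. To shave it, I would avoid Cauchy--Schwarz with the weight $R_v$. Instead, I would use that $\theta_{vw}^2=O(\Delta_{vw}\ln n)$ is additive in $\Delta$, and that $\sqrt{\cdot}$ is subadditive when comparing against $\sqrt{\sum_h\Delta_{vw_h}\ln n}$, with $\sum_h\Delta_{vw_h}\le\sum_w m_{vw}\le 8n$ by \Cref{lem:first-hit-sum}. Concretely, I would aim for $t_v=O(s_n+\sqrt{\ln n}\sum_h\sqrt{\Delta_{vw_h}})$ and bound it via the doubling structure of the direction test to get $O(\sqrt{R_v\, n\ln n})$ per $v$. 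This is the step I am least sure closes without the extra log.
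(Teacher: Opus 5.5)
Your arguments for the first two estimates are correct, and they differ from the paper's mainly in packaging. Both rest on the same fact: at most $2\epsilon m_h$ candidates have in-degree at least $(1-\epsilon)m_h$ inside $U_h$. You turn this into a probability-$1/2$ multiplicative shrink, then count good steps and phases with an optional-stopping argument. The paper instead integrates the same tail bound to get $\mathbb E[m_h-m_{h+1}\mid\text{past}]\ge m_h/(4C)$. It then telescopes $\sum_h(m_h-m_{h+1})m_h^{-a}\le\sum_{j\le n}j^{-a}$, which handles $a=1$ and $a=3/2$ in one line. Your opening claim $\Pr[m_{h+1}\le b\mid\text{past}]\le C(2b+1)/m_h$ ignores the stopping probability and is false in general, but you do not use it.

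The genuine gap is the third estimate, which you leave open.

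\begin{itemize}
\item Your Cauchy--Schwarz route gives only $O(\mathbb E[R^2]\,n\ln n)=O(n\ln^3 n)$.
\item The proposed refinement makes things worse, not better. The bound $t_v=O(s_n+\sqrt{R_v n\ln n})$ spends the whole row budget $\sum_w\Delta_{vw}\le 8n$ for every $v$. Summing squares then gives $n\ln n\sum_vR_v=n\ln n\sum_hm_h\ge n^2\ln n$.
\item More fundamentally, no argument that uses only \eqref{eq:edge-square-sum} and the realised increments $\theta_{vw_h}$ can succeed. The row and column bounds allow $R$ tested vertices that stay candidates and $R$ selected vertices with $\theta_{vw_h}^2=Kn\ln n/R$. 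This gives $\sum_v(\sum_h\theta_{vw_h})^2=R^2Kn\ln n$.
\end{itemize}

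The missing idea is to use the randomness of $w_h$ against each fixed $v$. Put $c_h(v)=\theta_{vw_h}$ if $v\in U_h$ and $c_h(v)=0$ otherwise. Each candidate is selected with probability at most $C/m_h$, so Cauchy--Schwarz with \eqref{eq:edge-square-sum} gives, for $v\in U_h$,
\[
\mathbb E[c_h(v)\mid W_h]\le\frac{C}{m_h}\sum_{w\in U_h}\theta_{vw}=O\bigl(\sqrt{n\ln n/m_h}\bigr).
\]

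The paper then expands $V=\mathbb E\|\sum_h\boldsymbol{c}_h\|^2$ as follows.
\begin{itemize}
\item The diagonal terms are at most $Kn\ln n\,\mathbb E R$.
\item For $h<j$, conditioning on the choices before step $j$ replaces $\boldsymbol{c}_j$ by $O(\sqrt{n\ln n})\,\boldsymbol{\chi}_j$, where $\boldsymbol{\chi}_j$ is the normalized indicator of $U_j$.
\item Because the sets are nested, $\langle\boldsymbol{\chi}_h,\boldsymbol{\chi}_{h'}\rangle=\sqrt{m_{h'}/m_h}$.
\item The contraction $\mathbb E\sqrt{m_{h+1}}\le\lambda\sqrt{m_h}$ then gives $\mathbb E\|\sum_h\boldsymbol{\chi}_h\|^2=O(\ln n)$. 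This contraction also follows from your probability-$1/2$ shrink, with $\lambda=\tfrac12+\tfrac12\sqrt{1-\epsilon}$.
\item Cauchy--Schwarz yields the self-bounding inequality $V=O(n\ln^2 n)+O(\sqrt{n\ln^2 n\,V})$, hence $V=O(n\ln^2 n)$.
\end{itemize}
Finally, $t_v(W_{\mathrm{end}})\le s_n+O(\sum_hc_h(v))$ finishes the estimate.
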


The first two estimates bound the expected query counts for classifications and repeated searches; the last estimate in \eqref{eq:completion-profile} justifies the choice of $T_2$ below.

\subsection{Analysis of FindSource}\label{subsec:source-algorithm}


We choose sufficiently large absolute constants $K_2,K_T$ and put
\begin{equation}\label{eq:source-parameters}
 T_2 = \left\lceil K_2 \sqrt{n} \ln n\right\rceil, \qquad
 T_{\max} = \lceil K_T T_2 \ln n\rceil.
\end{equation}
\textsc{FindSource} (\Cref{alg:sourcefinding}) uses $\widetilde{\mathcal M}_q$ and its inverse in each search. For the stopping rule, the circuit computing a classification and its inverse each use $s_n$ queries, while $\widetilde{\mathcal M}_q$ and its inverse each use $2 q$ queries.

We first analyze the procedure obtained by replacing $\widetilde{\mathcal M}_q$ and its inverse with $\mathcal M_q$ and its inverse. \Cref{lem:source-implementation} bounds the resulting change in output probabilities.

\begin{lemma}\label{lem:source-implementation}
For every input and fixed table of positions, each output probability differs by $o(1)$ between the procedure using $\mathcal M_q$ and its inverse and the procedure using $\widetilde{\mathcal M}_q$ and its inverse, uniformly over the input and table. The unitary circuit for \textsc{FindSource} makes exactly $T_{\max}$ queries on every input. Measuring its output register gives the output distribution of \textsc{FindSource}, and reversing its gates gives the inverse circuit.
\end{lemma}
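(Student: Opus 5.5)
The plan is a hybrid argument over the sequence of oracle-circuit applications in \textsc{FindSource}, followed by a padding construction for the query count.

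\emph{Step 1: per-application error.} The construction of $\widetilde{\mathcal M}_q$ in \Cref{subsec:membership} should give an operator-norm error bound $\|\widetilde{\mathcal M}_q - \mathcal M_q\| \le \epsilon$, uniformly over the input and the table. Here the restriction to each list value $W$ and tested vertex is what matters. The source of the error is the direction tests, each of which fails with probability $O(n^{-5})$. The membership test for $v$ against $W$ chains at most $n$ direction tests plus classifications, which are computed deterministically from the fixed table. Coherently, the failure amplitude of a chain is bounded by the sum of the square roots of the failure probabilities, giving $\epsilon = O(n\cdot n^{-5/2}) = O(n^{-3/2})$. If the per-test guarantee is only in probability, I would first argue that each direction-test circuit, after uncomputing garbage, is $O(n^{-5/2})$-close in norm to its ideal version. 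The inverse circuits inherit the same bound.

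\emph{Step 2: counting applications.} Every application of $\mathcal M_q$ or its inverse costs $2q \ge 2$ queries, since $t_v\ge s_n\ge1$ forces $q\ge1$ in any nontrivial call. The stopping rule caps the total at $T_{\max}$, so the number of applications along any branch is at most $T_{\max}/2 = O(\sqrt n\ln^2 n)$. The classical control (measurements of search outcomes, appending to $W$) is identical in both procedures. I would therefore purify it, viewing \textsc{FindSource} as a single unitary followed by measurement, as the statement itself asserts. The standard hybrid argument then bounds the distance between final states by $(T_{\max}/2)\,\epsilon = O(\sqrt n\ln^2 n\cdot n^{-3/2}) = o(1)$, and output probabilities differ by at most twice this.

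\emph{Step 3: the exact query count.} The stopping rule returns $1$ before any circuit would exceed $T_{\max}$. The remaining budget depends only on classical data already computed, namely the earlier query counts, which are themselves fixed functions of the control path. Each branch can thus be padded with dummy queries, for instance queries whose results are immediately uncomputed, or queries to a fixed index acting trivially, until exactly $T_{\max}$ queries are made. Deferring all intermediate measurements yields a unitary circuit whose final measurement reproduces the output distribution. Inverting the gate sequence gives the inverse circuit, at the same count $T_{\max}$.

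The main obstacle is Step 1, specifically showing that errors of the bounded-error subroutines combine in norm (amplitude) rather than only in probability, so that $n^{-5}$ survives as $n^{-5/2}$ per test. A further concern is that the hybrid count in Step 2 must remain valid even though the query-capped truncation acts differently on different branches. I would handle the truncation by checking that the cap decision depends only on classical registers shared by both procedures.
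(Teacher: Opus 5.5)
Your skeleton matches the paper's: purify the classical control, telescope over the occurrences of membership circuits with orthogonal control subspaces, then use the fixed-query construction of \Cref{app:implementations}. However, Step~1 as stated is false, and Step~2 needs an argument you have not given.

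\textbf{Step~1.} The bound $\|\widetilde{\mathcal M}_q-\mathcal M_q\|\le\epsilon$ does not hold in operator norm. $\mathcal M_q$ acts as the identity on the workspace and on list states whose stored bits $g_d$ are not the true classifications. By contrast, $\widetilde{\mathcal M}_q=\mathcal V_q^\dagger\mathsf C\mathcal V_q$ can add a different two-bit value to the response in either situation: when started with a nonzero internal answer register, or on such a list state. The difference can therefore have norm of order one. \Cref{lem:membership-implementation} bounds the error only on the range of $\Pi_{\mathrm{work}}\Pi_{\mathrm{list}}$.

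The missing ingredient is the invariant the paper checks explicitly. In the circuit using $\mathcal M_q$, every occurrence is reached with its workspace at zero and the list register in the range of $\Pi_{\mathrm{list}}$. This holds because appends store classifications computed from the fixed table, and searches and their reflections leave the list register unchanged. Telescoping evaluates each difference only on these ideal intermediate states, so this invariant suffices.

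Within that subspace, your amplitude summation over the $\widetilde{\mathcal C}_{L_\ell}$ calls is a valid alternative to the paper's union bound plus \Cref{lem:unitary-approximation}. But the number of such calls in one membership test is not at most $n$, since each compared pair can use up to $\lceil\log_2 M\rceil+1$ stages. Bound it instead by $q<12T_2$ (each call uses a query). This gives $\epsilon=O(T_2n^{-5/2})$, compared with the paper's $2\sqrt{12T_2}\,n^{-2}$.

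\textbf{Step~2.} ``The standard hybrid argument'' does not directly give $(T_{\max}/2)\,\epsilon$. The identity \eqref{eq:unitary-telescoping} sums over slots of the laid-out circuit, and orthogonality of control subspaces makes each slot cost at most $\epsilon$. Different branches, however, perform their $k$th membership circuit in different slots, so a per-branch cap does not bound the number of slots. To use your count, you must either realign the circuit so that the $k$th application on every branch occupies a common slot, or induct over the branching tree. The induction uses that later gates are block diagonal in the already-written return registers, which gives error at most $\epsilon$ times the maximal number of applications along a path.

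The paper avoids this altogether. By \eqref{eq:search-circuit-count} there are $O(\sqrt n)$ occurrences per search and at most $n-1$ searches, hence $O(n^{3/2})$ slots. With its per-occurrence error, the total is $O(n^{-1/4}\sqrt{\ln n})$. With your stated $\epsilon=O(n^{-3/2})$, that crude count would give only $O(1)$. Your route therefore genuinely depends on either the branchwise argument or the sharper $\epsilon$ obtained from the $q$ bound.

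\textbf{Step~3.} The same alignment issue appears here. Padding each branch to $T_{\max}$ queries yields a circuit with exactly $T_{\max}$ oracle calls only after the $t$th query of every branch is routed through a common oracle call. That routing is precisely the construction in \Cref{app:implementations}.
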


The proof of \Cref{lem:source-implementation} is given in \Cref{app:source-implementation}. We can now analyze the whole procedure $\mathrm{FindSource}$.

\begin{lemma}\label{lem:block-finding}
For all sufficiently large $n$, if $x$ satisfies the term $(i,S)$, then $\mathrm{FindSource}$ returns the block index $i$ with probability at least $5 / 6$, where the probability is over the random positions and the measurement outcomes of the procedure. Its output lies in $[n]$ on every input. The unitary circuit constructed in \Cref{app:source-implementation} has a fixed query count $O(\sqrt{n} \ln^2 n)$ on every input; measuring its output register gives the output distribution of $\mathrm{FindSource}$.
\end{lemma}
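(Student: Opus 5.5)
We will first analyze the idealized procedure in which every $\widetilde{\mathcal M}_q$ is replaced by the exact $\mathcal M_q$. We then invoke \Cref{lem:source-implementation} to transfer the conclusions to the actual circuit at an $o(1)$ cost in each output probability. The structural claims are immediate from the pseudocode and \Cref{lem:source-implementation}. Every return statement outputs either $1$ or some $w_{|W|}\in[n]$. The unitary circuit makes exactly $T_{\max}=O(T_2\ln n)=O(\sqrt n\ln^2 n)$ queries, and measuring its output register reproduces the distribution of \textsc{FindSource}.

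For correctness on a positive input satisfying $(i,S)$, we bound the failure probability by a union of four events.
\begin{enumerate}
\item[(a)] The classification test fails. By \Cref{app:block-classification} this has probability at most $1/12$. Outside this event, $i$ is the source of the modified graph, and \eqref{eq:edge-square-sum} holds via \Cref{lem:first-hit-sum}.
\item[(b)] Some direction test errs. Each test has error $O(n^{-5})$, and at most polynomially many tests are invoked within the budget, so a union bound gives $o(1)$.
\item[(c)] Some call to $\mathrm{Search}$ returns $\bot$ while $U(W)\neq\emptyset$. The sizes $m_h$ strictly decrease along the run, so each size occurs at most once. By \eqref{eq:false-emptiness} this event therefore has probability at most $\sum_m\delta_m<1/100$.
\item[(d)] The budget $T_{\max}$ is exhausted. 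This is handled by Markov's inequality, as described next.
\end{enumerate}
Outside all four events, the argument following \eqref{eq:candidate-update} shows that the procedure returns $i$. It never returns $1$ spuriously, since by \Cref{lem:full-search} every returned index lies in $U(W)\subseteq[n]\setminus W$.

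The main step is bounding the expected total query count of the idealized procedure by $O(\sqrt n\ln n)$. Markov's inequality then shows that, for $K_T$ large, exceeding $T_{\max}=\lceil K_T T_2\ln n\rceil$ has probability at most $1/100$.

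First, the selections made by $\mathrm{Search}$ satisfy \eqref{eq:search-distribution}, so the candidate process of \Cref{lem:candidate-estimates} applies. Classifications cost $s_n$ per appended vertex. By the first estimate of \eqref{eq:completion-profile}, their expected total is $O(\mathbb E R\cdot\ln n)=O(\ln^2 n)$.

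Second, each search in $U_h$ with $m_h\ge1$ costs $O\bigl(T_2 m_h^{-1/2}(1+\ln(n/m_h))\bigr)$ in expectation by \eqref{eq:search-expectation}. The final search with $m_h=0$ costs $O(T_2\ln n)$ in the worst case, which is $O(\sqrt n\ln^2n)$ and fits the budget anyway. Equation \eqref{eq:search-expectation} requires $\sum_v t_v(W_h)^2\le T_2^2$. By the monotonicity in \Cref{lem:membership-counts} we have $t_v(W_h)\le t_v(W_{\mathrm{end}})$. The third estimate of \eqref{eq:completion-profile} gives $\mathbb E\sum_v t_v(W_{\mathrm{end}})^2=O(n\ln^2n)$, so with $K_2$ large the hypothesis holds except with small probability by Markov; that exceptional event is added to the union bound.

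The obstacle is the factor $1+\ln(n/m_h)$. Summing $m_h^{-1/2}\ln(n/m_h)$ is not directly covered by the second estimate of \eqref{eq:completion-profile}. I would try the crude bound $\ln(n/m_h)\le\ln n$, which gives total search cost $O(T_2\ln n)=O(\sqrt n\ln^2 n)$. That matches the order of $T_{\max}$, and Markov's inequality with a large $K_T$ still yields failure probability at most $1/100$.

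Adding the contributions, the failure probability is at most $1/12+1/100+1/100+o(1)$, so the success probability is at least $5/6$ for large $n$ once constants are fixed. The $o(1)$ from \Cref{lem:source-implementation} is absorbed by the slack.
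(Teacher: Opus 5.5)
Your outline matches the paper's proof in its main components. You analyze the procedure with the exact circuits $\mathcal M_q$ and charge $1/12$ to the classification test. You use distinctness of the positive candidate-set sizes to bound a false $\bot$ by $\sum_m \delta_m < 1/100$. You bound the search cost with the same crude estimate $\ln(n/m_h) \le \ln n$ together with $\mathbb E\sum_h m_h^{-1/2} = O(1)$, apply Markov to the budget $T_{\max}$, and transfer to $\widetilde{\mathcal M}_q$ via \Cref{lem:source-implementation}.

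There is, however, a circularity. Every guarantee of \Cref{lem:full-search} that you use holds for a call on $U(W)$ only when $\sum_v t_v(W)^2 \le T_2^2$. This covers the selection bound \eqref{eq:search-distribution}, the expected cost \eqref{eq:search-expectation}, and the failure bound $\delta_m$. Without that hypothesis, the per-step bound $C/|U_h|$ of \Cref{def:candidate-process} can fail: if few candidates are resolved below the thresholds, the output concentrates on them. So \Cref{lem:candidate-estimates} is not available for the actual run. You nevertheless assert that the selections satisfy \eqref{eq:search-distribution}, so that the candidate process applies. You then use the third estimate of \Cref{lem:candidate-estimates} plus Markov to conclude that $\sum_v t_v(W_h)^2 \le T_2^2$ holds with high probability. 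That uses the lemma's conclusion to establish its own hypothesis. Adding the exceptional event to the union bound does not help, because the bound on $\mathbb E\sum_v t_v(W_{\mathrm{end}})^2$ you feed to Markov already presupposes the selection bound at every step, i.e., that the exceptional event never occurs.

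The missing ingredient is the paper's analysis-only truncation: halt the analyzed run right after an append (including the initial choice) as soon as $\sum_v t_v(W)^2 > T_2^2$. Every search actually executed in this truncated run then satisfies the hypothesis of \Cref{lem:full-search}. So the run is an instance of \Cref{def:candidate-process}, with halting as its stop option, and \Cref{lem:candidate-estimates} applies legitimately. Here $W_{\mathrm{end}}$ includes the vertex whose append triggered the halt. That vertex was chosen by a search with a valid list, and \eqref{eq:membership-completion-update} controls the increase it causes. Monotonicity and Markov then show the halt fires with probability at most $1/100$, and the first two estimates give $\mathbb E T' = O(T_2 \ln n)$ for the truncated run. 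A coupling shows that the real run enforcing $T_{\max}$ agrees with it except in three cases: the halt fires, some search on a nonempty set returns $\bot$, or $T' > T_{\max}$. With this repair your union bound goes through.

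Two minor points. First, item (b) should be dropped. In the idealized analysis the comparisons are exact, and their errors are precisely what the $o(1)$ of \Cref{lem:source-implementation} accounts for. A classical union bound over erring tests is also not meaningful for comparisons run coherently inside amplitude amplification. Second, the target for the expected query count is $O(T_2\ln n) = O(\sqrt n \ln^2 n)$, as your own later computation shows, not $O(\sqrt n \ln n)$.
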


\begin{proof}
Fix $x$ satisfying $(i,S)$ and first fix a table for which the classification test succeeds. Then $i$ is the source of $\mathcal G$ and \eqref{eq:edge-square-sum} holds. Analyze \textsc{FindSource} using $\mathcal M_q$ without the stopping rule at $T_{\max}$. For the analysis only, the run halts after a vertex is appended and classified if $\sum_v t_v(W)^2 > T_2^2$, including after the initial choice. Let $W_{\mathrm{end}}$ be the final list, including any vertex whose append caused this stop, and let $T'$ be the total query count of this run.

The additional stopping rule is tested only between calls to \textsc{Search}. Each call uses a list $W$ satisfying $\sum_v t_v(W)^2 \leq T_2^2$ and is executed until it returns an output. Fix the outcomes preceding a call. If $U(W) \neq \emptyset$, \Cref{lem:full-search} bounds the probability of returning each fixed vertex in $U(W)$ by $C / |U(W)|$. The initial choice is uniform, and a search on an empty set returns $\bot$ with certainty. Thus \Cref{lem:candidate-estimates} applies to the sequence of selections, including the last appended vertex: its selection obeys the same probability bound, and \eqref{eq:membership-completion-update} bounds the resulting increases in $t_v(W)$.

For the final list of selected vertices $W_{\mathrm{end}}$, the value $t_v(W_{\mathrm{end}})$ is the query count of the membership test for $v$ with correct comparison answers. By \eqref{eq:membership-completion-update}, $t_v(W) \leq t_v(W_{\mathrm{end}})$ at every earlier step. Markov's inequality and the last estimate in \eqref{eq:completion-profile} give
\begin{equation}\label{eq:completion-profile-tail}
 \Pr\left[\sum_{v = 1}^n t_v(W_{\mathrm{end}})^2 > T_2^2\right] \leq \frac{1}{100}
\end{equation}
for sufficiently large $K_2$. After the initial choice, the positive sizes of the candidate sets are distinct and less than $n$. Suppose the outcomes preceding a search with $m > 0$ candidates are fixed. The probability that this search returns $\bot$ is then at most $\delta_m$. A union bound and \eqref{eq:false-emptiness} therefore bound the probability that any search returns $\bot$ on a nonempty candidate set by $1 / 100$.

For the query count, recall that $U_h = U(W_h)$ is the candidate set at step $h$ of the analyzed run and $m_h = |U_h|$. For each search with $m_h > 0$, \eqref{eq:search-expectation} bounds the expected query count, with the earlier outcomes fixed, by $O(T_2 \ln n / \sqrt{m_h})$. There is at most one search on an empty set, using $O(T_2 \ln n)$ queries. At most $R$ vertices are classified, each using $s_n = O(\ln n)$ queries. Applying the first two estimates in \Cref{lem:candidate-estimates} gives
\begin{equation}\label{eq:completed-search-query-count}
 \mathbb E T' = O\left(s_n \mathbb E R + T_2 \ln n\left(1 + \mathbb E \sum_{h:m_h > 0} m_h^{-1 / 2}\right)\right) = O(T_2 \ln n).
\end{equation}
Markov's inequality gives $\Pr[T' > T_{\max}] \leq 1 / 100$ for sufficiently large $K_T$.

Compare the run of \textsc{FindSource} using $\mathcal M_q$ and enforcing the maximum query count $T_{\max}$ with the analyzed run, using the same initial vertex and measurement outcomes whenever both runs execute the same circuit. Before the analyzed run reaches its additional stopping rule, a first difference can occur only if the run enforcing $T_{\max}$ returns $1$ before a circuit would make its query count exceed $T_{\max}$. The analyzed run completes that circuit, implying $T' > T_{\max}$. With probability at least $24 / 25$, neither stopping rule is reached and every search on a nonempty set returns a vertex. On this event the runs agree.

The source remains a candidate until selected, and each selection strictly decreases the candidate set by \eqref{eq:candidate-update}. Thus the source is eventually appended. It is then the last entry of the list and the candidate set is empty, so the procedure returns it.

By \Cref{lem:source-implementation}, for every table on which the classification test succeeds, \textsc{FindSource} returns $i$ with probability at least $24 / 25 - o(1)$. Together with the $11 / 12$ probability that the classification test succeeds, this gives total success probability at least $5 / 6$ for sufficiently large $n$. The same lemma gives fixed query count $T_{\max} = O(\sqrt{n} \ln^2 n)$ on every input.
\end{proof}

\subsection{Algorithms for \texorpdfstring{$H$ and $F_n$}{H and F\_n}}

The algorithm for $H$ runs \textsc{FindSource} and then searches the returned block $i$, with error at most $1 / 100$, for a position $p$ with $x_i(p) \wedge y_i(p) = 1$. It checks a reported position and returns one only if the check succeeds; otherwise it returns zero.

\begin{lemma}\label{lem:query-upper}
The algorithm for $H$ has error at most $1 / 4$ on its domain and maximum query count $O(\sqrt{n} \ln^2 n)$ on every input. It has a unitary implementation with fixed query count $T_H = O(\sqrt{n} \ln^2 n)$. Measuring the circuit's output register gives the algorithm's output distribution, and reversing its gates gives the inverse circuit.
\end{lemma}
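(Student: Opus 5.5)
The plan is to compose the unitary circuit for \textsc{FindSource} from \Cref{lem:block-finding} with a fixed-length amplitude-amplified Grover search inside the returned block, followed by one classical check. Concretely, run the circuit of \Cref{lem:block-finding} (fixed query count $T_{\max}=O(\sqrt n\ln^2 n)$) and measure the output register to get $i\in[n]$; since every output lies in $[n]$, the next step is always well defined. Then run a bounded-error search over $p\in[n]$ for a marked position with $x_i(p)\wedge y_i(p)=1$, where each oracle call costs two queries. Repeating a standard Grover search with $O(\sqrt n)$ queries a constant number of times (say $\lceil\log_2 100\rceil$ independent attempts, each checked) brings the error below $1/100$ using $O(\sqrt n)$ queries. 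Finally, query $x_i(p)$ and $y_i(p)$ for the reported $p$ and output $1$ iff both are one. The total query count is $O(\sqrt n\ln^2 n)+O(\sqrt n)+2=O(\sqrt n\ln^2 n)$ on every input, because every component has a fixed query count independent of the input.

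For correctness on the domain, fix $(x,y)$ with $x$ satisfying $T_{i,S}$, so that $S=\{p: x_i(p)=1\}$ and $H(x,y)=\bigvee_{p\in S}y_i(p)=\bigvee_{p}(x_i(p)\wedge y_i(p))$. I would treat the two values of $H$ separately.

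\emph{Case $H(x,y)=0$.} The algorithm never errs: whatever block is returned, it outputs $1$ only after verifying $x_{i'}(p)=y_{i'}(p)=1$ at some position. Such a verification is impossible in block $i$, but it can happen in a wrong block $i'\neq i$. So if \textsc{FindSource} errs (probability at most $1/6$), the output may be wrong, and the error for $H=0$ is at most $1/6$.

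\emph{Case $H(x,y)=1$.} Failure requires either that \textsc{FindSource} errs (probability at most $1/6$) or that the search misses a marked element in block $i$ (at most $1/100$). The error is therefore at most $1/6+1/100<1/4$.

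For the unitary implementation, defer all measurements. Keep the output register of the \textsc{FindSource} circuit coherent, use it as a control register that selects the block in the Grover iterations, and compute the final check into a fresh output qubit. All components have fixed query counts, and \Cref{lem:block-finding} supplies the unitary for \textsc{FindSource} together with its inverse. Hence the composite circuit has fixed count $T_H=O(\sqrt n\ln^2 n)$, measuring its output qubit reproduces the algorithm's distribution, and reversing the gates gives the inverse.

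The main obstacle is the bookkeeping in this deferred-measurement step. Controlled Grover on a superposed block index must use a fixed number of iterations that does not depend on the unknown number of marked positions. I would handle this with the standard trick of iterating over random iteration counts, implemented as a uniform superposition over counts with a control on the count register and an upper bound $O(\sqrt n)$ on the iterations. The intermediate measurements become ancilla registers that are never uncomputed, which is harmless because only the output register is measured.
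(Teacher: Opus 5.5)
Your proposal is correct and is essentially the paper's proof: \textsc{FindSource} returns the satisfied block with probability at least $5/6$, and a verified bounded-error search in that block with error at most $1/100$ gives total error at most $1/6+1/100<1/4$. The unitary version keeps the block-index register coherent as the control of a fixed-count search, using $T_{\max}+O(\sqrt n)=O(\sqrt n\ln^2 n)$ queries.

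Two harmless slips, neither affecting the bounds:
\begin{itemize}
\item A clean oracle call to $x_i(p)\wedge y_i(p)$ needs more than two queries to $(x,y)$. The paper uses four, computing both bits and then uncomputing them.
\item Your $H=0$ case should simply state the error bound $1/6$, instead of opening with ``never errs''.
\end{itemize}
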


\begin{proof}
Suppose $x$ satisfies $(i,S)$. By \Cref{lem:block-finding}, \textsc{FindSource} returns $i$ with probability at least $5 / 6$. For this block, $x_i(p) \wedge y_i(p) = 1$ if and only if $p \in S$ and $y_i(p) = 1$.

The search in the returned block uses $O(\sqrt{n})$ queries by \cite[Theorem~3]{BuhrmanCleveDeWolfZalka}. Each query to the string $(x_i(p) \wedge y_i(p))_{p \in [n]}$ is implemented with four queries to $(x,y)$: the circuit queries $x_i(p)$ and $y_i(p)$, adds their conjunction to the response bit, and uncomputes the two queried bits while leaving the register containing $i$ unchanged. Hence the algorithm for $H$ has error at most $1 / 6 + 1 / 100 < 1 / 4$ and maximum query count $O(\sqrt{n} \ln^2 n)$.

By \Cref{lem:block-finding}, the circuit for \textsc{FindSource} stores the returned block index in its output register. This register controls the subsequent search without being measured. The search has fixed query count $O(\sqrt{n})$ and stores its outcomes as in \Cref{app:implementations}. The resulting circuit has fixed query count $T_H = O(\sqrt{n} \ln^2 n)$ on every input and the stated output distribution.
\end{proof}

The following reduction then gives the query bound for $F_n$.
\begin{lemma}\label{lem:query-transfer}
Suppose an algorithm for $H$ has error at most $1 / 4$ on its domain and has a unitary implementation making exactly $T_H$ queries on every input. Then
\[
 \operatorname{Q}(F_n) = O\left(r T_H + \sqrt{n r} \ln n\right).
\]
\end{lemma}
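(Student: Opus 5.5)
We will give a quantum algorithm for $F_n$ that first computes the address $(\beta_1,\ldots,\beta_r)$ and then verifies the addressed cheat-sheet cell. The standard approach would run $r$ independent copies of the $H$-algorithm, each amplified to error $O(1/r)$, and this costs the extra $\ln r$ factor. To avoid it, we will use the unitary implementation with fixed count $T_H$ together with the fact that errors can be tolerated as long as the final verification catches them. The target bound is $O(rT_H)$ for the address phase and $O(\sqrt{nr}\ln n)$ for the verification phase.

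\textbf{Step 1: verification of a cell.} Fix an address $b\in\{0,1\}^r$. By \Cref{lem:description-verification}, the condition that cell $b$ contains accepting descriptions is a conjunction of $rm=O(rn)$ clauses, each evaluable deterministically with $O(\ln n)$ queries. Grover search for a violated clause (\cite{BuhrmanCleveDeWolfZalka}) therefore decides this with $O(\sqrt{rn}\ln n)$ queries and any constant error. A cell check that succeeds certifies that $H(x^{(i)},y^{(i)})=b_i$ for all $i$, by part~1 of the lemma, and hence certifies $F_n=1$. So if the verification is accurate, the algorithm never accepts a negative input regardless of which address is tried.

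\textbf{Step 2: finding the address without the $\ln r$ factor.} We will treat the address computation as a search problem. Let $\mathcal A$ be the circuit that runs the $r$ unitary $H$-circuits in sequence, leaving each output in its own register, and then applies the cell verification with success amplitude coherently. It uses $rT_H$ queries. On a positive input, each copy outputs the correct $\beta_i$ with probability at least $3/4$. The copies are independent, so the probability that all $r$ are correct is only $(3/4)^r$, which is too small to amplify directly.

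Instead we will use the following idea, and we expect it to be the main obstacle. Correct the address copy by copy, using the per-copy descriptions as a check. For each $i$, the descriptions $z^{(i)}$ stored in cell $b$ certify $\beta_i$ only if $b_i=\beta_i$. We therefore plan a ``check-and-fix'' procedure. Repeat each $H$-copy a constant number of times and take a majority vote, which drives the per-copy error to a small constant $\epsilon$. The expected number of wrong bits is then $\epsilon r$. Next, find the wrong bits by a quantum search over $i\in[r]$ for an index whose part of cell $b$ is rejected; the cost of each check is that block's share of the verification. Flip the located bit and recompute.

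What must be justified is that the total cost of locating the $O(\epsilon r)$ wrong bits stays within $O(\sqrt{nr}\ln n + rT_H)$. We plan to handle this with variable-time or successively doubled Grover searches: locating $j$ marked items among $r$ blocks costs $O(\sqrt{rj}\cdot\sqrt{n/r}\ln n)$ in total. Alternatively, each suspected copy can simply be rerun, at cost $O(T_H)$ per wrong bit, for $O(\epsilon r T_H)$ overall.

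\textbf{Step 3: assembly.} Finally, run the procedure with a fixed query budget $O(rT_H+\sqrt{nr}\ln n)$. Output $1$ exactly when the verification of the final address accepts. Markov's inequality on the number of wrong bits, a union bound over the constant error of each search, and constant-factor amplification then give overall error at most $1/3$. Negative inputs are safe by Step 1, and positive inputs succeed with constant probability.
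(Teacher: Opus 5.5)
\textbf{There is a genuine gap in Step 2}, which is exactly the step meant to remove the $\ln r$ factor.

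Your check-and-fix procedure treats a rejected part $z^{(i)}$ of cell $b$ as evidence that $b_i$ is wrong. But $F_n$ constrains only the cell at the true address $(\beta_1,\ldots,\beta_r)$. In any other cell, acceptance of part $i$ does certify $b_i=\beta_i$, but rejection certifies nothing. Such a cell may hold arbitrary bits, for instance all zeros, in which case every part rejects. A search for a rejected part therefore returns an arbitrary index rather than a wrong bit. Flipping that bit moves you to another unrelated cell, so your variable-time search has no meaningful marked set, and that set does not persist across iterations.

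The fallback of rerunning ``suspected'' copies fails for the same reason: nothing identifies the suspects. Rerunning every copy until each has error $O(1/r)$ is precisely the $O(\ln r)$ overhead you want to avoid. Markov's inequality in Step 3 only bounds the number of wrong bits by $O(\epsilon r)$ with constant probability, while you need it to be zero. With constant $\epsilon$ and $r=\Theta(\ln n)$, the probability of no errors is $n^{-\Theta(\epsilon)}$.

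\textbf{The missing ingredient is a recovery theorem for noisy quantum oracles.} The paper uses \cite[Theorem~3]{BuhrmanNewmanRohrigDeWolf}. Place the $H$-circuit under a selection register $\ket{\ell}$ that chooses which of the $r$ inputs it acts on; its output $b$ then satisfies $\|\psi_{\ell,\beta_\ell}\|^2\ge 3/4$ on the domain. That theorem recovers the whole string $(\beta_1,\ldots,\beta_r)$ with probability at least $2/3$ using only $O(r)$ applications of this controlled circuit or its inverse. This is where the hypothesis of a unitary implementation with fixed query count $T_H$ is really used, and it gives $O(rT_H)$ for the address.

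After that, your Steps 1 and 3 match the paper's argument. You run a Grover search with error at most $1/200$ over the $O(nr)$ checks of the addressed cell, at $O(\sqrt{nr}\ln n)$ queries, and verify each reported rejection. You output one exactly when no rejected check is found. Negative inputs are safe because every cell rejects somewhere. Positive inputs succeed whenever the address is recovered correctly.
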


When the $r$ inputs lie in the domain of $H$, the recovery algorithm of \cite[Theorem~3]{BuhrmanNewmanRohrigDeWolf} recovers the address using the circuit for $H$ and its inverse $O(r)$ times. The addressed cell has $O(n r)$ checks of $O(\ln n)$ queries each, so searching for a rejected check gives the second term. The proof of \Cref{lem:query-transfer} is given in \Cref{app:query-transfer}.

The algorithm for $F_n$ recovers an address using this circuit for $H$, checks that cell, and returns its acceptance bit. By \Cref{lem:query-transfer} and $r = \Theta(\ln n)$,
\[
 \operatorname{Q}(F_n) = O\left(r T_H + \sqrt{n r} \ln n\right) = O(\sqrt{n} \ln^3 n).
\]

\section{Exact and zero-error quantum query complexity}\label{sec:exact-query}

The exact algorithm first finds the satisfied DNF term in each input to $H$, or determines that none is satisfied. \textsc{FindTerm} (\Cref{alg:exact-term}) performs this search; \textsc{Evaluate} (\Cref{alg:exact-evaluation}) uses it to compute $F_n$. A restriction of the count fields gives the matching lower bound for zero-error algorithms in \Cref{thm:exact-zero-error}.

\subsection{A randomized algorithm for finding a satisfied term}

To find a satisfied term of $f$, \textsc{Eliminate} (\Cref{alg:elimination}) compares pairs of blocks. If $(i,S)$ is satisfied, its zero constraints give $m_{ij} < m_{ji}$ for every $j \neq i$. A comparison with block $j$ that reaches position $m_j(S)$ therefore keeps $i$.

To compare an ordered pair of blocks $(i,j)$ at prefix length $\ell$, the procedure queries $x_i(P_j(t))$ and $x_j(P_i(t))$ for $1 \leq t \leq \min\{\ell,M\}$. It keeps the block with the smaller least nonzero index, assigning index $\ell + 1$ to a prefix containing only zeros. A tie is resolved in favour of the first block in the ordered pair.

Let $L = \lceil\log_2 n\rceil$ and $N = 2^L$. The labels $n + 1,\ldots,N$ are auxiliary candidates. A block in $[n]$ is kept when compared with an auxiliary label; between two auxiliary labels, the first is kept. These comparisons use no queries. The procedure records its permutations, candidate sets, query answers, and verification result; write $\omega$ for this record.

\begin{algorithm}[ht]
\caption{Elimination and verification}\label{alg:elimination}
\begin{algorithmic}[1]
\Procedure{Eliminate}{$x$}
  \State $U_0 \gets [N]$
  \For{$a = 0,\ldots,L - 1$}
    \State Choose a fresh uniform permutation of $U_a$ and pair consecutive entries
    \State Compare each pair at prefix length $32 \cdot 2^a$ and let $U_{a + 1}$ contain the retained labels
  \EndFor
  \State Read the sole remaining block $x_i$ and let $S$ be its support
  \If{$|S| = k$}
    \State Check $x_j(P_i(t)) = 0$ for every $j \neq i$ and $1 \leq t \leq m_j(S)$
    \State \Return $(i,S)$ if all checks accept
  \EndIf
  \State \Return $\bot$
\EndProcedure
\end{algorithmic}
\end{algorithm}

At least one label in $[n]$ remains after each round, so the final label specifies a block. The verification tests all literals of $T_{i,S}$, and hence every returned term is satisfied. On a positive input, \textsc{Eliminate} returns the satisfied term if and only if its block is retained in every round.

At round $a$, there are $N / 2^{a + 1}$ comparisons, each using at most $64 \cdot 2^a$ queries. The final verification uses at most $n + M$ queries by \Cref{lem:first-hit-sum}. The query count is at most
\begin{equation}\label{eq:elimination-query-count}
 T_n = 32 N L + n + M = O(n \ln n).
\end{equation}

\subsection{Exact and zero-error bounds for \texorpdfstring{$F_n$}{F\_n}}

We use the record of \textsc{Eliminate} to obtain a circuit whose acceptance probability is $f(x) / 4$. One iteration of amplitude amplification will then give exact search.

For an accepted record $\omega$, let $(i,S)$ be its returned term and let $U_a$ be its candidate set before round $a$. Define
\begin{equation}\label{eq:elimination-retention-probability}
 p_a(\omega) = 1 - \frac{|\{j \in U_a \setminus \{i\} : j \leq n \text{ and } m_j(S) > 32 \cdot 2^a\}|}{2(|U_a| - 1)},
 \qquad 0 \leq a < L.
\end{equation}
Fix a positive input with satisfied term $(i,S)$ and the choices before round $a$, with $i \in U_a$. The fresh uniform permutation of $U_a$ pairs $i$ with each other label with probability $1 / (|U_a| - 1)$, and either order in the pair is equally likely. For a block $j$ counted in the numerator of \eqref{eq:elimination-retention-probability}, both queried prefixes contain only zeros, so the tie rule discards $i$ when it occurs second. Every other comparison retains $i$, including comparisons with auxiliary labels. This gives the conditional retention probability $p_a(\omega)$.

Once verification returns $(i,S)$, the recorded candidate sets and the fixed lists determine all the values $p_a(\omega)$ without further queries. We assign to the accepted record the product of the reciprocal retention probabilities for its rounds:
\begin{equation}\label{eq:elimination-weight}
 \mu(\omega) = \prod_{a = 0}^{L - 1}p_a(\omega)^{-1},
\end{equation}
and set $\mu(\omega) = 0$ on rejection.

\begin{lemma}\label{lem:weighted-success}
For every input $x$, the record $\omega$ of \textsc{Eliminate} satisfies $0 \leq \mu(\omega) \leq 2$, and
\begin{equation}\label{eq:weighted-success}
 \mathbb E_\omega \mu(\omega) = f(x),
\end{equation}
where the expectation is over the permutation choices.
\end{lemma}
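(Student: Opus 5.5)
We argue by cases on $f(x)$, and in the positive case by an inverse-probability (Horvitz--Thompson) argument over the rounds of \textsc{Eliminate}.

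\emph{Negative inputs.} If $f(x)=0$, no term is satisfied. The final verification of \textsc{Eliminate} tests every literal of $T_{i,S}$, so every run returns $\bot$. Hence $\mu(\omega)=0$ with certainty, which gives $0\le\mu\le 2$ and $\mathbb E_\omega\mu(\omega)=0=f(x)$.

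\emph{Positive inputs: accepting records.} Now let $x$ satisfy the unique term $(i,S)$, unique by \Cref{lem:dnf-unambiguous}. As noted above, \textsc{Eliminate} returns $(i,S)$ exactly when block $i$ is retained in every round. Whenever it returns a term, that term is $(i,S)$, so an accepting record is always one in which $i$ survived all $L$ rounds.

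\emph{Expectation.} Let $E_a$ be the event that $i\in U_{a+1}$. By the computation preceding the lemma, conditioned on the history up to round $a$ with $i\in U_a$, we have $\Pr[E_a\mid\text{history}]=p_a(\omega)$. Here $p_a$ depends only on $U_a$, $i$, $S$ and the fixed lists, so it is determined by the history before round $a$. Define $Z_b=\mathbf 1[i\in U_b]\prod_{a<b}p_a^{-1}$, with $Z_0=1$. Then
\[
\mathbb E[Z_{b+1}\mid\text{history before round } b]=Z_b\,p_b^{-1}\,p_b=Z_b ,
\]
so $Z_b$ is a martingale and $\mathbb E Z_L=1$. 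Since $Z_L=\mu(\omega)$ on accepting records and both vanish otherwise, $\mathbb E\mu(\omega)=1=f(x)$. For this to be valid, each $p_a$ must be positive whenever $i\in U_a$; this follows from the bound below.

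\emph{Upper bound $\mu\le 2$.} This is the main obstacle. We need $\prod_a p_a^{-1}\le 2$, i.e.\ $\sum_a -\ln p_a\le\ln 2$. Let $B_a=\{j\in[n]\setminus\{i\}: m_j(S)>32\cdot 2^a\}$. Since the numerator of $p_a$ counts a subset of $B_a$, we have $1-p_a\le |B_a|/(2(|U_a|-1))$. By \Cref{lem:first-hit-sum}, which applies because $|S|=k\ge n/3$, we have $\sum_j m_j(S)\le 8n$, so Markov gives $|B_a|\le 8n/(32\cdot 2^a)=n/2^{a+2}$. Every round halves the label set exactly, so $|U_a|=N/2^a\ge n/2^a$. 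Therefore
\[
1-p_a\le \frac{n/2^{a+2}}{2(n/2^a-1)} .
\]
For $a<L-1$, where $|U_a|\ge 4$, this is at most roughly $1/8\cdot(1+O(2^a/n))\le 1/6$.

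For the last rounds, where $|U_a|$ is small, we use integrality. If $|U_a|=2$ then $N/2^a=2$, so $2^a=N/2\ge n/2$ and $32\cdot 2^a\ge 16n>M$. Since every $m_j\le M+1\le 16n$, we get $B_a=\emptyset$ and $p_a=1$. More generally, $B_a$ is empty once $32\cdot 2^a\ge 8n+1$, because even a single $j$ has $m_j\le 8n$. So the nonzero terms satisfy $2^a<n/4$, where $|U_a|-1\ge 3|U_a|/4$ and hence $1-p_a\le \tfrac{1}{6}\cdot 2^{-(a+?)}$-type bounds hold; more precisely $1-p_a\le \frac{n}{2^{a+2}}\cdot\frac{2^a}{(3/2)N}\le \frac16$.

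A constant per-round bound is not summable over $L$ rounds, so we sharpen the count. The offending $j$ are ones with $m_j(S)>32\cdot 2^a$. Summing over $a$, each $j$ contributes to rounds with $2^a<m_j/32$, giving at most $\sum_{2^a<m_j/32}$ of terms. We therefore bound $\sum_a(1-p_a)$ by
\[
\sum_a\sum_{j\in B_a}\frac{2^a}{(3/2)N}\le \sum_j\frac{2}{3N}\sum_{2^a<m_j/32}2^a\le\sum_j\frac{2}{3N}\cdot\frac{m_j}{16}\le\frac{8n}{24N}\le\frac13 .
\]
With each $p_a\ge 5/6$, we have $-\ln p_a\le (1-p_a)\cdot\tfrac65$. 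Hence $\sum_a -\ln p_a\le 0.4<\ln 2$, and $\mu\le 2$. This geometric summation, rather than a crude per-round bound, is the step I expect to need care.
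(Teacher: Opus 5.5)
Your proof is correct and follows the paper's argument. The forward martingale $Z_b$ is the same inverse-probability argument as the paper's backward induction. Your bound on $\sum_a(1-p_a)$ is also the paper's estimate, with a slightly sharper denominator: you exchange the sums over $a$ and $j$, use $\sum_{2^a<m_j(S)/32}2^a<m_j(S)/16$, and invoke \Cref{lem:first-hit-sum}. The only other difference is in the last step: the paper concludes directly from $\prod_a p_a\ge 1-\sum_a(1-p_a)$. That makes your per-round bound $p_a\ge 5/6$, the logarithmic estimate, and the loose intermediate remarks unnecessary.
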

The proof is given in \Cref{app:exact-preparation}.

On the joint zero state, the circuit $\mathcal A_x$ prepares each permutation register in uniform superposition and performs the comparisons and verification of \Cref{alg:elimination}, storing the candidate sets, query answers, and verification result. It then applies the following rotation to a flag qubit initialized to zero:
\begin{equation}\label{eq:exact-flag-rotation}
 \ket{\omega}\ket{0}
 \longmapsto
 \ket{\omega}\left(\sqrt{1 - \mu(\omega) / 4}\ket{0} + \sqrt{\mu(\omega) / 4}\ket{1}\right).
\end{equation}
The rotation uses no queries. The circuit $\mathcal A_x$ and its inverse each use $T_n$ queries.

Let $\Pi$ project onto the subspace where the flag qubit is in state $\ket{1}$. By \Cref{lem:weighted-success},
\begin{equation}\label{eq:exact-flag-probability}
 \|\Pi \mathcal A_x\ket{0}\|^2 = \frac{1}{4}\mathbb E_\omega \mu(\omega) = \frac{f(x)}{4}.
\end{equation}
Every record obtained with flag outcome $1$ has passed verification. The following procedure uses this acceptance probability to find the term with certainty.

\begin{algorithm}[ht]
\caption{Exact search for a satisfied term}\label{alg:exact-term}
\begin{algorithmic}[1]
\Procedure{FindTerm}{$x$}
  \State Prepare $\mathcal A_x\ket{0}$
  \State Apply $\mathcal A_x(2\ket{0}\bra{0} - I)\mathcal A_x^{-1}(I - 2\Pi)$
  \State Measure the flag qubit and the record $\omega$
  \If{the flag outcome is $1$}
    \State \Return the term $(i,S)$ specified by $\omega$
  \EndIf
  \State \Return $\bot$
\EndProcedure
\end{algorithmic}
\end{algorithm}

\begin{lemma}\label{lem:exact-term}
\Cref{alg:exact-term} returns the satisfied term $(i,S)$ when $f(x) = 1$ and returns $\bot$ when $f(x) = 0$, with certainty. It uses $3 T_n = O(n \ln n)$ queries on every input.
\end{lemma}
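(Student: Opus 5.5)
The plan is to analyze one round of amplitude amplification on the state $\mathcal A_x\ket{0}$, whose good component has squared norm exactly $f(x)/4$ by \eqref{eq:exact-flag-probability}. Write $\mathcal A_x\ket{0} = \sin\theta\,\ket{\psi_1} + \cos\theta\,\ket{\psi_0}$, where $\ket{\psi_1}$ is the normalized projection $\Pi\mathcal A_x\ket{0}$ (when nonzero) and $\ket{\psi_0}$ the normalized projection onto the complement. The operator $Q = \mathcal A_x(2\ket{0}\bra{0} - I)\mathcal A_x^{-1}(I - 2\Pi)$ is the standard Grover iterate of \cite{BrassardHoyerMoscaTapp}. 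On the two-dimensional span of $\ket{\psi_0},\ket{\psi_1}$ it acts as a rotation by $2\theta$, so one application gives $\sin 3\theta\,\ket{\psi_1} + \cos 3\theta\,\ket{\psi_0}$. I would verify this directly: $I-2\Pi$ flips the sign of $\ket{\psi_1}$, and $\mathcal A_x(2\ket0\bra0 - I)\mathcal A_x^{-1} = 2\ket{\Psi}\bra{\Psi} - I$ with $\ket{\Psi} = \mathcal A_x\ket0$ is a reflection about $\ket{\Psi}$. Both reflections preserve the span.

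\emph{Case $f(x) = 1$.} Here $\sin^2\theta = 1/4$, so $\theta = \pi/6$ and $\sin 3\theta = 1$. The final state is then exactly $\ket{\psi_1}$, which lies entirely in the range of $\Pi$, so the flag qubit is measured as $1$ with certainty. Every basis record $\omega$ in the support of $\ket{\psi_1}$ has $\mu(\omega) > 0$. By the definition of $\mu$, such a record has passed verification and returned a term. By the remark after \Cref{alg:elimination}, every returned term is satisfied, and by \Cref{lem:dnf-unambiguous} it equals the unique satisfied $(i,S)$. So the output is $(i,S)$ with certainty.

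\emph{Case $f(x) = 0$.} By \Cref{lem:weighted-success}, $\mathbb E\mu = 0$ with $\mu \ge 0$, so every record has $\mu(\omega) = 0$. The rotation \eqref{eq:exact-flag-rotation} then leaves the flag at $\ket0$, and $\Pi\mathcal A_x\ket0 = 0$. The state $\mathcal A_x\ket0$ is therefore fixed by $I - 2\Pi$ and by the reflection about itself, so $Q$ fixes it. Measuring the flag gives $0$ with certainty, and the procedure returns $\bot$.

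The query count is $T_n$ for the preparation, $T_n$ for $\mathcal A_x^{-1}$ and $T_n$ for $\mathcal A_x$ in $Q$. The reflections $I - 2\Pi$ and $2\ket0\bra0 - I$ use no queries. This totals $3T_n = O(n\ln n)$ by \eqref{eq:elimination-query-count}, independent of the input.

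The only delicate point is that the rotation amplitudes $\sqrt{\mu(\omega)/4}$ must be legitimate, which holds since $0\le \mu\le 2$ by \Cref{lem:weighted-success}, and that the good-subspace probability is exactly $1/4$ rather than approximately. Both facts are already supplied by \Cref{lem:weighted-success}, so the argument is routine.
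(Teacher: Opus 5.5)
Your proposal is correct and follows the paper's proof. The paper likewise uses \eqref{eq:exact-flag-probability} to get an initial flag probability of exactly $1/4$ or $0$, then applies one Grover iterate to reach the state $2\Pi\mathcal A_x\ket{0}$ (or to leave the flag at zero), and counts two applications of $\mathcal A_x$ and one of its inverse for $3T_n$ queries; the only difference is that the paper cites \cite[Eq.~(8)]{BrassardHoyerMoscaTapp} where you derive the $\theta=\pi/6$, $\sin 3\theta = 1$ rotation explicitly.
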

\begin{proof}
By \eqref{eq:exact-flag-probability}, the initial acceptance probability is $1 / 4$ when $f(x) = 1$ and zero when $f(x) = 0$. The step of amplitude amplification in \Cref{alg:exact-term} gives the state $2 \Pi \mathcal A_x\ket{0}$ in the first case and leaves the flag zero in the second \cite[Eq.~(8)]{BrassardHoyerMoscaTapp}. Since every record obtained with flag outcome $1$ specifies a satisfied term, the output is correct.

The procedure applies $\mathcal A_x$ twice and its inverse once. Both reflections use no queries, giving the query count $3 T_n$.
\end{proof}

The term returned by \textsc{FindTerm} identifies the $y$-bits whose OR is the value of $H$. The following algorithm computes these values and checks the addressed cell.

\begin{algorithm}[ht]
\caption{Exact evaluation of $F_n$}\label{alg:exact-evaluation}
\begin{algorithmic}[1]
\Procedure{Evaluate}{}
  \For{$\ell = 1,\ldots,r$}
    \State Run \textsc{FindTerm} on $x^{(\ell)}$
    \If{the output is $\bot$}
      \State \Return $0$
    \EndIf
    \State Let $(i,S)$ be the returned term
    \State Query $y_i^{(\ell)}(p)$ for $p \in S$ and set $\beta_\ell \gets \bigvee_{p \in S}y_i^{(\ell)}(p)$
  \EndFor
  \State Check the descriptions in cell $(\beta_1,\ldots,\beta_r)$ using \Cref{lem:description-verification}
  \State \Return $1$ if every check accepts, and $0$ otherwise
\EndProcedure
\end{algorithmic}
\end{algorithm}

\begin{lemma}\label{lem:exact-total-upper}
\Cref{alg:exact-evaluation} computes $F_n$ with certainty using $O(n \ln^2 n)$ queries.
\end{lemma}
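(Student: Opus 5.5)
The proof has two parts: correctness, which follows from \Cref{lem:exact-term} and the definition of $F_n$, and the query count, which is a sum over the $r$ iterations plus the final cell check.

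\textbf{Correctness.} First suppose some copy $x^{(\ell)}$ has $f(x^{(\ell)})=0$. Then $H(x^{(\ell)},y^{(\ell)})=*$. By \Cref{lem:description-verification}, no $\beta_\ell\in\{0,1\}$ and description $z^{(\ell)}$ satisfy $A(x^{(\ell)},y^{(\ell)},\beta_\ell,z^{(\ell)})=1$, so $F_n=0$. \Cref{lem:exact-term} guarantees that \textsc{FindTerm} returns $\bot$ on this copy with certainty, so \textsc{Evaluate} outputs $0$ correctly. (If several copies are undefined, the first one encountered triggers this.)

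Otherwise every copy satisfies some term. By \Cref{lem:exact-term}, each call returns exactly the satisfied term $(i,S)$, and by \eqref{eq:partial-function} the computed $\beta_\ell=\bigvee_{p\in S}y_i^{(\ell)}(p)$ equals $H(x^{(\ell)},y^{(\ell)})$. The algorithm therefore identifies the correct cell $(\beta_1,\ldots,\beta_r)$. By the definition of $F_n$ and part~1 of \Cref{lem:description-verification}, $F_n=1$ if and only if every description in that cell satisfies $A(x^{(\ell)},y^{(\ell)},\beta_\ell,z^{(\ell)})=1$. Deterministically evaluating all conjuncts $A_i$ from part~2 of that lemma decides this exactly. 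All steps are deterministic given the exact outputs of \textsc{FindTerm}, so the whole algorithm errs with probability zero.

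\textbf{Query count.} For each $\ell$:
\begin{itemize}
\item \textsc{FindTerm} costs $3T_n=O(n\ln n)$ queries by \Cref{lem:exact-term}.
\item Reading $y_i^{(\ell)}(p)$ for $p\in S$ costs $|S|=k\le n$ queries.
\end{itemize}
Since $r=\lfloor\log_2\tau_n\rfloor+1=\Theta(\ln n)$, the loop totals $O(rn\ln n)=O(n\ln^2 n)$.

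The cell check consists of $r$ verifications. Each has $O(n)$ conjuncts, and each conjunct costs $O(\ln n)$ queries, giving $O(rn\ln n)=O(n\ln^2 n)$. Query counts that depend on measurement outcomes can be padded with dummy queries to make the circuit size fixed, as the definition of $\operatorname{Q}_{\mathrm E}$ requires.

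\textbf{Main obstacle.} The only delicate point is making the whole procedure a single exact circuit with a fixed query count. I would handle this coherently: keep each \textsc{FindTerm} output in a register rather than measuring, let it control the classical-style queries as in \Cref{app:implementations}, and pad to the worst-case count. Exactness is preserved because every branch is deterministic once the terms are fixed.
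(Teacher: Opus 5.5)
Your proof is correct and follows the paper's argument. Correctness comes from \Cref{lem:exact-term} together with the definition of $F_n$, and the query bound is $3rT_n + rn + O(rn\ln n) = O(n\ln^2 n)$ with $r=\Theta(\ln n)$; your extra remark about controlling later queries coherently by the \textsc{FindTerm} outputs and padding to a fixed query count (the convention of \Cref{app:implementations}) is a detail the paper leaves implicit.
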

\begin{proof}
If a search returns $\bot$, the corresponding input to $H$ lies outside its domain by \Cref{lem:exact-term}, so $F_n = 0$. Otherwise, the returned terms identify the $y$-bits whose ORs are the $r$ values of $H$; evaluating these ORs and checking the addressed cell gives $F_n$.

The term searches use at most $3 r T_n$ queries, the selected $y$-bits use at most $r n$, and the description checks use $O(r n \ln n)$ queries by \Cref{lem:description-verification}. Since $T_n = O(n \ln n)$ and $r = \Theta(\ln n)$, the total is $O(n \ln^2 n)$.
\end{proof}

For the matching lower bound, we use the count fields of the description. Each field $w_p$ has $\lceil\log_2(n + 1)\rceil$ bits and must encode $\sum_{a = 1}^p x_i(a)$, as specified in \Cref{app:description-verification}.

\begin{lemma}\label{lem:zero-error-lower}
The function $F_n$ satisfies
\[
 \operatorname{Q}_0(F_n) \geq r n \lceil\log_2(n + 1)\rceil.
\]
\end{lemma}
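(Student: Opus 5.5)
Fix one positive input of $F_n$ and restrict every other variable to its value there, so that only the bits of $N=rn\lceil\log_2(n+1)\rceil$ count fields remain free. The intended restricted function is the AND of $N$ bits, possibly after negating some of them. Zero-error quantum query complexity does not increase under restriction, and for AND on $N$ bits it equals $N$. The latter holds because a zero-error algorithm that outputs $1$ on the all-ones input must certify it. Concretely, the standard polynomial argument applies: the acceptance probability of a $T$-query algorithm is a polynomial $p$ of degree at most $2T$. This $p$ vanishes on every input where AND is $0$ and is at least $1/2$ on the one input where AND is $1$. Such a $p$ must have full degree $N$, since it is a nonzero multiple of the indicator $\prod_i x_i$ restricted to the cube. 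Hence $2T\ge N$, and the sharper bound $T\ge N$ follows from Midrijanis-type results that zero-error complexity of AND equals $N$. I would cite the known fact $\operatorname{Q}_0(\mathrm{OR}_N)=N$ (e.g.\ from Buhrman, Cleve, de Wolf, Zalka) rather than reprove it.

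For the restriction, fix for each $\ell\in[r]$ an $x^{(\ell)}$ satisfying some term $(i_\ell,S_\ell)$, and take $y^{(\ell)}=0$, so the address is $(0,\ldots,0)$. In cell $(0,\ldots,0)$, fill every description $z^{(\ell)}$ with the unique accepting content for $(x^{(\ell)},y^{(\ell)},0)$. This content includes the correct count fields $w_p=\sum_{a\le p}x_{i_\ell}(a)$ in binary, as specified in \Cref{app:description-verification}. Fix all remaining input bits, including the other cells, arbitrarily (say to zero). Then $F_n$ on the restriction equals $1$ iff every check of $A$ accepts on cell $(0,\ldots,0)$. Since all other fields already pass, this happens iff every count field bit takes its prescribed value. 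So the restricted function is an AND of literals on $N$ free bits, which is isomorphic to $\mathrm{AND}_N$ and hence has $\operatorname{Q}_0=N$ by the duality with OR.

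The main obstacle is verifying, from the description format in \Cref{app:description-verification}, two things. First, that each of the $rn$ count fields has exactly one correct value, i.e.\ the checks pin down every bit with no alternative encoding. Second, that altering count-field bits cannot make some other check accept spuriously or change which cell is addressed. The address depends only on $x,y$, so the second point reduces to confirming that the checks on cell $(0,\ldots,0)$ are conjunctions in which each count bit must equal its intended value. If the format has an extra slack, such as $n$ fields with the first being trivially $0$ or $1$, I would free only those fields whose values are forced and adjust the count. The stated bound suggests all $n$ fields per copy are forced.
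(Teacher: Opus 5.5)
Your proposal is correct and matches the paper's proof: put every copy of $H$ in its domain with value $0$, fix an accepting description in cell $0^r$ except for its $rn$ count fields $w_p$ (which are then forced to encode $|S\cap[p]|$), and conclude via $\operatorname{Q}_0(\operatorname{AND}_m)=m$, which the paper cites from \cite[Proposition~6.1]{Beals2001}. Two small corrections. First, the accepting description is not unique, since the prefix lengths $\ell_j=\sigma_j-\sigma_{j-1}$ have slack; but the argument only needs the count fields to be forced once the other fields are fixed, and they are. Second, your degree argument alone gives only $T\ge m/2$; the full bound $T\ge m$ follows by applying the same vanishing argument to each amplitude on the accepting basis states, each of which is a polynomial of degree at most $T$.
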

\begin{proof}
In each of the $r$ inputs to $H$, fix $x_1$ to have support $[k]$, all other blocks to zero, and $y = 0$. All $r$ values of $H$ are then zero, so only cell $0^r$ can accept. Choose accepted descriptions there and fix all their fields except the counts; fix every other cell as well. The remaining $m = r n \lceil\log_2(n + 1)\rceil$ bits have a unique accepting assignment $u^\star$, because each $w_p$ must encode $\min\{p,k\}$.

After complementing the coordinates where $u_j^\star = 0$, the restricted function has $1^m$ as its unique positive input and therefore equals $\operatorname{AND}_m$. Hence \cite[Proposition~6.1]{Beals2001} gives $\operatorname{Q}_0(F_n) \geq \operatorname{Q}_0(\operatorname{AND}_m) = m$.
\end{proof}

\paragraph{Acknowledgments.}
We thank Jevg{\=e}nijs Vihrovs for bringing \cite{pabbaraju2026,BenDavidKothari2026} to our attention and for pointing out the near-quartic separation between $\operatorname{R}_0$ and $\operatorname{Q}$.

\paragraph{Use of LLM tools.}
We used large language model (LLM) tools at various stages of the project to assist with literature searches, code development for early computational experiments, detecting potential logical flaws in the mathematical content, and drafting and revising the manuscript. We reviewed, corrected, and independently verified all LLM-generated output and take full responsibility for the final manuscript and all its content, including its accuracy, originality and integrity.

\bibliographystyle{plainurl}
\bibliography{bib}

\appendix
\crefalias{section}{appendix}
\crefalias{subsection}{subappendix}

\section{Analysis of the density walk}\label{app:density-walk}

Let $X_i$ for $i\in\mathbb{N}$ be independent random variables distributed as
\begin{equation*}
    X_i=\begin{cases}
        3, &\text{with probability }p\\
        -2, &\text{with probability }1-p.
    \end{cases}
\end{equation*}
Let $Y_j=\sum_{i=1}^j{X_i}$ and $Y_0=0$ be their partial sums. For $h\geq 1$, let $\tau_h$ be the smallest $j$ such that $Y_j\leq -h$. Then, 
\begin{lemma}\label{lem:density-stopping-probability}
    If $p\in[2/5,1]$, then
    \[
        \Pr[\tau_h<\infty]\leq q_p^h,
        \qquad
        q_p=\left(\frac{2(1-p)}{3p}\right)^{1/3}.
    \]
    In particular, if $p=1/2$, then
    $\Pr[\tau_h<\infty]\leq(2/3)^{h/3}$.
\end{lemma}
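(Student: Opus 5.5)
We will use the standard exponential-martingale argument for a random walk with positive drift. For $\lambda \in (0,1)$ consider $\mathbb E[\lambda^{X_i}] = p\lambda^3 + (1-p)\lambda^{-2}$. We seek $\lambda<1$ with $\mathbb E[\lambda^{X_i}] \le 1$. Then $Z_j = \lambda^{Y_j}$ is a nonnegative supermartingale with $Z_0 = 1$.

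Let $\tau_h$ be the first time $Y_j \le -h$. At that time the walk has decreased by steps of size $2$, so $Y_{\tau_h} \in \{-h,-h-1\}$ and $Y_{\tau_h} \le -h$. Hence $Z_{\tau_h} \ge \lambda^{-h}$ on the event $\{\tau_h<\infty\}$. Optional stopping for nonnegative supermartingales (or Fatou applied to the stopped process $Z_{\min(j,\tau_h)}$) gives
\[
 \lambda^{-h}\Pr[\tau_h<\infty] \le \mathbb E[Z_{\tau_h}\mathbf 1_{\tau_h<\infty}] \le \liminf_j \mathbb E[Z_{\min(j,\tau_h)}] \le 1 .
\]
Therefore $\Pr[\tau_h<\infty]\le \lambda^h$.

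It remains to choose $\lambda = q_p = (2(1-p)/(3p))^{1/3}$ and verify $\phi(\lambda) := p\lambda^3 + (1-p)\lambda^{-2} \le 1$. This is the main step. For $p\in[2/5,1]$ we have $q_p \le 1$; at $p=2/5$, $q_p = 1$ and the bound is trivial, and at $p=1$, $q_p=0$ and $\tau_h=\infty$ surely. For the interior we write $\lambda^3 = 2(1-p)/(3p)$. Then $p\lambda^3 = 2(1-p)/3$, and
\[
 \phi(\lambda) = \tfrac{2}{3}(1-p) + (1-p)\lambda^{-2} = (1-p)\bigl(\tfrac23 + \lambda^{-2}\bigr).
\]
We therefore need $(1-p)(2/3+\lambda^{-2}) \le 1$. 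Substituting $u = \lambda^{-1}$, we have $1-p = \frac{2}{3u^3+2}$, since $p = \frac{2u^3}{3+2u^3}$, giving $1-p = \frac{3}{3+2u^3}$. The inequality becomes $3(2/3+u^2)\le 3+2u^3$, that is, $2u^3 - 3u^2 + 1 \ge 0$, which factors as $(u-1)^2(2u+1)\ge 0$ and holds for all $u\ge 1$. (This $\lambda$ is not optimal, but it is the convenient choice that yields the stated form.) The case $p=1/2$ gives $q_{1/2}=(2/3)^{1/3}$, so the bound is $(2/3)^{h/3}$.
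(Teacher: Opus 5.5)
Your proof is correct. Its core is the same as the paper's: both rest on the inequality $p q_p^3+(1-p)q_p^{-2}\le 1$, i.e.\ on $k\mapsto q_p^k$ being superharmonic for the walk. Your factorization $(u-1)^2(2u+1)$ with $u=1/q_p$ is just $q_p^{-3}$ times the paper's $(1-q_p)^2(q_p+2)$.

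The difference is how this inequality becomes a hitting bound. The paper avoids martingale theory. It defines $u_N(k)$, the probability that the walk started at $k$ reaches a nonpositive value within $N$ steps. It then proves $u_N(k)\le q^k$ by induction on $N$ via first-step analysis, and lets $N\to\infty$. That induction is exactly the finite-horizon statement you get from $\mathbb E[Z_{\min(j,\tau_h)}]\le 1$, and your Fatou step plays the role of the paper's limit $N\to\infty$. So the paper's route is fully elementary and self-contained. Yours is the standard exponential-supermartingale argument: shorter to state and easier to adapt to other step distributions.

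One small slip: the intermediate formula $1-p=\frac{2}{3u^3+2}$ is wrong; with $\lambda$ in place of $u$ it is the formula for $p$. You immediately replace it with the correct $1-p=\frac{3}{3+2u^3}$, which is what the rest of your computation uses.
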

\begin{proof}
    If $p=1$, then $\tau_h=\infty$ and $q_p=0$, so the claim is
    immediate. Assume $p<1$, and put $q=q_p$. Then $0<q\leq1$ and
    \[
        p=\frac{2}{2+3q^3},
        \qquad
        pq^3+(1-p)q^{-2}
        =\frac{2q^3+3q}{2+3q^3}
        =1-\frac{(1-q)^2(q+2)}{2+3q^3}
        \leq1.
    \]
    For each integer $k$, let $u_N(k)$ be the probability that
    the walk starting at $k$ reaches a nonpositive value within
    $N$ steps. Thus $u_N(k)=1$ for $k\leq0$, and $u_0(k)=0$
    for $k>0$. Conditioning on the first increment gives, for $k>0$,
    \[
        u_{N+1}(k)=pu_N(k+3)+(1-p)u_N(k-2).
    \]
    We prove $u_N(k)\leq q^k$ by induction on $N$. The bound
    holds for $N=0$ and for every $k\leq0$, since $q^k\geq1$
    when $k\leq0$. For $k>0$, the induction step is
    \[
        u_{N+1}(k)
        \leq pq^{k+3}+(1-p)q^{k-2}
        =q^k\bigl(pq^3+(1-p)q^{-2}\bigr)
        \leq q^k.
    \]
    Since $u_N(h)=\Pr[\tau_h\leq N]$, letting $N\to\infty$ gives
    \[
        \Pr[\tau_h<\infty]\leq q^h.
    \]
\end{proof}

\begin{lemma}\label{lem:expected-stopping-time}
    If $p\in[0,2/5)$, then
    \[
        \mathbb{E}[\tau_h]\leq\frac{h+1}{2-5p}.
    \]
    In particular, if $p<1/3$, then
    $\mathbb{E}[\tau_h]\leq 3(h+1)$.
\end{lemma}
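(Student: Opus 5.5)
We want $\mathbb E[\tau_h]\le (h+1)/(2-5p)$ for $p<2/5$. The increments $X_i$ have mean $\mathbb E X_i = 3p-2(1-p) = 5p-2 = -(2-5p) < 0$. The plan is a Wald identity / optional stopping argument applied to the martingale $Y_j + (2-5p)j$. We combine it with the observation that at the stopping time the walk cannot overshoot $-h$ by much, since downward steps have size exactly $2$.

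First, we establish $\mathbb E[\tau_h]<\infty$, or else we avoid the issue by truncation. Fix $N$ and put $\sigma=\min\{\tau_h,N\}$, a bounded stopping time. The process $Z_j = Y_j + (2-5p)j$ is a martingale with bounded increments. Optional stopping at the bounded time $\sigma$ gives
\[
\mathbb E[Y_\sigma] + (2-5p)\,\mathbb E[\sigma] = 0.
\]

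Second, we bound $Y_\sigma$ from below. For $j<\tau_h$ we have $Y_j\ge -h+1$, because $Y_j>-h$ and $Y_j$ is an integer. At $j=\tau_h$ the last step is $-2$ from a value $\ge -h+1$, so $Y_{\tau_h}\ge -h-1$. In either case $Y_\sigma\ge -h-1$, and hence $\mathbb E[Y_\sigma]\ge -(h+1)$. Substituting into the identity yields
\[
(2-5p)\,\mathbb E[\min\{\tau_h,N\}]\le h+1 .
\]
Letting $N\to\infty$ and applying monotone convergence gives $\mathbb E[\tau_h]\le (h+1)/(2-5p)$. This also shows $\tau_h<\infty$ almost surely.

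For the special case $p<1/3$ we have $2-5p>1/3$, so the bound becomes $\mathbb E[\tau_h]\le 3(h+1)$.

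The only delicate point is justifying optional stopping without knowing a priori that $\tau_h$ is integrable. Truncating at $N$ and using monotone convergence handles this. The overshoot bound is immediate from the step sizes, using the integrality of $h$ and of the $Y_j$.
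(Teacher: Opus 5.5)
Your proof is correct and follows essentially the same route as the paper: truncate at $\min\{\tau_h,N\}$, use integrality and the step size $2$ to get $Y_{\min\{\tau_h,N\}}\ge -h-1$, apply Wald's identity, and let $N\to\infty$. The only cosmetic difference is that the paper verifies the Wald step by hand via $\mathbb{E}[Y_{T_N}]=\sum_{i\le N}\mathbb{E}[X_i\mathbf{1}_{\{\tau_h\ge i\}}]$ and independence of $\{\tau_h\ge i\}$ from $X_i$, whereas you invoke optional stopping for the martingale $Y_j+(2-5p)j$ at a bounded time.
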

\begin{proof}
    Fix $N\geq1$ and put $T_N=\min\{\tau_h,N\}$.
    Since the walk is integer-valued and downward steps have
    size two, $Y_{T_N}\geq-h-1$.
    Moreover, $\{\tau_h\geq i\}$ depends only on
    $X_1,\ldots,X_{i-1}$ and is independent of $X_i$. Hence
    \begin{align*}
        -h-1\leq\mathbb{E}[Y_{T_N}]
        &=\sum_{i=1}^{N}
          \mathbb{E}[X_i\mathbf{1}_{\{\tau_h\geq i\}}]\\
        &=(5p-2)\sum_{i=1}^{N}\Pr[\tau_h\geq i].
    \end{align*}
    Rearranging and letting $N\to\infty$ gives
    \[
        \mathbb{E}[\tau_h]
        =\sum_{i=1}^{\infty}\Pr[\tau_h\geq i]
        \leq\frac{h+1}{2-5p}.
    \]
\end{proof}

\section{Variable-time search}\label{app:variable-search}

We use the variable-time search algorithm of \cite{AmbainisKokainisVihrovs}.
Our application requires two additional properties: a bound on the
probability of returning any particular marked index, and a failure
bound that is summable over the number of marked indices.
We derive these properties from that construction below.

Let $G \subseteq [n]$ and consider the exact checking family
$\mathcal M_q$ described in \Cref{sec:source-proof}.
For each index $v$, there is a fixed, possibly unknown stopping
query count $t_v\geq 1$.
The checking circuit reports $*$ when $q<t_v$, and otherwise
reports whether $v\in G$.
We count $2 q$ queries for each application of $\mathcal M_q$ or its inverse.

Encode $*,0,1$ as $00,01,10$, respectively. The response register of $\mathcal M_q$ is the answer register of the search. To replace its answer at parameter $q$ by the answer at $q' \geq q$, the circuit applies $\mathcal M_q$ again, returning the response register to $00$, and then applies $\mathcal M_{q'}$. This update and its inverse use $2(q + q') \leq 4 q'$ queries. When the search is applied to $G = U(W)$, its reflections leave the list register unchanged.

\subsection{Search with a fixed number of stages}\label{subsec:variable-search}

Fix $T_2 \geq \sqrt{n}$ and put $q_0 = \lceil 3 T_2 / \sqrt{n} \rceil$, $J = \lceil \log_9 n \rceil$, and $q_j = 3^j q_0$ for $0 \leq j \leq J$.
Thus $q_0 \leq 4 T_2 / \sqrt{n}$ and $q_J < 12 T_2$.
We apply \cite[Algorithm~1]{AmbainisKokainisVihrovs} with $T = n q_0^2 / 9$, so that its successive checking thresholds are
$q_0,q_1,\ldots,q_{J-1}$.
Notice that $T_2^2\leq T\leq 16T_2^2/9$.

Set $\rho=10^{-3}$.
For $h \in [J]$, let $\mathcal B_h$ be Algorithm~1 of~\cite{AmbainisKokainisVihrovs} with stage parameter $h$ and error parameter
$\rho$.
Its final checking threshold is $r_h = q_{\min\{h,J - 1\}}$. Write $R_h = \{v \in G : t_v \leq r_h\}$ for the marked indices resolved by this final check.

\begin{lemma}
\label{lem:search-stage}
For every $h\in[J]$, the procedure $\mathcal{B}_h$
returns an index in $R_h$ or $\bot$, and uses
$O(h3^h q_0)$ queries.
The probabilities $\Pr[\mathcal{B}_h=v]$ are equal over
$v\in R_h$.
Moreover, there is an absolute constant $c$ such that
\[
    \Pr[\mathcal{B}_h=v]
    \leq c\,\frac{9^{h-1}}{n}
    \qquad\text{for every }v\in G.
\]
If $1 \leq m = |G| < n$ and $\sum_v t_v^2 \leq T_2^2$, then $\ell = \lceil \log_9(n / m) \rceil$ satisfies
\[
    |R_\ell|\geq\frac{8m}{9},
    \qquad
    \Pr[\mathcal{B}_\ell=\bot]\leq\rho.
\]
\end{lemma}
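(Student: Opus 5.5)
The plan is to unpack Algorithm~1 of \cite{AmbainisKokainisVihrovs} as a concrete circuit and read off each claim from its structure. With stage parameter $h$, $\mathcal B_h$ runs a nested sequence of amplitude-amplification layers. Layer $j$ checks the current superposition with $\mathcal M_{q_j}$, where $q_j = 3^j q_0$. The $j$th layer amplifies a state whose index register is, before any amplification, uniform over $[n]$. After the final check with threshold $r_h$, the algorithm measures and returns $v$ only if the final check reports $1$; otherwise it returns $\bot$. This immediately gives that every returned index lies in $R_h$.

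\textbf{Query count.} Layer $j$ uses a constant number of amplification rounds (the factor $3$ between thresholds is chosen so that each layer amplifies by a constant factor). Each round costs $O(q_j)$ plus the cost of the inner layers. The recursion $C_j = O(1)\cdot(C_{j-1} + q_j)$ therefore gives $O(h\,3^h q_0)$. The extra factor $h$ comes from the error-reduction repetition $O(\log(1/\rho))$ per stage, or equivalently from summing $h$ geometric layers. I would match this against the complexity statement in \cite{AmbainisKokainisVihrovs}.

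\textbf{Symmetry.} Every reflection and every check $\mathcal M_q$ acts identically on indices with the same behaviour pattern. The state after all layers therefore assigns equal amplitude to all $v$ in $R_h$: these indices are all marked and resolved by every threshold up to $r_h$. I would argue this by noting that the unresolved subspaces of each layer are permutation-invariant with respect to indices that give the same answer sequence. In fact the amplitude of $v$ at the end depends only on its sequence of $*$/$0$/$1$ responses along the thresholds. For $v \in R_h$ this sequence can still differ according to the first $j$ with $t_v \le q_j$. So I expect the paper's variant to place the marked check only at the last threshold, making all of $R_h$ equivalent. This is the point I would verify most carefully.

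\textbf{Per-index bound.} The initial amplitude of each index is $1/\sqrt n$. Each of the $h-1$ amplification layers multiplies any single index's amplitude by at most a constant, namely $3$ when rounds are chosen as in the algorithm. The probability is therefore at most $c\,9^{h-1}/n$.

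\textbf{Success at $\ell$.} Indices with $t_v > r_\ell$ are few. We have $r_\ell \ge q_0 3^{\ell-1}$, and by the choice of $\ell$ this is of order $T_2\sqrt{m}/\sqrt n \cdot$const. Markov's inequality applied to $\sum t_v^2 \le T_2^2$ then bounds $|\{v: t_v > r_\ell\}| \le T_2^2/r_\ell^2$. With $q_0 \ge 3T_2/\sqrt n$ and $9^{\ell-1} \ge n/(9m)$, this gives $r_\ell^2 \ge 9 T_2^2 \cdot 9^{\ell-1}/n \ge T_2^2/m$ up to the constant bookkeeping. I would tune the constants to get at most $m/9$ unresolved marked indices.

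The failure bound $\rho$ then follows from the correctness guarantee of \cite[Algorithm~1]{AmbainisKokainisVihrovs}. That guarantee applies when the marked fraction resolved by the final threshold is $\Omega(9^{-\ell})$, and $|R_\ell| \ge 8m/9 \ge 8n/(9\cdot 9^{\ell})$ meets it. Matching this regime exactly to their stated parameters is the main obstacle.
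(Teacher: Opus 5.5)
Your outline follows the paper's: you cite \cite{AmbainisKokainisVihrovs} for the query bound and for the success of $\mathcal B_\ell$, read the distribution claims off the amplitude structure, and count unresolved indices from $\sum_v t_v^2\le T_2^2$. However, the step you flag as crucial, equality over $R_h$, is left open, and your proposed resolution describes a different algorithm. The checks $\mathcal M_q$ report $1$ as soon as $q\ge t_v$, also at intermediate thresholds, and nothing moves the marked check to the last threshold. The missing observation is that each intermediate amplification round reflects only with respect to rejection (answer $0$ versus not $0$), so $*$ and $1$ are treated alike. One round multiplies the whole non-rejected component by the common factor $\sin 3\theta/\sin\theta\in[-1,3]$, and the threshold update acts unitarily on the answer and workspace of each fixed index. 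Hence after every stage all indices not yet rejected carry one common amplitude; this is the amplitude identity of \cite[Lemma~2]{AmbainisKokainisVihrovs} that the paper invokes. Marked indices are never rejected, so they share this amplitude whatever the stage at which they became resolved. After the final check at $r_h$, the answer-$1$ states are exactly the indices of $R_h$, with equal amplitudes. The final amplification towards answer $1$ and the $\rho$-dependent repetitions preserve this equality. This uniformity on $R_h$ is what the proof of \Cref{lem:full-search} needs, and the same proportionality is what justifies your factor $3$ per layer.

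Your count for $|R_\ell|$ also does not close. Using $r_\ell\ge 3^{\ell-1}q_0$ loses exactly the factor $9$ you need: it gives only $r_\ell^2\ge T_2^2/m$, hence fewer than $m$ unresolved indices. There is nothing to tune, since $q_0$, the ratio $3$, and $r_h=q_{\min\{h,J-1\}}$ are fixed before the lemma. Use the actual threshold instead. For $\ell<J$, $r_\ell=q_\ell=3^\ell q_0$ gives $r_\ell^2\ge 9^\ell\cdot 9T_2^2/n\ge 9T_2^2/m$, so fewer than $m/9$ indices have $t_v>r_\ell$. For $\ell=J$, which you do not treat, $r_J=q_{J-1}\ge\sqrt T\ge T_2\ge\max_v t_v$, so $R_J=G$. (Also, $r_\ell$ is of order $T_2/\sqrt m$, not $T_2\sqrt m/\sqrt n$.) Finally, in the query count the factor $h$ comes from the $h$ levels, each contributing $O(3^hq_0)$ because the level-$j$ check is executed $O(3^{h-j})$ times. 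It does not come from $\log(1/\rho)$, which is a constant. The recursion must also use exactly three copies of the previous level per round, not an arbitrary $O(1)$, to keep the base $3$.
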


\begin{proof}
The query bound follows from Algorithm~1 of~\cite{AmbainisKokainisVihrovs}; its Lemma~1 gives the success guarantee for $\mathcal B_\ell$.

For the distribution bounds, consider the procedure before its
final amplification.
By the amplitude identity in~\cite[Lemma~2]{AmbainisKokainisVihrovs},
the resolved marked indices have equal amplitudes.
Each recursive amplification step increases their amplitude
magnitudes by at most a factor of three.
Since the initial amplitudes are $1/\sqrt{n}$, the probability
of any particular marked index is at most $9^{h-1}/n$.
The final amplification and repetitions preserve equality
among these probabilities and increase this bound by at most
a constant factor, since $\rho$ is fixed.

If $\ell<J$, then
$r_\ell^2=q_\ell^2\geq 9T_2^2/m$, so fewer than $m/9$
marked indices have stopping counts exceeding $r_\ell$.
If $\ell=J$, then
$r_\ell=q_{J-1}\geq\sqrt{T}\geq T_2$,
and all marked indices are resolved.
\end{proof}

\subsection{Search with an unknown number of solutions}\label{subsec:complete-search}

Define $\mathrm{Search}(G,T_2)$ using the schedule of
\cite[Algorithm~2]{AmbainisKokainisVihrovs}.
For $j=1,\ldots,J$, round $j$ runs
$\mathcal{B}_1,\ldots,\mathcal{B}_j$ in this order,
returning the first index produced.
If every call returns $\bot$, the search returns $\bot$.
Each call uses fresh registers and independent random choices.

\begin{proof}[Proof of \Cref{lem:full-search}]
Every returned index belongs to $G$.
In particular, an empty set gives output $\bot$ with certainty.

The maximum query count through round $k$ is
\[
    O\left(
        q_0\sum_{j=1}^{k}\sum_{h=1}^{j}h3^h
      \right)
    =O(k3^kq_0).
\]
Taking $k=J$ gives the maximum query bound
$O(T_2\ln n)$, independently of the stopping counts.

Now suppose that $1\leq m=|G|<n$ and
$\sum_v t_v^2\leq T_2^2$, and put
$\ell=\lceil\log_9(n/m)\rceil$.
Every round from $\ell$ onward contains a fresh call to
$\mathcal{B}_\ell$, which fails with probability at most
$\rho$.
Consequently, round $j>\ell$ is reached with probability
at most $\rho^{j-\ell}$.
The expected query count is therefore
\[
    O\left(
        q_0\ell3^\ell+
        q_0\sum_{j=\ell+1}^{J}j3^j\rho^{j-\ell}
      \right)
    =O(\ell3^\ell q_0)
    =O\left(
        \frac{T_2}{\sqrt{m}}
        \left(1+\ln\frac{n}{m}\right)
      \right),
\]
where the first equality uses $3\rho<1$.

For the failure probability, all $J-\ell+1$ calls to
$\mathcal{B}_\ell$ must fail.
Since $J-\ell\geq\lfloor\log_9m\rfloor$, we obtain
\[
    \Pr[\mathrm{Search}(G,T_2)=\bot]
    \leq \rho^{J-\ell+1}
    \leq \rho^{1+\lfloor\log_9m\rfloor}.
\]
Thus we may take
\[
    \delta_m=\rho^{1+\lfloor\log_9m\rfloor},
    \qquad
    \sum_{m\geq1}\delta_m
    =\sum_{a\geq0}8\cdot9^a\rho^{a+1}
    =\frac{8\rho}{1-9\rho}
    <\frac{1}{100}.
\]

It remains to bound the probability of returning a fixed
$v\in G$.
We distinguish calls with $h<\ell$ from those with
$h\geq\ell$.

For $h<\ell$, there are $\ell-h$ scheduled calls to
$\mathcal{B}_h$ before round $\ell$.
The expected number from round $\ell$ onward is at most
\[
    \sum_{j=\ell}^{J}\rho^{j-\ell}
    \leq\frac{1}{1-\rho}.
\]
Conditioned on reaching any such call, its distribution is
unchanged because it uses fresh registers and randomness.
\Cref{lem:search-stage} therefore bounds the probability
of returning $v$ from a call with $h<\ell$ by
\[
    \frac{c}{n}
    \sum_{h=1}^{\ell-1}
       \left(\ell-h+\frac{1}{1-\rho}\right)9^{h-1}
    =O\left(\frac{9^{\ell-1}}{n}\right)
    =O(1/m).
\]

For $h\geq\ell$, monotonicity of $r_h$ gives
$R_h\supseteq R_\ell$, and hence $|R_h|\geq8m/9$.
Conditioned on $\mathcal{B}_h$ returning an index, its
output is uniform on $R_h$.
Its conditional probability of returning $v$ is therefore
at most $9/(8m)$.
Taking the mixture over whichever such call first returns
an index gives the same upper bound.
Combining the two ranges proves
\[
    \Pr[\mathrm{Search}(G,T_2)=v]\leq\frac{C}{m}
\]
for an absolute constant $C$.
\end{proof}

For use in \Cref{app:source-implementation}, we also count the applications of checking circuits and their inverses.
The recursive construction in~\cite{AmbainisKokainisVihrovs} makes
$O(3^h)$ applications of checking circuits or their inverses
in $\mathcal{B}_h$, since each recursive level uses three
copies of the preceding level and $O(1)$ additional checks.
The final amplification contributes only a constant factor.
Thus the complete search contains
\begin{equation}
    O\left(
        \sum_{j=1}^{J}\sum_{h=1}^{j}3^h
      \right)
    =O(3^J)
    =O(\sqrt{n})
    \label{eq:search-circuit-count}
\end{equation}
such applications, all with checking parameters below
$12T_2$.

The procedure has a unitary implementation by preparing
each random choice in a separate register and storing
measurement outcomes instead of measuring them.
Later calls are controlled on no index having yet been
returned.
All recorded outcomes and work registers are retained,
so measuring the final output register reproduces the
distribution analysed above, and reversing the circuit
gives its inverse.
\section{Supporting proofs for the quantum algorithms}\label{app:comparisons-membership}

The constructions below specify the circuits for comparing blocks, testing candidate membership, and preparing the exact search. We first state the unitary conventions used by these algorithms, then prove the comparison, candidate, and approximation estimates for \textsc{FindSource}.

\subsection{Unitary implementations}\label{app:implementations}\label{app:answer-addition}

In each unitary implementation below, the circuit prepares every random choice in its own register and copies each intermediate computational basis outcome to a separate register instead of measuring it. Each possible return has a fresh register consisting of a flag and an output field, initialized to the all-zero state $\ket{\mathrm{init}}$. When that return occurs, the circuit sets the flag to one and writes the returned value in the output field. Suppose the procedure makes at most $T$ queries for every sequence of stored values. For each $t \in [T]$, input-independent gates controlled by those values advance the computation on each corresponding subspace to its $t$th query or to a return, and move the relevant address and response data to common query registers. A controlled application of the standard oracle, implemented with one query, performs the $t$th query on every subspace where no return has been recorded and acts as the identity on the remaining subspaces. After the $T$th oracle application, the remaining input-independent gates copy the output field of the first return register not in $\ket{\mathrm{init}}$ to a designated output register initialized to zero, leaving all return registers unchanged. Measuring this output register gives the procedure's output distribution, and reversing all gates gives the inverse circuit.

Fix an integer $m \geq 1$ and strings $a_j \in \{0,1\}^m$. Let $\mathcal H_{\mathrm{index}}$, $\mathcal H_{\mathrm{answer}}$, $\mathcal H_{\mathrm{aux}}$, and $\mathcal H_{\mathrm{response}}$ be the state spaces of the index, answer, auxiliary, and response registers, respectively. The answer and response registers contain $m$ qubits. Put $\mathcal H_{\mathrm{work}} = \mathcal H_{\mathrm{answer}} \otimes \mathcal H_{\mathrm{aux}}$. The joint state space is
\[
 \mathcal H_{\mathrm{index}} \otimes \mathcal H_{\mathrm{work}} \otimes \mathcal H_{\mathrm{response}}.
\]
Let
\begin{equation}\label{eq:unitary-by-index}
 \mathcal U = \sum_j \ket{j} \bra{j} \otimes \mathcal U_j \otimes I_{\mathrm{response}},
\end{equation}
where each $\mathcal U_j$ is unitary on $\mathcal H_{\mathrm{work}}$. Define $\mathcal C \ket{j,z} = \ket{j,z \oplus a_j}$ and extend it by the identity on $\mathcal H_{\mathrm{work}}$. Let $\mathsf C \ket{a,z} = \ket{a,z \oplus a}$ add the answer to the response and act as the identity on the index and auxiliary registers. Put $\widetilde{\mathcal C} = \mathcal U^\dagger \mathsf C \mathcal U$ and
\[
 \Pi_0 = I_{\mathrm{index}} \otimes \ket{0}_{\mathrm{work}} \bra{0}_{\mathrm{work}} \otimes I_{\mathrm{response}}.
\]

\begin{lemma}\label{lem:unitary-approximation}
The unitary $\widetilde{\mathcal C}$ satisfies $\widetilde{\mathcal C}^2 = I$. If, for every $j$, measuring the answer register after applying $\mathcal U_j$ to $\ket{0}_{\mathrm{work}}$ gives $a_j$ with probability at least $1 - \delta$, where $\delta \geq 0$, then
\[
 \left\|(\widetilde{\mathcal C} - \mathcal C)\Pi_0\right\| \leq 2 \sqrt{\delta}.
\]
\end{lemma}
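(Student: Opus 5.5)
The plan is to verify the involution property by direct algebra and then bound the approximation error by comparing both operators on the image of $\Pi_0$, one index at a time.

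\emph{Involution.} The copy operation $\mathsf C$ is a bitwise XOR of the answer register into the response register, so $\mathsf C^2 = I$. Hence
\[
 \widetilde{\mathcal C}^2 = \mathcal U^\dagger \mathsf C \mathcal U \mathcal U^\dagger \mathsf C \mathcal U = \mathcal U^\dagger \mathsf C^2 \mathcal U = I.
\]

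\emph{Reduction to a single index.} Both $\widetilde{\mathcal C}$ and $\mathcal C$ are block diagonal in the index register, because $\mathcal U$ is controlled by $j$ and $\mathsf C$ does not touch the index. The projector $\Pi_0$ is block diagonal as well. So the operator norm is the maximum over $j$ of the corresponding blocks, and it suffices to fix $j$ and bound $\|(\widetilde{\mathcal C}_j - \mathcal C_j)(\ket{0}\bra{0}_{\mathrm{work}} \otimes I_{\mathrm{response}})\|$.

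\emph{Single-index bound.} Write $\mathcal U_j\ket{0}_{\mathrm{work}} = \ket{a_j}\ket{\phi} + \ket{\psi}$, where $\ket{\psi}$ is supported on answer values different from $a_j$. By hypothesis, $\|\ket{\psi}\|^2 \leq \delta$. For a response state $\ket{z}$:
\begin{itemize}
\item $\mathsf C$ maps $\ket{a_j}\ket{\phi}\ket{z}$ to $\ket{a_j}\ket{\phi}\ket{z \oplus a_j}$;
\item $\mathsf C$ maps $\ket{\psi}\ket{z}$ to some vector $\ket{\psi'}$ with $\|\ket{\psi'}\| \leq \sqrt\delta$.
\end{itemize}
Compare this with $\mathcal U_j$ applied to $\mathcal C_j \ket{0}\ket{z} = \ket{0}\ket{z \oplus a_j}$, which gives $\ket{a_j}\ket{\phi}\ket{z\oplus a_j} + \ket{\psi}\ket{z\oplus a_j}$. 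Therefore
\[
 \bigl\|\mathsf C\,\mathcal U_j\ket{0}\ket{z} - \mathcal U_j\,\mathcal C_j\ket{0}\ket{z}\bigr\| \leq 2\sqrt\delta .
\]
Applying the unitary $\mathcal U_j^\dagger$ preserves this norm, which yields the claimed bound for basis inputs $\ket{0}\ket{z}$.

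\emph{Extension to superpositions over $z$ (main obstacle).} The step needing care is passing from basis states to arbitrary superpositions $\sum_z \alpha_z\ket{z}$. The difference vector is $\sum_z \alpha_z \bigl(\ket{\psi'_z} - \ket{\psi}\ket{z\oplus a_j}\bigr)$. Both families are orthogonal across $z$: the map $\ket{\psi}\ket{z} \mapsto \mathsf C\ket{\psi}\ket{z}$ is an isometry on the span of $\{\ket{\psi}\ket{z}\}_z$, and the vectors $\ket{\psi}\ket{z\oplus a_j}$ are orthogonal for distinct $z$. So each sum has norm at most $\sqrt\delta\,\|\alpha\|$, and the triangle inequality gives $2\sqrt\delta$ uniformly. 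This completes the bound $\|(\widetilde{\mathcal C} - \mathcal C)\Pi_0\| \leq 2\sqrt\delta$.
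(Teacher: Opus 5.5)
Your proof is correct and follows essentially the same route as the paper. Both split $\mathcal U_j\ket{0}_{\mathrm{work}}$ into the component with answer $a_j$, on which $\mathsf C$ and $\mathcal C$ agree, and an incorrect component of norm at most $\sqrt{\delta}$. Both then use block-diagonality in the index register and conjugation by $\mathcal U$; the paper phrases this as $\mathcal C$ commuting with $\mathcal U$.

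The superposition step you flag as the main obstacle is a single line in the paper. There, $\mathsf C - \mathcal C$ has operator norm at most $2$, and the incorrect part of $\mathcal U_j\ket{0}_{\mathrm{work}}\otimes\ket{\zeta}$ is the product vector $\ket{\psi}\otimes\ket{\zeta}$, of norm at most $\sqrt{\delta}\,\|\zeta\|$, for any response state. Your orthogonality argument over basis states is also valid.
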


\begin{proof}
The equality $\mathsf C^2 = I$ gives $\widetilde{\mathcal C}^2 = I$. For fixed $j$, the component of $\mathcal U_j \ket{0}_{\mathrm{work}}$ with an incorrect answer has norm at most $\sqrt{\delta}$. On the complementary component, $\mathsf C$ and $\mathcal C$ agree; their difference has norm at most two on the incorrect component. This bound holds for every response state. Because $\mathcal U$ preserves the index register, orthogonality of the basis states $\ket{j}$ gives
\[
 \left\|(\mathsf C - \mathcal C)\mathcal U \Pi_0\right\| \leq 2 \sqrt{\delta}.
\]
Since $\mathcal C$ changes only the response for each fixed index, it commutes with $\mathcal U$. Multiplying by $\mathcal U^\dagger$ gives the claimed bound.
\end{proof}

\subsection{Comparing blocks}\label{app:block-classification}\label{app:prefix-search}\label{sec:edge-proof}\label{app:edge-implementation}

For a block of Hamming weight $k$, the expected number of ones  at the chosen positions is $s_n k/n \geq 3s_n/7$, since $n \geq 6$; for a block of Hamming weight at most $n/3$, it is at most $s_n/3$. The declaration threshold is $3s_n/8$, so either erroneous declaration requires a one-sided deviation from the corresponding expectation of at least $s_n/24$. Hoeffding's inequality therefore bounds its probability by $\exp(-s_n / 288) \leq 1 / (12 n)$.  A union bound over the blocks gives success probability at least $11 / 12$ for every fixed $x$.

Given a block index $i$, the circuit computes $\gamma_i$ by querying the bits at its $s_n$ table positions and storing them. This circuit preserves the block index and the table, and reversing its gates gives the inverse.

To compare $m_{ij}$ and $m_{ji}$ for distinct blocks $i,j$, define the string $b \in \{0,1\}^M$ by
\begin{equation}\label{eq:direction-string}
 b_a = x_i(P_j(a)) \lor x_j(P_i(a)), \qquad a \in [M].
\end{equation}
If $b$ contains a one, its least nonzero index is $\Delta_{ij}$. At that index, the two defining bits determine the direction between two blocks declared dense: if only the first is one, then $i \to j$; if only the second is one, then $j \to i$; if both are one, the edge is directed from the smaller block index to the larger. A query to $b$ uses four queries to $x$, by computing the two bits, adding their OR to a response bit, and reversing the two queries.

Finding the least nonzero position of $b$ is an instance of minimum finding. We use binary search with quantum search to obtain the fixed query count and small error required for the checking circuits below.

For $L \in [M]$ and a string $u \in \{0,1\}^L$, the following procedure searches for the least index $a \in [L]$ with $u_a = 1$. Starting with $[L]$, it repeatedly searches the earlier half of the remaining interval with error at most $n^{-6}$ and verifies every reported index. It keeps that half if a one is found and the later half otherwise. When one index $a$ remains, it queries $u_a$ and returns $a$ if $u_a = 1$, or $\bot$ otherwise. The unitary implementation stores the search outcomes and intervals.

\begin{lemma}\label{lem:prefix-search}
This procedure has a unitary implementation with fixed query count $O(\sqrt{L \ln n})$. Measuring its output register gives $\bot$ with certainty on the all-zero string and the least nonzero index with probability at least $1 - n^{-5}$ on a nonzero string.
\end{lemma}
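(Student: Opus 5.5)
The proof analyses the binary search directly. Number its rounds by the current interval length: starting from $[L]$, each round halves the remaining interval, so there are $R = \lceil \log_2 L\rceil = O(\ln n)$ rounds, using $M = 8n$. In a round whose interval $I$ has length $\ell$, we run the bounded-error search of \cite[Theorem~3]{BuhrmanCleveDeWolfZalka} on the earlier half of $I$, which has length at most $\lceil \ell/2\rceil$. We amplify its error to at most $n^{-6}$ by $O(\ln n)$ repetitions, or more cheaply by the version that uses $O(\sqrt{\ell \ln(1/\varepsilon)})$ queries. Any reported index is then checked with one query to $u$. The key invariant is that the procedure \emph{never} discards a one incorrectly in the direction that matters. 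A half is kept only if a verified one was found in it, so a kept earlier half always contains a one. Moving to the later half happens only when no verified one was found, which is wrong only if the search erred.

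Correctness has two cases. On the all-zero string no index ever verifies as one, so the final query returns $\bot$ with certainty. On a nonzero string with least one at position $a^\ast$, we show by induction that if every round's search succeeds, then $a^\ast$ stays in the remaining interval. If $a^\ast$ lies in the earlier half, that half contains a one, so a successful search finds and verifies some one, and we keep the half. If $a^\ast$ lies in the later half, the earlier half is all zero, so no index verifies and we move right. Since $a^\ast$ is the least one, no earlier one exists, and the procedure ends at $a^\ast$. A union bound over $O(\ln n)$ rounds, each failing with probability at most $n^{-6}$, gives success probability at least $1 - O(\ln n)\,n^{-6} \ge 1 - n^{-5}$ for large $n$.

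For the query count, we fix each round's query budget in advance so that the count is input-independent. The round on an interval of length $\ell_t \le L/2^{t}+1$ gets budget $c\sqrt{\ell_t \ln n}$, plus $O(1)$ for verification. Summing the geometric series gives $O(\sqrt{L\ln n}) + O(\ln n)$. This is the one delicate point: the additive $O(\ln n)$ from the verification queries and the rounding of small intervals must be absorbed. Since $\ln n \le \sqrt{L \ln n}$ fails only when $L < \ln n$, we handle small $L$ by treating the last $O(\log\ln n)$ rounds, where interval lengths are below $\ln n$, with deterministic queries. Their total cost $O(\ln n)$ is still $O(\sqrt{L\ln n})$ whenever $L\ge 1$ only after the constant is adjusted, so instead we absorb it by noting that the lemma permits the constant $c_0$ to depend on nothing else. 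Alternatively, the verification query can be merged into the search's fixed budget.

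For the unitary implementation, we follow \Cref{app:implementations}. Each round's search outcome and the current interval are stored in fresh registers rather than measured. The next round's search is controlled on the stored interval, and the padding to the fixed budget uses controlled oracle applications acting as the identity. Measuring the output register reproduces the distribution analysed above.
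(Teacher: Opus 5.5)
Your algorithm and correctness argument are the paper's. You run the bounded-error search at error $n^{-6}$ on the earlier half and verify every reported index. When all searches succeed, the least one stays in the interval; a union bound over the levels and fixed per-level budgets, with stored intervals and outcomes, give the unitary version. The step that fails as written is the one you flag as delicate. You bound the overhead by $O(\ln n)$ and then assert it is $O(\sqrt{L\ln n})$ ``after the constant is adjusted'', absorbing it into $c_0$. That cannot work. The constant $c_0$ must be absolute, since it is fixed once and then used to define $\theta_{uv}$ for every $L\in[M]$. For $L=O(1)$, however, the ratio $\ln n/\sqrt{L\ln n}$ is of order $\sqrt{\ln n}$, so nothing can be absorbed. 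Your fallback of reading short intervals deterministically is sound but mis-costed. Those rounds form a geometric sum with total $O(\min\{L,\ln n\})$, not $O(\ln n)$.

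The repair is one line. The number of levels is $\lceil\log_2 L\rceil$, and you should use this rather than the cruder $O(\ln n)$. Each level adds one verification query and an extra $\sqrt{\ln n}$ from $\ell_t\le L/2^t+1$, so the overhead totals $O(\lceil\log_2 L\rceil\sqrt{\ln n})$. Since $\lceil\log_2 L\rceil=O(\sqrt L)$, this is $O(\sqrt{L\ln n})$. In your deterministic variant the same conclusion follows from $\min\{L,\ln n\}\le\sqrt{L\ln n}$. The sharper count is genuinely needed. With $R=O(\ln n)$ the rounding term alone would only be bounded by $O(\ln^{3/2}n)$, which is not $O(\sqrt{L\ln n})$ for small $L$. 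The paper does exactly this: it takes the level budget $\lceil c_1\min\{\ell_d,\sqrt{\ell_d\ln n}\}\rceil$ with $\ell_d=\lceil L/2^{d+1}\rceil$ and sums over $d<\lceil\log_2 L\rceil$. The minimum with $\ell_d$ also covers the regime where reading every bit is cheaper than the search.
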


\begin{proof}
For $0 < \delta < 1 / 2$, the search algorithm of \cite[Theorem~3]{BuhrmanCleveDeWolfZalka} finds a nonzero position among $L$ bits with error at most $\delta$ using $O(\sqrt{L \ln(1 / \delta)})$ queries. Checking a reported position uses one additional query. Querying all bits gives the alternative bound $L$.

The binary search described before \Cref{lem:prefix-search} takes at most $\lceil\log_2 L\rceil$ steps. At step $d = 0,\ldots,\lceil\log_2 L\rceil - 1$, the interval has at most $\lceil L / 2^d\rceil$ indices, so the earlier half has at most $\ell_d = \lceil L / 2^{d + 1}\rceil$ indices. For a sufficiently large absolute constant $c_1$, the search and verification use at most
\[
 q_d = \left\lceil c_1\min\{\ell_d,\sqrt{\ell_d \ln n}\}\right\rceil
\]
queries. The circuit for this level uses exactly $q_d$ queries. Each oracle call is controlled by the interval register and acts as the identity on the span of basis states in which that register encodes a singleton interval. The implementation stores the verified results and updated intervals in separate registers. Including the final query, the count is
\[
 1 + \sum_{d = 0}^{\lceil\log_2 L\rceil - 1}q_d = O(\sqrt{L \ln n}).
\]
For a nonzero string, if every search is correct, each stored interval contains its least nonzero position. A union bound over $O(\ln n)$ searches of error at most $n^{-6}$ gives error at most $n^{-5}$ for sufficiently large $n$. For the all-zero string, verification prevents a reported nonzero position, and the final query gives zero.
\end{proof}

Put $T(L) = \lceil c_0 \sqrt{L \ln n}\rceil$ for $L \in [M]$. By the choice of $c_0$, the search circuit uses at most $T(L)$ queries to $x$ after the simulation above.

The circuits below implement the comparison of $m_{ij}$ with $m_{ji}$. For distinct $i,j$, set $a_{i,j,L} = 00$ when $L < M$ and $\Delta_{ij} > L$. When $L = M$ or $\Delta_{ij} \leq L$, set $a_{i,j,L} = 10$ if $m_{ji} < m_{ij}$ or if $m_{ji} = m_{ij}$ and $j < i$, and set $a_{i,j,L} = 01$ otherwise. For distinct $i,j$ and $z \in \{0,1\}^2$, define
\begin{equation}\label{eq:checking-unitary}
 \mathcal C_L \ket{i,j,z}
 = \ket{i,j,z \oplus a_{i,j,L}}.
\end{equation}
On basis states with $i = j$, let $\mathcal C_L$ act as the identity.
When both blocks are declared dense, the answers $10$ and $01$ encode the directions $j \to i$ and $i \to j$ in $\mathcal G$, respectively.

For distinct indices, the circuit $\widetilde{\mathcal C}_L$ applies the circuit of \Cref{lem:prefix-search} to $b_1,\ldots,b_L$. If the search returns an index, it queries the two defining bits and uses answer $01$ if only the first bit is one, answer $10$ if only the second is one, and the index rule if both are one; it uses $00$ if both are zero. If the search returns $\bot$, the circuit uses answer $00$ for $L < M$ and the index rule for $L = M$. It adds the answer to the response register and reverses the computation. When the indices are equal, it acts as the identity.

\begin{lemma}\label{lem:edge-implementation}
For every $L \in [M]$, the circuit $\widetilde{\mathcal C}_L$ uses a fixed number of queries, at most $2(T(L) + 2)$. It preserves the indices and satisfies $\widetilde{\mathcal C}_L^2 = I$. If $\Pi_0$ projects its workspace onto zero, then
\begin{equation}\label{eq:edge-implementation-error}
 \left\|(\widetilde{\mathcal C}_L - \mathcal C_L)\Pi_0\right\| \leq n^{-2}.
\end{equation}
\end{lemma}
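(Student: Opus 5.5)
The plan is to obtain \Cref{lem:edge-implementation} as an instance of \Cref{lem:unitary-approximation}. Take the index register to hold $(i,j)$ and the answer register to have $m = 2$ qubits with target strings $a_{i,j,L}$. For each distinct pair $(i,j)$, let $\mathcal U_{(i,j)}$ be the forward computation of $\widetilde{\mathcal C}_L$ before the answer is added to the response. This computation consists of the prefix-search circuit of \Cref{lem:prefix-search} on $b_1,\ldots,b_L$, followed by the query of the two defining bits at the reported index, followed by the input-independent rule writing $00$, $01$ or $10$ into the answer register. For $i = j$, let $\mathcal U_{(i,i)}$ be the identity and set $a_{i,i,L} = 00$. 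Then $\widetilde{\mathcal C}_L = \mathcal U^\dagger \mathsf C \mathcal U$ has exactly the form \eqref{eq:unitary-by-index}, since every oracle call inside $\mathcal U$ is controlled by the index register and leaves it unchanged. This gives index preservation and $\widetilde{\mathcal C}_L^2 = I$ directly from the lemma.

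\textbf{Query count.} By \Cref{lem:prefix-search} and the choice of $c_0$, the prefix search uses a fixed number of queries, at most $T(L)$, counted as queries to $x$ after simulating each query to $b$ by four queries to $x$. Querying the two defining bits at the returned index adds two queries. So $\mathcal U$ uses at most $T(L) + 2$ queries, and $\mathcal U^\dagger$ uses the same number. This gives the fixed bound $2(T(L)+2)$. To make the count fixed regardless of whether the search returns $\bot$, I would pad it with dummy controlled queries, or control them on the output flag as in \Cref{app:implementations}.

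\textbf{Correctness probability.} Fix distinct $i,j$. I will show that measuring the answer register after $\mathcal U_{(i,j)}\ket{0}$ gives $a_{i,j,L}$ with probability at least $1-n^{-5}$. There are three cases.
\begin{itemize}
\item If $b_1\cdots b_L$ is all zero, the search returns $\bot$ with certainty. Then $\Delta_{ij} > L$. For $L<M$ the circuit writes $00 = a_{i,j,L}$. For $L = M$ the string $b$ is entirely zero, so $m_{ij} = m_{ji} = M+1$, and the index rule matches the tie case in the definition of $a_{i,j,L}$.
\item If the prefix is nonzero, then $\Delta_{ij}\le L$. With probability at least $1-n^{-5}$ the search returns exactly $\Delta_{ij}$.
\item At that index, I will check that the rule reproduces $a_{i,j,L}$. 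If only $x_i(P_j(\Delta))=1$, then $m_{ij} = \Delta < m_{ji}$, which gives $01$. If only the second bit is one, the answer is $10$. If both are one, then $m_{ij} = m_{ji}$ and the index rule applies.
\end{itemize}
The main care point is matching these encodings with the definition of $a_{i,j,L}$, including the orientation: $01$ means $i\to j$.

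With $\delta = n^{-5}$, \Cref{lem:unitary-approximation} gives $\|(\widetilde{\mathcal C}_L-\mathcal C_L)\Pi_0\| \le 2n^{-5/2} \le n^{-2}$ for $n\ge 4$. The only real obstacle is the bookkeeping ensuring that $\mathcal U$ is block-diagonal in the index with a fixed query count, which follows from the unitary conventions of \Cref{app:implementations}.
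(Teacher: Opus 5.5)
Your proposal is correct and follows the paper's proof essentially step by step. You write $\widetilde{\mathcal C}_L$ as $\mathcal U^\dagger \mathsf C \mathcal U$ with the pair $(i,j)$ as the index register, and use \Cref{lem:prefix-search} to check case by case that the answer equals $a_{i,j,L}$ with probability at least $1-n^{-5}$. You count $T(L)+2$ queries in each direction, and invoke \Cref{lem:unitary-approximation} to obtain $\widetilde{\mathcal C}_L^2=I$, index preservation, and the bound $2n^{-5/2}\le n^{-2}$. Your choice $\mathcal U_{(i,i)}=I$, $a_{i,i,L}=00$ simply folds into that lemma the equal-index case, which the paper handles separately.
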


\begin{proof}
For distinct indices, the two bits queried at the least nonzero position of $b_1,\ldots,b_L$, when present, determine the comparison between $m_{ij}$ and $m_{ji}$, and hence $a_{i,j,L}$. If this prefix is zero and $L < M$, the answer is $00$. If $L = M$ and the whole string is zero, both first indices are $M + 1$, so the index rule determines $a_{i,j,M}$. Thus \Cref{lem:prefix-search} gives the answer in \eqref{eq:checking-unitary} with probability at least $1 - n^{-5}$. When the indices are equal, both $\mathcal C_L$ and $\widetilde{\mathcal C}_L$ act as the identity.

By the definition of $T(L)$, the circuit in \Cref{lem:prefix-search} uses a fixed number of queries, at most $T(L)$. It acts only on the span of basis states with distinct indices and as the identity on the subspace with equal indices. The two subsequent queries act only when the search output register contains an index in $[L]$. The implementation keeps the answer and auxiliary registers, adds the answer to the response register, and reverses the computation. The resulting circuit uses a fixed number of queries, at most $2(T(L) + 2)$. It preserves the indices.

In \Cref{lem:unitary-approximation}, the two indices form the index register, the answer register and auxiliary registers form the workspace, and the register holding $z$ is the response register. The probability bound $n^{-5}$ gives norm error at most $2 n^{-5 / 2} \leq n^{-2}$ for sufficiently large $n$. The same lemma gives $\widetilde{\mathcal C}_L^2 = I$.
\end{proof}

With the classifications available, equality and directions involving a declared sparse block require no queries. For distinct declared dense blocks $v,w$, put $L_\ell = \min\{2^\ell,M\}$ for $0 \leq \ell \leq \lceil \log_2 M\rceil$. The comparison procedure applies $\mathcal C_{L_\ell}$ in increasing order of $\ell$ until a direction is returned. The final prefix has length $M$ and always determines the direction. Let $\ell_*$ be the least index for which $\mathcal C_{L_{\ell_*}}$ returns a direction. If $\Delta_{vw} \leq M$, then $L_{\ell_*} < 2 \Delta_{vw}$; otherwise $L_{\ell_*} = M$. The total query count for stages $0,\ldots,\ell_*$ is at most
\begin{equation}\label{eq:comparison-query-sum}
 \sum_{\ell = 0}^{\ell_*} 2(T(L_\ell) + 2)
 = O\left(\sqrt{L_{\ell_*} \ln n}\right) = O(\theta_{vw}).
\end{equation}

\begin{proof}[Proof of \eqref{eq:edge-square-sum}]
Symmetry follows from $\Delta_{ij} = \Delta_{ji}$. Since $\Delta_{ij} \leq M + 1$, every $\theta_{ij}$ is $O(\sqrt{n \ln n})$. Assume that the test succeeds. If block $i$ is declared sparse, then $\theta_{ij} = 0$ for every $j$. If block $i$ is declared dense, then $|S_i| > n / 3$, so \Cref{lem:first-hit-sum} gives $\sum_{j \neq i}m_{ij} \leq 8 n$. In particular, $\Delta_{ij} \leq m_{ij} \leq M$. By the definition of $\theta_{ij}$,
\[
 \sum_{j = 1}^n \theta_{ij}^2
 \leq (c_0 + 1)^2 \ln n \sum_{\substack{j \neq i\\ \gamma_j = 1}}\Delta_{ij}
 \leq 8(c_0 + 1)^2 n \ln n.
\]
Taking $K = 8(c_0 + 1)^2$ proves \eqref{eq:edge-square-sum}.
\end{proof}

\subsection{Membership checking circuits}\label{subsec:membership}\label{app:membership-implementation}

Fix the input and a table of positions chosen for the test in \Cref{subsec:block-graph}.

For $W = (w_1,\ldots,w_h)$, one has $v \in U(W)$ if and only if $v \notin W$ and $v \to w_d$ for every $d \in [h]$. The procedure tests these conditions in the stored order, using the prefix comparisons from \Cref{sec:edge-proof} for pairs declared dense and the index and classification rules for the other pairs.

The list register consists of a length field and $n$ pairs, each containing a vertex field and a classification bit. Write its computational basis states as $\ket{h} \ket{w_1,g_1}\cdots \ket{w_n,g_n}$. Let $\Pi_{\mathrm{list}}$ project onto the span of these basis states with $h \in \{0,\ldots,n\}$, distinct vertices $w_1,\ldots,w_h \in [n]$, and $g_d = \gamma_{w_d}$ for every $d \in [h]$. Such a basis state encodes $W = (w_1,\ldots,w_h)$; the pairs indexed by $d > h$ are ignored.

As each vertex is appended in \Cref{alg:sourcefinding}, its classification is computed and stored, so the list register remains in this range. The membership computation therefore reads the stored bits $g_d$.

For every integer $q \geq 0$, every basis state in the range of $\Pi_{\mathrm{list}}$ encoding $W = (w_1,\ldots,w_h)$, and every tested vertex $v \in [n]$, let $b_{W,v,q}$ encode $*$ when $q < t_v(W)$, one when $q \geq t_v(W)$ and $v \in U(W)$, and zero otherwise. Define the checking circuit $\mathcal M_q$ for membership in $U(W)$ by
\[
 \mathcal M_q \ket{W,v,z} = \ket{W,v,z \oplus b_{W,v,q}}
\]
for $z \in \{0,1\}^2$. On every other basis state of the list register, or when the tested vertex lies outside $[n]$, $\mathcal M_q$ acts as the identity.

\Cref{alg:truncated-membership} computes $b_{W,v,q}$ using the responses defined by $\mathcal C_{L_\ell}$. Its counter $q'$ records the query counts of $\widetilde{\mathcal C}_{L_\ell}$. Before applying a classification or checking circuit, the membership procedure returns $*$ if completing that circuit would make the query count exceed $q$. On an arbitrary computational basis state, the procedure first checks that $h \in \{0,\ldots,n\}$ and that the tested vertex lies in $[n]$. If these checks pass, it checks that $w_d \in [n]$ for every $d \in [h]$. If any check fails, it returns $*$ before making a query.

\begin{algorithm}[ht]
\caption{Membership in $U(W)$ with at most $q$ queries}
\label{alg:truncated-membership}
\begin{algorithmic}[1]
\Procedure{Membership}{$W,v,q$}
  \State \Return $*$ if $q < s_n$
  \State Classify $v$, store $\gamma_v$, and set $q' \gets s_n$
  \For{$d = 1,\ldots,h$}
    \State \Return $0$ if $v = w_d$
    \If{$g_d \gamma_v = 0$}
      \State \Return $0$ if $g_d = 1$ or if $g_d = \gamma_v = 0$ and $w_d < v$
    \Else
      \For{$\ell = 0,\ldots,\lceil \log_2 M \rceil$}
        \State \Return $*$ if $q'$ plus the query count of $\widetilde{\mathcal C}_{L_\ell}$ exceeds $q$
        \State Apply $\mathcal C_{L_\ell}$ to $(w_d,v)$ on a fresh response register initialized to $00$
        \State Increase $q'$ by the query count of $\widetilde{\mathcal C}_{L_\ell}$
        \If{the response is $10$}
          \State Leave the inner loop
        \ElsIf{the response is not $00$ or $\ell = \lceil \log_2 M \rceil$}
          \State \Return $0$
        \EndIf
      \EndFor
    \EndIf
  \EndFor
  \State \Return $1$
\EndProcedure
\end{algorithmic}
\end{algorithm}

Since $L_{\lceil \log_2 M\rceil} = M$, the circuit $\mathcal C_M$ returns $01$ or $10$ whenever the inner loop is entered. Thus, without the restriction to at most $q$ queries, the procedure would return the membership bit of $U(W)$ for every basis state in the range of $\Pi_{\mathrm{list}}$ and every tested vertex $v \in [n]$.

For a basis state in the range of $\Pi_{\mathrm{list}}$ encoding $W$ and a tested vertex $v \in [n]$, the count $t_v(W)$ defined in \Cref{subsec:candidate-sets} is the final value of $q'$ when the procedure returns zero or one. The procedure using at most $q$ queries returns $*$ when $q < t_v(W)$ and returns the membership bit when $q \geq t_v(W)$. In particular, $t_v(\emptyset) = s_n$. Here $q$ bounds the queries used for one membership test.

\begin{proof}[Proof of \Cref{lem:membership-counts}]
With an empty list, the procedure only classifies the tested vertex, giving $t_v(\emptyset) = s_n$. Let $w \in U(W)$, put $W' = (w_1,\ldots,w_h,w)$, and fix $v \in [n]$. If $v \in U(W) \setminus \{w\}$, the tests against $w_1,\ldots,w_h$ accept, and the additional applications used to determine whether $v \to w$ have total query count $O(\theta_{vw})$. If $v \notin U(W)$, both computations return zero at the least index $d \in [h]$ with $v = w_d$ or $w_d \to v$, so $t_v(W') = t_v(W)$. If $v = w$, both computations perform the tests against $w_1,\ldots,w_h$; the computation for $W'$ then detects equality without a query, so $t_w(W') = t_w(W)$. This proves \eqref{eq:membership-completion-update} and the stated cases of equality.
\end{proof}

The unitary circuit $\mathcal V_q$ implements \Cref{alg:truncated-membership}, replacing each $\mathcal C_{L_\ell}$ by $\widetilde{\mathcal C}_{L_\ell}$ on fresh response and workspace registers initialized to zero. Following \Cref{app:implementations}, it stores the classification, responses, and return values, and copies the first returned value to the internal two-qubit answer register. The circuit makes exactly $q$ queries and preserves the list register and the register holding the tested vertex.

Let $\mathcal H_{\mathrm{list}}$ and $\mathcal H_{\mathrm{vertex}}$ be the state spaces of the list register and the register holding the tested vertex. Let $\mathcal H_{\mathrm{answer}}$ be the state space of the internal answer register, and let $\mathcal H_{\mathrm{aux}}$ be the joint state space of the classification register, the return registers, and the response registers and workspaces used by $\widetilde{\mathcal C}_{L_\ell}$. The separate response register has state space $\mathcal H_{\mathrm{response}}$. Put $\mathcal H_{\mathrm{work}} = \mathcal H_{\mathrm{answer}} \otimes \mathcal H_{\mathrm{aux}}$. The joint state space is
\[
 \mathcal H_{\mathrm{list}} \otimes \mathcal H_{\mathrm{vertex}} \otimes \mathcal H_{\mathrm{work}} \otimes \mathcal H_{\mathrm{response}}.
\]
Let
\[
 \Pi_{\mathrm{work}} = I_{\mathrm{list}} \otimes I_{\mathrm{vertex}} \otimes \ket{0}_{\mathrm{work}} \bra{0}_{\mathrm{work}} \otimes I_{\mathrm{response}}.
\]
A basis state in the range of $\Pi_{\mathrm{work}} \Pi_{\mathrm{list}}$ has the form $\ket{W,v} \ket{0}_{\mathrm{work}} \ket{z}_{\mathrm{response}}$. Extend $\mathcal V_q$ by the identity on $\mathcal H_{\mathrm{response}}$ and $\mathcal M_q$ by the identity on $\mathcal H_{\mathrm{work}}$. Let $\mathsf C \ket{a,z} = \ket{a,z \oplus a}$ add the internal two-qubit answer $a$ to the separate response $z$ and act as the identity on the list, vertex, and auxiliary registers.

\begin{lemma}\label{lem:membership-implementation}
For $q = 0$, put $\widetilde{\mathcal M}_0 = I$; for $q \geq 1$, define
\[
 \widetilde{\mathcal M}_q = \mathcal V_q^\dagger \mathsf C \mathcal V_q.
\]
For every integer $q \geq 0$, the circuit $\widetilde{\mathcal M}_q$ uses $2 q$ queries and satisfies $\mathcal M_q^2 = \widetilde{\mathcal M}_q^2 = I$. Its approximation error satisfies
\begin{equation}\label{eq:membership-error}
 \left\|(\widetilde{\mathcal M}_q - \mathcal M_q)\Pi_{\mathrm{work}} \Pi_{\mathrm{list}}\right\|
 \leq 2 \sqrt{q}\,n^{-2}.
\end{equation}
\end{lemma}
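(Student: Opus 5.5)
The plan is to handle the three claims in turn: the query count, the two involution identities, and the approximation bound \eqref{eq:membership-error}. For $q=0$ everything is immediate: $\widetilde{\mathcal M}_0=I$ uses no queries, and $\mathcal M_0$ reports $*$ on every state because $t_v(W)\geq s_n\geq 1>0$. Encoded as $00$, this means $\mathcal M_0=I$ as well.

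For $q\geq 1$, the query count comes directly from the construction. The circuit $\mathcal V_q$ makes exactly $q$ queries, $\mathsf C$ makes none, and $\mathcal V_q^\dagger$ makes $q$, for a total of $2q$. The involutions follow the pattern of \Cref{lem:unitary-approximation}. First, $\mathsf C^2=I$, so $\widetilde{\mathcal M}_q^2=\mathcal V_q^\dagger\mathsf C^2\mathcal V_q=I$. Second, $\mathcal M_q$ XORs a fixed string $b_{W,v,q}$ into the response register, and it acts as the identity off the valid basis states. Applying it twice therefore cancels, so $\mathcal M_q^2=I$.

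For the error bound, I would reduce to \Cref{lem:unitary-approximation}. Take the pair $(W,v)$, ranging over basis states in the range of $\Pi_{\mathrm{list}}$ together with $v\in[n]$, as the index register, and take $a_{(W,v)}=b_{W,v,q}$. The circuit $\mathcal V_q$ preserves the list and vertex registers, so it has the block form \eqref{eq:unitary-by-index}. The same block form holds on the invalid basis states, where the procedure returns $*$ without a query; there $\mathcal M_q$ is the identity and agrees exactly with $\widetilde{\mathcal M}_q$.

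It then remains to bound the probability $\delta$ that the answer register of $\mathcal V_q\ket{W,v}\ket{0}$ differs from $b_{W,v,q}$. I would compare the run that uses the exact $\mathcal C_{L_\ell}$, which is deterministic and yields $b_{W,v,q}$, with the actual run using the $\widetilde{\mathcal C}_{L_\ell}$ (each applied on fresh workspace, so in the regime $\Pi_0$ of \Cref{lem:edge-implementation}). The control flow between applications is classical and controlled on the stored responses, and the query-budget checks depend only on the ideal trajectory. A hybrid argument therefore bounds the norm distance between the two final states by the sum of the per-application errors, each at most $n^{-2}$ by \eqref{eq:edge-implementation-error}. Each $\widetilde{\mathcal C}_{L_\ell}$ uses at least one query and the total budget is $q$, so there are at most $q$ applications, giving distance at most $q n^{-2}$.

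The main obstacle is getting $\sqrt q$ rather than $q$. Feeding distance $\epsilon=qn^{-2}$ into \Cref{lem:unitary-approximation} gives $\delta\leq\epsilon^2$ and error $2qn^{-2}$, which is too weak. To fix this I would use the probability version of the per-application error: each $\widetilde{\mathcal C}_{L_\ell}$ yields the wrong response with probability at most $n^{-5}$ (\Cref{lem:prefix-search}). A union bound over at most $q$ applications on the measured trajectory then gives $\delta\leq qn^{-5}$, hence error $2\sqrt{q}\,n^{-5/2}\leq 2\sqrt q\,n^{-2}$. Justifying this union bound requires the stored-outcome implementation: because intermediate responses are recorded in separate registers, measuring them commutes with the rest of the circuit, so the coherent run has the same output distribution as the classical run with intermediate measurements.
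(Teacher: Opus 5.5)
Your proposal is correct and follows the paper's proof: measure the stored responses, union-bound the first incorrect one over at most $q$ applications of $\widetilde{\mathcal C}_{L_\ell}$, defer the measurements, and apply \Cref{lem:unitary-approximation} with the resulting $\delta$. The only difference is the per-application failure probability. You take $n^{-5}$ directly from the prefix search, which is valid because the uncomputation in $\widetilde{\mathcal C}_{L_\ell}$ does not act on the response register. The paper instead obtains $n^{-4}$ by squaring the norm bound \eqref{eq:edge-implementation-error}. Both give the stated $2\sqrt{q}\,n^{-2}$.
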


\begin{proof}
For $q = 0$, the inequality $t_v(W) \geq s_n > 0$ gives $b_{W,v,0} = 00$ on valid list states and vertex values. Together with the identity action elsewhere, this gives $\mathcal M_0 = \widetilde{\mathcal M}_0 = I$. Assume that $q \geq 1$.

The circuit $\mathcal V_q$ stores $\gamma_v$ and keeps the response and workspace of each call to $\widetilde{\mathcal C}_{L_\ell}$. Later gates use these responses as controls and preserve them. It starts a classification or checking circuit only if completing that circuit keeps the total query count at most $q$. The convention in \Cref{app:implementations} then gives $\mathcal V_q$ exactly $q$ queries.

Fix a list basis state in the range of $\Pi_{\mathrm{list}}$ and a tested vertex $v \in [n]$. Consider measuring the response register after each application of $\widetilde{\mathcal C}_{L_\ell}$. A response is correct when it agrees with $\mathcal C_{L_\ell}$ for the same indices. Assume that all earlier responses are correct. The next application then has fixed indices and a fresh zero workspace. By \eqref{eq:edge-implementation-error}, its probability of giving an incorrect response is at most $n^{-4}$. At most $q$ such applications occur, since each uses at least one query. If all responses are correct, the internal answer is $b_{W,v,q}$ by the definition of $t_v(W)$. A union bound on the first incorrect response bounds the probability of a different internal answer by $q n^{-4}$. The same bound holds when these measurements are deferred.

In the tensor decomposition preceding \Cref{lem:membership-implementation}, $\mathcal H_{\mathrm{list}} \otimes \mathcal H_{\mathrm{vertex}}$ is the index space of \Cref{lem:unitary-approximation}. Both registers are preserved by $\mathcal V_q$. Applying that lemma with $\delta = q n^{-4}$ gives error at most $2 \sqrt{q}\,n^{-2}$ on the valid list and vertex subspace with zero workspace. If the tested vertex lies outside $[n]$, the initial range check leaves the internal answer at $00$, so $\widetilde{\mathcal M}_q$ and $\mathcal M_q$ agree on zero workspace. Preservation of the register holding the tested vertex combines these estimates into \eqref{eq:membership-error}.

The definition of the action on the response register gives $\mathcal M_q^2 = I$, and \Cref{lem:unitary-approximation} gives $\widetilde{\mathcal M}_q^2 = I$. The circuit $\mathsf C$ uses no queries, while $\mathcal V_q$ and its inverse each use $q$, giving the claimed count $2 q$.
\end{proof}

Fix a list basis state in the range of $\Pi_{\mathrm{list}}$ encoding $W$. On this subspace, the checking circuits $\mathcal M_q$ have the form assumed in \Cref{app:variable-search}, with $G = U(W)$ and $t_v = t_v(W)$. By \Cref{lem:membership-implementation}, each circuit $\widetilde{\mathcal M}_q$ uses $2 q$ queries.

\subsection{Estimates for candidate sets}\label{app:candidate-estimates}

\begin{definition}\label{def:candidate-process}
For a directed complete graph on $[n]$, an absolute constant $C \geq 1$, and an initial set $U_0 \subseteq [n]$, the process starts with $W_0 = \emptyset$. Given $W_h = (w_0,\ldots,w_{h - 1})$ and $U_h$, step $h$ proceeds as follows.
\begin{enumerate}
\item If $U_h = \emptyset$, the process stops.
\item If $U_h \neq \emptyset$, each vertex has an assigned selection probability at most $C / |U_h|$, and these probabilities sum to at most one. They may depend on $W_h$.
\item The process selects $w_h = v$ with the probability assigned to $v$, for each $v \in U_h$, and stops with the remaining probability.
\item If $w_h$ is selected, put $W_{h + 1} = (w_0,\ldots,w_h)$ and $U_{h + 1} = \{v \in U_h : v \to w_h\}$.
\end{enumerate}
After the process stops, all later sets are $\emptyset$. Let $W_{\mathrm{end}}$ be the final ordered list of selected vertices.
\end{definition}

\begin{proof}[Proof of \Cref{lem:candidate-estimates}]
We first bound the expected decrease of $m_h$ to prove the first two estimates in \Cref{lem:candidate-estimates}. We then use this decrease to bound $\mathbb E \sum_v t_v(W_{\mathrm{end}})^2$.

Fix the choices preceding a step with $m_h = m > 0$. If the process stops, $m_{h + 1} = 0$. Otherwise, $m - m_{h + 1}$ is one plus the outdegree of the selected vertex in $U_h$. For $0 \leq b < m$, at most $2b$ choices of vertex give a decrease of at most $b$. Indeed, if there are $d > 0$ such vertices, each has outdegree at most $b - 1$ within $U_h$, so $\binom{d}{2} \leq d(b - 1)$ and $d \leq 2b - 1$. Since each vertex has selection probability at most $C / m$, the probability over the outcome of this step satisfies $\Pr[m - m_{h + 1} \leq b] \leq 2Cb / m$. The identity $\mathbb E[m - m_{h + 1}] = \int_0^m \Pr[m - m_{h + 1} > b]\,db$ therefore gives
\begin{equation}\label{eq:candidate-decrease}
 \mathbb E[m - m_{h + 1}] \geq \int_0^{m / (2C)} \left(1 - \frac{2Cb}{m}\right)\,db = \frac{m}{4C}.
\end{equation}

For every $a > 0$, averaging \eqref{eq:candidate-decrease} over the preceding choices gives
\[
 \mathbb E \sum_{h:m_h > 0} m_h^{1 - a}
 \leq 4C\,\mathbb E \sum_{h:m_h > 0} \frac{m_h - m_{h + 1}}{m_h^a}
 \leq 4C \sum_{j = 1}^n j^{-a}.
\]
The last inequality holds because the integer intervals $\{m_{h + 1} + 1,\ldots,m_h\}$ are disjoint and $j^{-a} \geq m_h^{-a}$ on each such interval. Taking $a = 1$ and $a = 3 / 2$ gives $\mathbb E R = O(\ln n)$ and $\mathbb E \sum_{h:m_h > 0} m_h^{-1 / 2} = O(1)$.

For the last estimate in \eqref{eq:completion-profile}, we bound the successive increases in $t_v(W_h)$ while $v \in U_h$. By induction using \eqref{eq:candidate-update} and $U_0 = U(\emptyset) = [n]$, one has $U_h = U(W_h)$ whenever step $h$ is reached. Thus \eqref{eq:membership-completion-update} applies whenever $w_h$ is selected.

Use a vector $\boldsymbol{c}_h \in \mathbb R^n$ to bound the additional queries at step $h$. If $w_h$ is selected, put $c_h(v) = \theta_{v w_h}$ for $v \in U_h$ and $c_h(v) = 0$ otherwise. If no vertex is selected at step $h$, put $\boldsymbol{c}_h = 0$ and set all later vectors to zero. Whenever $w_h$ is selected, \eqref{eq:membership-completion-update} bounds $t_v(W_{h + 1}) - t_v(W_h)$ by an absolute constant times $c_h(v)$. For $v \notin U_h$, the first test excluding $v$ compares it with an entry of $W_h$ and remains unchanged after the append, so $t_v(W_{h + 1}) = t_v(W_h)$ and $c_h(v) = 0$.

Put $V = \mathbb E\|\sum_h \boldsymbol{c}_h\|^2$. Iterating \eqref{eq:membership-completion-update} shows that $t_v(W_{\mathrm{end}})$ is at most $s_n$ plus an absolute constant times $\sum_h c_h(v)$. Therefore,
\[
 \mathbb E \sum_{v = 1}^n t_v(W_{\mathrm{end}})^2 = O(n s_n^2 + V).
\]
It therefore suffices to prove $V = O(n \ln^2 n)$.

Symmetry of $\theta_{ij}$ and \eqref{eq:edge-square-sum} give $\|\boldsymbol{c}_h\|^2 = \sum_{v \in U_h} \theta_{v w_h}^2 = \sum_{v \in U_h} \theta_{w_h v}^2 = O(n \ln n)$ whenever a vertex is selected. Hence $\sum_h \|\boldsymbol{c}_h\|^2 = O(n \ln n\,R)$.

To bound the cross terms in $V$, put $\lambda = 1 - 1 / (32C)$. Given the preceding choices, Jensen's inequality and \eqref{eq:candidate-decrease} give
\begin{equation}\label{eq:candidate-contraction}
 \mathbb E \sqrt{m_{h + 1}} \leq \sqrt{1 - \frac{1}{4C}} \sqrt{m_h} \leq \lambda \sqrt{m_h}.
\end{equation}
The same estimate holds when $m_h = 0$, since all subsequent sets are empty.

When $m_h > 0$, let $\boldsymbol{\chi}_h \in \mathbb R^n$ be the normalized characteristic vector of $U_h$, with coordinates $\chi_h(v) = 1 / \sqrt{m_h}$ for $v \in U_h$ and $\chi_h(v) = 0$ otherwise. When $m_h = 0$, put $\boldsymbol{\chi}_h = 0$.

For $h' \geq h$ and $m_h > 0$, nesting gives $\langle \boldsymbol{\chi}_h,\boldsymbol{\chi}_{h'}\rangle = \sqrt{m_{h'} / m_h}$. After fixing the choices preceding step $h$, iterate \eqref{eq:candidate-contraction}; averaging then gives $\mathbb E\langle \boldsymbol{\chi}_h,\boldsymbol{\chi}_{h'}\rangle \leq \lambda^{h' - h} \Pr[m_h > 0]$. Summing the diagonal and cross terms yields
\begin{equation}\label{eq:occupation-vector}
 \mathbb E\left\|\sum_h \boldsymbol{\chi}_h\right\|^2
 \leq \left(1 + 2 \sum_{j \geq 1} \lambda^j\right) \mathbb E R
 = O(\ln n).
\end{equation}

At a reached step $h$ with $m_h > 0$, \Cref{def:candidate-process} assigns selection probability at most $C / m_h$ to each $w \in U_h$. For $v \in U_h$, Cauchy-Schwarz and \eqref{eq:edge-square-sum} therefore give
\begin{equation}\label{eq:completion-conditional-mean}
 0 \leq \mathbb E[c_h(v) \mid W_h] \leq \frac{C}{m_h} \sum_{w \in U_h} \theta_{vw} = O\left(\sqrt{n \ln n}\,\chi_h(v)\right).
\end{equation}
For $v \notin U_h$, both $c_h(v)$ and $\chi_h(v)$ are zero.

For $h < j$, if step $j$ is reached, the vectors $\boldsymbol{c}_h$ and $\boldsymbol{\chi}_j$ are determined by the preceding choices. Applying \eqref{eq:completion-conditional-mean} coordinatewise after fixing those choices and then averaging gives
\begin{equation}\label{eq:completion-cross-term}
 \mathbb E\langle\boldsymbol{c}_h,\boldsymbol{c}_j\rangle
 = O(\sqrt{n \ln n})\,\mathbb E\langle\boldsymbol{c}_h,\boldsymbol{\chi}_j\rangle.
\end{equation}
The later vectors are zero after stopping, so this estimate holds for every pair $h < j$. Expanding the squared norm, summing \eqref{eq:completion-cross-term}, and using nonnegativity gives
\[
 V = O\left(n \ln n\,\mathbb E R + \sqrt{n \ln n}\,\mathbb E\left\langle \sum_h \boldsymbol{c}_h,\sum_h \boldsymbol{\chi}_h\right\rangle\right).
\]
The bound $\mathbb E R = O(\ln n)$, Cauchy-Schwarz, and \eqref{eq:occupation-vector} bound the RHS by $O(n \ln^2 n) + O\left(\sqrt{n \ln^2 n\,V}\right)$, which implies $V = O(n \ln^2 n)$. Together with $s_n = O(\ln n)$, this bound proves the last estimate in \eqref{eq:completion-profile}.
\end{proof}

\subsection{Implementation of the source algorithm}\label{app:source-implementation}

In the circuit for \textsc{FindSource} using $\mathcal M_q$, each possible return in \Cref{alg:sourcefinding} is written to a fresh register initialized to $\ket{\mathrm{init}}$. Later stages, including their oracle calls, are applied only while all earlier return registers remain in this state. Each search circuit and its inverse are expanded as a sequence of applications of $\mathcal M_q$ or its inverse interleaved with input-independent unitary gates. Assign a separate workspace $\mathcal H_{\mathrm{work},a}$, initialized to $\ket{0}_{\mathrm{work},a}$, to each occurrence $a$. The implemented circuit replaces each occurrence by $\widetilde{\mathcal M}_q$ or its inverse. Every other gate, including each search reflection, is extended by $\bigotimes_a I_{\mathrm{work},a}$.

For unitaries $\mathcal U_1,\ldots,\mathcal U_s$ and replacements $\widetilde{\mathcal U}_1,\ldots,\widetilde{\mathcal U}_s$, the identity
\begin{equation}\label{eq:unitary-telescoping}
 \widetilde{\mathcal U}_s\cdots\widetilde{\mathcal U}_1 - \mathcal U_s\cdots \mathcal U_1
 = \sum_{a = 1}^s\widetilde{\mathcal U}_s\cdots\widetilde{\mathcal U}_{a + 1}
 (\widetilde{\mathcal U}_a - \mathcal U_a)\mathcal U_{a - 1}\cdots \mathcal U_1
\end{equation}
will bound the accumulated approximation error.

\begin{proof}[Proof of \Cref{lem:source-implementation}]
Compare two unitary circuits, one using $\mathcal M_q$ and one using $\widetilde{\mathcal M}_q$. The guarantees in \Cref{lem:search-stage} apply to the circuit using $\mathcal M_q$. The final states of the two circuits are compared using \eqref{eq:unitary-telescoping}. Fix the table of positions in the test defined in \Cref{subsec:block-graph}. Both circuits use the unitary implementation of \textsc{Search} given in \Cref{subsec:complete-search}. They prepare the initial vertex register in uniform superposition over $[n]$ and give every call to \textsc{Search} a separate output register, whose computational basis value controls the later operations without being measured. The classification $\gamma_{w_d}$ of each stored vertex $w_d$ is written to the field $g_d$. Later gates preserve each output register and classification field after it has been written. Their joint computational basis values therefore index orthogonal subspaces preserved by the later gates, and measuring the final output register gives the output distribution of the procedure using $\mathcal M_q$ for the fixed table.

By \eqref{eq:search-circuit-count}, each complete search contains $O(\sqrt{n})$ membership checking circuits or their inverses. The loop in \Cref{alg:sourcefinding} makes at most $n - 1$ searches, so the circuit for \textsc{FindSource} contains $O(n^{3 / 2})$ such occurrences, including those controlled by earlier return values.

Before an occurrence is performed in the circuit using $\mathcal M_q$, its assigned workspace is zero and the list register lies in the range of $\Pi_{\mathrm{list}}$. The initialization and append operations store classifications computed from the fixed table, while every search circuit and each reflection in it leaves the list register unchanged. Both circuits act as the identity whenever the stored control values prevent that occurrence from being applied. These controls are unchanged by the membership checking circuits. Since every parameter satisfies $q \leq q_J < 12 T_2$, \Cref{lem:membership-implementation} and orthogonality of the control subspaces bound the error at each occurrence by $2 \sqrt{12 T_2} n^{-2}$ on every state reached by the preceding circuit using $\mathcal M_q$. The same bound applies to inverse occurrences because both membership checking circuits square to the identity.

Apply \eqref{eq:unitary-telescoping} to all occurrences, replacing them from last to first. The intervening gates are common to both circuits and unitary, so the distance between the final states is bounded by
\begin{equation}\label{eq:source-implementation-error}
 O\left(n^{3 / 2} \sqrt{T_2} n^{-2}\right) = O\left(n^{-1 / 4} \sqrt{\ln n}\right) = o(1).
\end{equation}
The estimate in \Cref{lem:membership-implementation} holds for every fixed table of positions. Because the table register is preserved, distinct tables determine orthogonal subspaces. The uniform error bound therefore also holds when the table register is in superposition. For the measurement of the final output register, each outcome probability differs between the two circuits by at most twice the distance between their final states. Thus each difference is $o(1)$, uniformly over inputs and tables.

Before applying any circuit that computes a classification or membership answer, or the inverse of such a circuit, the procedure returns $1$ if the application would make the total query count exceed $T_{\max}$. Consequently, the procedure makes at most $T_{\max}$ queries on every input and for every table of positions. The construction described in \Cref{app:implementations} gives a circuit making exactly $T_{\max}$ queries. Measuring its output register gives the distribution of the implemented procedure.

The random table is prepared in an unchanged register in uniform superposition over $[n]^{n \times s_n}$. Its basis values determine the queried positions throughout the circuit. Orthogonality of the basis states of the table register gives the average of the output distributions for the fixed tables, with the same query count $T_{\max}$.
\end{proof}

\subsection{Exact search}\label{app:exact-preparation}

\begin{proof}[Proof of \Cref{lem:weighted-success}]
On a negative input, verification rejects and $\mu(\omega) = 0$. Fix a positive input with satisfied term $(i^\star,S)$. With the choices before round $a$ fixed and $i^\star \in U_a$, the label paired with $i^\star$ is uniform among the other candidates, and either order in the pair is equally likely. A comparison with block $j$ keeps $i^\star$ if it reaches $m_j(S)$. Otherwise both prefixes contain only zeros, and the comparison deletes $i^\star$ with probability $1 / 2$. Comparisons with auxiliary labels always keep $i^\star$. The conditional probability $p_a$ of keeping $i^\star$ is therefore given by \eqref{eq:elimination-retention-probability}.

On an accepted record, $p_a = p_a(\omega)$. Since $1 / (2(m - 1)) \leq 1 / m$ for $m \geq 2$ and $|U_a| = N / 2^a$,
\begin{equation}\label{eq:elimination-deletion-sum}
 \sum_{a = 0}^{L - 1}(1 - p_a)
 \leq \frac{1}{N}\sum_{j \neq i^\star}\sum_{\substack{a \geq 0\\32 \cdot 2^a < m_j(S)}}2^a
 \leq \frac{1}{16 N}\sum_{j \neq i^\star}m_j(S)
 \leq \frac{1}{2}.
\end{equation}
The inner sum is less than $m_j(S) / 16$, and the last bound uses \Cref{lem:first-hit-sum}. Hence $\prod_a p_a \geq 1 - \sum_a(1 - p_a) \geq 1 / 2$, which gives $\mu(\omega) \leq 2$. This estimate holds for every $k$-element set $S$ and all sets $U_a$ of sizes $N / 2^a$ containing $i^\star$.

For the expectation, consider the product of the reciprocal probabilities from a given round onward, with value zero if $i^\star$ is subsequently deleted. We prove by backward induction that its conditional expectation is one whenever the preceding choices have kept $i^\star$. At the final singleton $\{i^\star\}$, verification accepts and the empty product is one. At round $a$, the choices that keep $i^\star$ have total probability $p_a$; after each such choice the expected product of the later reciprocals is one by induction. Multiplication by $p_a^{-1}$ therefore gives expectation one. Since $U_0 = [N]$ contains $i^\star$, this proves \eqref{eq:weighted-success}.
\end{proof}

We construct the circuit $\mathcal A_x$ used by \textsc{FindTerm} (\Cref{alg:exact-term}). Put $n_a = N / 2^a$ for $0 \leq a \leq L$. The record registers are
\[
\begin{array}{c|l}
 \text{register} & \text{contents} \\ \hline
 \pi_a,\quad 0 \leq a < L & \text{a permutation of }[n_a] \\
 U_a,\quad 0 \leq a \leq L & \text{an }N\text{-bit characteristic vector} \\
 R_a,\quad 0 \leq a < L & 32 N\text{ answer bits from round }a \\
 R_L & n + M\text{ verification answer bits} \\
 c & \text{the verification bit}
\end{array}
\]
Let $\mathcal H_{\mathrm{record}}$ be their joint state space. The flag qubit has state space $\mathbb C^2$, and the query address and auxiliary registers have joint state space $\mathcal H_{\mathrm{aux}}$. The circuit acts on
\[
 \mathcal H_{\mathrm{record}} \otimes \mathbb C^2 \otimes \mathcal H_{\mathrm{aux}},
\]
with every register initially zero.

Each permutation has a distinct nonzero binary code in $\lceil\log_2(n_a! + 1)\rceil$ bits; zero denotes initialization. The query address register has $\lceil\log_2(n^2 + 1)\rceil$ bits. Writing a classical function $g(u)$ in a target register means applying $\ket{u,v} \mapsto \ket{u,v \oplus g(u)}$, leaving its arguments unchanged. Unused permutation codes and candidate sets of the wrong cardinality give computed value zero and disable the corresponding reads. Each read computes a legal address from the stored indices, adds the addressed bit to the next answer register, and undoes the address computation. A disabled call acts as the identity by the convention in \Cref{app:implementations}. The address and auxiliary registers are restored to zero after each call.

The circuit performs the following steps.
\begin{enumerate}
\item It writes $[N]$ in $U_0$. For every $a < L$, it exchanges the zero state of $\pi_a$ with
\[
 \frac{1}{\sqrt{n_a!}}\sum_{\pi \in \mathfrak S_{n_a}}\ket{\pi},
\]
where $\mathfrak S_{n_a}$ is the set of permutations of $[n_a]$, and fixes the orthogonal complement of these two states. They are orthogonal because the permutation codes are nonzero.

\item For $a = 0,\ldots,L - 1$, the permutation $\pi_a$ acts on the ranks of the sorted elements of $U_a$, and consecutive entries form ordered pairs. For each pair $(i,j)$, the circuit reads $x_i(P_j(t))$ and then $x_j(P_i(t))$ for $t = 1,\ldots,32 \cdot 2^a$, storing the answers consecutively in $R_a$. When either label is auxiliary or $t > M$, both calls act as the identity. Thus every pair has the same number of calls, and each round uses $32 N$ queries. The comparison rule in \Cref{alg:elimination} determines the retained labels, whose characteristic vector is written in $U_{a + 1}$.

\item If $U_L$ is a singleton $\{i\}$ with $i \in [n]$, the circuit reads $x_i(1),\ldots,x_i(n)$ into the first $n$ bits of $R_L$. Let $S$ be their support. When $|S| = k$, it queries $x_j(P_i(a))$ for $j \neq i$ and $a \in [m_j(S)]$, in increasing order of $j$ and then $a$, into the remaining $M$ bits. The bound in \Cref{lem:first-hit-sum} ensures that they fit. Calls beyond this enumeration act as the identity, as do all $M$ calls when $|S| \neq k$. If $U_L$ is not such a singleton, all $n + M$ calls act as the identity. The bit $c$ is one if $U_L = \{i\}$ with $i \in [n]$, $|S| = k$, and the last $M$ bits of $R_L$ are zero; otherwise it is zero.

\item Let $\omega$ denote the basis record in $\mathcal H_{\mathrm{record}}$. If $c = 1$, $U_L = \{i^\star\} \subseteq [n]$, the support $S$ of the first $n$ bits of $R_L$ has size $k$, and every $U_a$ has size $n_a$ and contains $i^\star$, define $p_a(\omega)$ by \eqref{eq:elimination-retention-probability} with $i = i^\star$, and define $\mu(\omega)$ by \eqref{eq:elimination-weight}. Set $\mu(\omega) = 0$ otherwise. The circuit applies
\[
 \sum_\omega \ket{\omega}\bra{\omega} \otimes
 \begin{pmatrix}
  \sqrt{1 - \mu(\omega) / 4} & -\sqrt{\mu(\omega) / 4} \\
  \sqrt{\mu(\omega) / 4} & \sqrt{1 - \mu(\omega) / 4}
 \end{pmatrix}
 \otimes I_{\mathrm{aux}}.
\]
The estimate \eqref{eq:elimination-deletion-sum} bounds the assigned weights by two on every basis record. This defines an input-independent unitary on the full state space and gives \eqref{eq:exact-flag-rotation} when the flag qubit is initialized to zero.
\end{enumerate}

The rounds and verification use $32 N L + n + M = T_n$ queries; the other steps use none. Reversing the gates gives $\mathcal A_x^{-1}$ with the same count. Before the flag rotation, the amplitudes are products of the uniform permutation amplitudes, and the later gates preserve distinct records. Their probabilities are therefore those of \Cref{alg:elimination}, which proves \eqref{eq:exact-flag-probability}.

\section{Construction and totalization}\label{app:construction}

\subsection{Construction and certificate complexity}\label{app:dnf-partial}\label{app:description-verification}\label{app:certificate-lower}

\begin{proof}[Proof of \Cref{lem:first-hit-sum}]
Fix $S$ and put $p = |S|/n \geq 1/3$. Extend each list by independent uniform entries. The first index of an entry in $S$ in each extension has a geometric distribution with parameter $p$ and is at least $m_j(S)$. Hence
\[
 \mathbb E[(5/4)^{m_j(S)}]
 \leq \sum_{a \geq 1} p(1-p)^{a-1}(5/4)^a
 = \frac{5p}{5p-1} \leq \frac{5}{2}.
\]
The indices for different lists are independent. Markov's inequality and a union bound over at most $2^n$ sets bound the probability that $\sum_{j = 1}^n m_j(S) > 8n$ for some $S$ with $|S| \geq n / 3$ by
\[
 2^n(5/2)^n(4/5)^{8n} = [5(4/5)^8]^n = o(1),
\]
since $5(4/5)^8 < 1$.
\end{proof}

\begin{proof}[Proof of \Cref{lem:undefined-input}]
Following \cite[Section~5]{pabbaraju2026}, let $\eta$ be a partial assignment of size below $\tau_n$ consistent with the all-zero input. Fewer than $\tau_n = n(n - k + 1)$ pairs $(i,p)$ have an $x$- or $y$-bit fixed by $\eta$. Hence some block $i$ has at most $n - k$ positions at which either bit is fixed, leaving at least $k$ positions at which both bits are unfixed. Choose $k$ of these positions to form $S$.

Set $x_i(p) = 1$ for $p \in S$ and every other unfixed bit of $x$ to zero. Every bit of $x$ fixed by $\eta$ is zero, so every bit required by the term to be zero has value zero in this completion; hence the completion satisfies $(i,S)$. Setting every unfixed bit of $y$ to zero gives an input in the domain of $H$ with value zero. For an input in the domain of $H$ with value one, choose any $p \in S$, set $y_i(p) = 1$, and set every other unfixed bit of $y$ to zero.
\end{proof}

We define the description $z$ and the verification procedure $A$ by adapting \cite[Section~5]{pabbaraju2026}. The string $z$ encodes the tuple
\[
 (i,w_1,\ldots,w_n,\sigma_1,\ldots,\sigma_n,j_1,\ldots,j_M,p^\star),
\]
where $i \in [n]$, every $w_p,j_t,p^\star \in \{0,\ldots,n\}$, and every $\sigma_j \in \{0,\ldots,M\}$. The fields $i,w_p,j_t,p^\star$ each use $\lceil \log_2(n + 1) \rceil$ bits, and each $\sigma_j$ uses $\lceil \log_2(M + 1) \rceil$ bits. Each field uses the binary encoding of its value. Thus $d = \Theta(n \ln n)$.

The value $A(x,y,\beta,z)$ is one if and only if all field values lie in these ranges and the following checks accept:
\begin{enumerate}
\item \textbf{Count checks.} With $w_0 = 0$, these checks require $w_n = k$ and $w_p = w_{p - 1} + x_i(p)$ for every $p \in [n]$.

\item \textbf{Length checks.} Put $\sigma_0 = 0$ and $\ell_j = \sigma_j - \sigma_{j - 1}$ for $j \in [n]$. The check for $j$ requires $\ell_j = 0$ when $j = i$, and requires $1 \leq \ell_j \leq M$ and $x_i(P_j(\ell_j)) = 1$ when $j \neq i$. The field $\sigma_j$ specifies the total prefix length for blocks in $[j]$, omitting block $i$.

\item \textbf{Zero checks.} For each $t \in [M]$ with $t > \sigma_n$, the check requires $j_t = 0$. Otherwise it takes $j = j_t$, requires $j \in [n]$ and $\sigma_{j - 1} < t \leq \sigma_j$, puts $a = t - \sigma_{j - 1}$, and checks $x_j(P_i(a)) = 0$. For $t \leq \sigma_n$, the field $j_t$ specifies the block checked at index $t$, and the cumulative lengths determine the position $a$ in its prefix.

\item \textbf{Value checks.} For $\beta = 0$, the checks require $p^\star = 0$ and, for every $p \in [n]$, require $y_i(p) = 0$ whenever $x_i(p) = 1$. For $\beta = 1$, they require $p^\star \in [n]$ and $x_i(p^\star) = y_i(p^\star) = 1$.
\end{enumerate}

There is one range check for each field. Before using an encoded or derived value as an index, each check verifies that it lies in the set being indexed and rejects otherwise.

\begin{proof}[Proof of \Cref{lem:description-verification}]
Suppose a description is accepted and put $S = \{p : x_i(p) = 1\}$.  The count checks give $w_p = |S \cap [p]|$ by induction and hence $|S| = k$.  For $j \neq i$, the length check gives $P_j(\ell_j) \in S$ and therefore $m_j(S) \leq \ell_j$.

The intervals $(\sigma_{j - 1},\sigma_j]$ partition $1,\ldots,\sigma_n$, with an empty interval only at $j = i$.  For $1 \leq a \leq \ell_j$, the check at $t = \sigma_{j - 1} + a$ therefore forces $j_t = j$ and verifies $x_j(P_i(a)) = 0$.  Since $m_j(S) \leq \ell_j$, these checks include every position in $A_{ij}(S)$.  Thus $(i,S)$ is satisfied, and the value checks establish $H(x,y) = \beta$.

Conversely, suppose $x$ satisfies $(i,S)$. Set $w_p = |S \cap [p]|$, take $\ell_i = 0$ and $\ell_j = m_j(S)$ for $j \neq i$, and put $\sigma_j = \sum_{a = 1}^j \ell_a$. By \Cref{lem:first-hit-sum}, $\sigma_n \leq \sum_{j = 1}^n m_j(S) \leq M$. Set $j_t = j$ when $\sigma_{j - 1} < t \leq \sigma_j$, and set $j_t = 0$ for $t > \sigma_n$. For $j \neq i$ and $1 \leq a \leq \ell_j$, the position $P_i(a)$ belongs to $A_{ij}(S)$, so $x_j(P_i(a)) = 0$ and every zero check accepts. Take $p^\star = 0$ when $H(x,y) = 0$, and otherwise take any $p^\star \in S$ with $y_i(p^\star) = 1$.

The count, length, and value checks are indexed over $[n]$, and the zero checks over $[M]$. Together with one range check for each field, this gives $O(n)$ checks with indices independent of encoded values.  Each check reads $O(1)$ fields of $O(\ln n)$ bits and at most two bits of $(x,y)$, so it can be evaluated deterministically with $O(\ln n)$ queries.
\end{proof}

\begin{proof}[Proof of \Cref{lem:certificate-lower}]
The proof follows the all-zero argument of \cite[Section~5]{pabbaraju2026}.  At $Z_0$, each input to $H$ lies outside the domain of $H$, so $F_n(Z_0) = 0$.  Let $\eta$ be a partial assignment consistent with $Z_0$ and of size below $\tau_n$.  At most $|\eta|$ cells contain a cheat-sheet coordinate fixed by $\eta$, while the array has more than $\tau_n$ cells.  Hence some cell $a$ contains no fixed coordinate.

For every $\ell \in [r]$, apply \Cref{lem:undefined-input} to the restriction of $\eta$ to the $\ell$th input to $H$ and complete it to value $a_\ell$.  Fill cell $a$ with descriptions whose existence is established by \Cref{lem:description-verification}, and set every remaining unfixed bit of the cheat-sheet array to zero.  This completion is consistent with $\eta$ and has $F_n = 1$, so $\eta$ is not a zero-certificate.

For the reverse inequality, choose one of the $r$ inputs to $H$ and fix any $n - k + 1$ bits to zero in every block of $x$ in that input.  Each block then contains at most $k - 1$ ones in every completion, so this input remains outside the domain of $H$.  By \Cref{lem:description-verification}, every cell rejects.  The fixed bits therefore form a zero-certificate for $Z_0$ of size $n(n - k + 1) = \tau_n$.
\end{proof}

\subsection{Reduction to evaluating \texorpdfstring{$H$}{H}}\label{subsec:query-transfer}\label{app:query-transfer}

\begin{proof}[Proof of \Cref{lem:query-transfer}]
On the basis state $\ket{\ell}$ of the selection register, write the action of the controlled circuit for $H$ as
\[
 \ket{\ell} \ket{0}_{\mathrm{out}} \ket{0}_{\mathrm{aux}}
 \longmapsto
 \ket{\ell} \sum_{b \in \{0,1\}} \ket{b}_{\mathrm{out}} \ket{\psi_{\ell,b}}_{\mathrm{aux}}.
\]
The vectors $\ket{\psi_{\ell,b}}$ are subnormalized. If the $\ell$th input lies in the domain of $H$ and has value $\beta$, then $\|\psi_{\ell,\beta}\|^2 \geq 3 / 4$. The selection register determines which stored input is queried and remains unchanged, so this circuit and its inverse each use $T_H$ queries. Apply the recovery algorithm of \cite[Theorem~3]{BuhrmanNewmanRohrigDeWolf} with $\epsilon = 1 / 4$ and $t = r$, and measure its output register to obtain $a \in \{0,1\}^r$. If every input lies in the domain of $H$, then $a$ equals their string of values with probability at least $2 / 3$. On every input, the recovery algorithm applies the controlled circuit for $H$ or its inverse $O(r)$ times and therefore uses $O(r T_H)$ queries.

By \Cref{lem:description-verification}, cell $a$ has $O(n r)$ checks, each evaluated deterministically with $O(\ln n)$ queries. The convention in \Cref{app:implementations} gives these checks a common fixed query count. To detect rejection, the algorithm applies Grover search over the checks. For each check, it computes whether the check rejects, adds this value to a response qubit, and reverses the computation, using $O(\ln n)$ queries to the input and the description. The search is run with error at most $1 / 200$, uses $O(\sqrt{n r} \ln n)$ queries by \cite[Theorem~3]{BuhrmanCleveDeWolfZalka}, and verifies every reported rejection. The algorithm outputs one if and only if the search returns $\bot$.

If $F_n = 0$, every cell rejects, so the search for a rejected check has error probability at most $1 / 200$ for every returned address. If $F_n = 1$, all inputs lie in the domain of $H$, and their values specify the accepting cell. The recovery algorithm returns its address with probability at least $2 / 3$. At that address every check accepts, so the search for a rejected check returns $\bot$ with certainty.
\end{proof}

\subsection{Bijections associated with pairs of blocks}\label{app:bijection-extension}

We prove the bounds for the original construction stated in \Cref{subsec:construction-comparison}. Number the positions within each block by $[n]$, and write $\lambda_{ij}:[n] \to [n]$ for the bijection on block $i$ associated with the pair $\{i,j\}$ in \cite[Section~3]{pabbaraju2026}. For nonempty $S \subseteq [n]$, put $m_{i,j}(S)=\min_{p \in S}\lambda_{ij}(p)$; put $m_{i,j}(\emptyset)=n+1$. The term $(i,S)$ requires block $i$ to have support $S$ and requires
\[
 x_j(\lambda_{ji}^{-1}(a))=0
 \qquad \text{for all } j \neq i,\ 1 \leq a \leq m_{i,j}(S).
\]

Choose all $n(n-1)$ bijections independently and uniformly. For fixed $i,S$ with $|S| \geq n/3$ and $0 \leq t \leq n$,
\[
 \Pr[m_{i,j}(S)>t]
 = \frac{\binom{n-t}{|S|}}{\binom{n}{|S|}}
 \leq (1-|S|/n)^t \leq (2/3)^t.
\]
The calculation in the proof of \Cref{lem:first-hit-sum}, with an additional union bound over $i$, shows that
\begin{equation}\label{eq:bijection-prefix-sum}
 \sum_{j \neq i}m_{i,j}(S) \leq 8n,
 \qquad \text{for all } i \in [n],\ |S| \geq n/3,
\end{equation}
with probability at least $1 - n[5(4/5)^8]^n = 1 - o(1)$.
Fix such a choice. In particular, every term has at most $9n$ literals.

For an arbitrary input $x$, let $S_i$ be the support of block $i$ and put $m_{ij}=m_{i,j}(S_i)$. If $(i,S)$ is satisfied, its zero requirements imply $m_{ji}>m_{ij}$ for every $j \neq i$. The classification test and edge rule from \Cref{subsec:block-graph} therefore preserve this source whenever the test succeeds.

The comparison circuits use the string
\[
 b_a=x_i(\lambda_{ij}^{-1}(a)) \lor x_j(\lambda_{ji}^{-1}(a)),
 \qquad a \in [n].
\]
A query to this string uses four queries to $x$, and, when the string contains a one, its least nonzero index is $\min\{m_{ij},m_{ji}\}$. Substituting these addresses in \Cref{sec:edge-proof} and ending the prefix comparisons at length $n$ gives the same bounds on queries and approximation errors. Define the symmetric parameters $\theta_{ij}$ as in \Cref{subsec:block-graph}. If the classification test succeeds and block $i$ is declared dense, \eqref{eq:bijection-prefix-sum} gives
\[
 \sum_j\theta_{ij}^2
 = O\left(n+\ln n\sum_{j \neq i}m_{i,j}(S_i)\right)
 = O(n \ln n);
\]
for a block declared sparse, the sum is zero. Thus \eqref{eq:edge-square-sum} holds. The proof of \Cref{lem:query-upper} applies with these comparison circuits.

The construction of $G$ in \cite{pabbaraju2026} uses $r = \Theta(\ln n)$ copies of $H$ and $O(n r)$ checks per cell, each evaluated deterministically with $\operatorname{polylog} n$ queries. The proof of \Cref{lem:query-transfer} therefore gives $\operatorname{Q}(G) = \widetilde O(\sqrt{n})$; the bound $\operatorname{C}(G) \geq \tau_n = \Omega(n^2)$ follows from \cite{pabbaraju2026}.

\subsection{Alternative totalization}\label{app:alternative-totalization}

The descriptions can also be indexed by the block of a satisfied term and its least nonzero position. For fixed $i,p \in [n]$, a description $z$ contains the fields
\[
 (w_1,\ldots,w_n,\sigma_1,\ldots,\sigma_n,j_1,\ldots,j_M),
\]
with the ranges and binary encodings in \Cref{app:description-verification}, so its length is $d' = \Theta(n\ln n)$. The block index $i$ is supplied by the array address. Let $V_{i,p}(x,z)$ be the conjunction of the count, length, and zero checks from that construction, the range checks for the fields listed above, and the conditions $w_{p-1} = 0$ and $w_p = 1$, where $w_0 = 0$. By the proof of \Cref{lem:description-verification}, an accepted description exists if and only if $x$ satisfies a term $(i,S)$ with $p = \min S$. There are $O(n)$ checks, each evaluated deterministically with $O(\ln n)$ queries.

An input to $F'_n$ consists of $x \in \{0,1\}^{n^2}$ and an array $(z_{i,p})_{i,p \in [n]}$ of such descriptions. Define $F'_n$ to be one if and only if $V_{i,p}(x,z_{i,p}) = 1$ for some $i,p \in [n]$. If $f(x) = 0$, every description rejects. Otherwise, the unique satisfied term $(i,S)$ determines the only address $(i,\min S)$ that can accept. This gives a total function on $\Theta(n^3\ln n)$ bits.

To prove the certificate lower bound, consider a partial assignment of fewer than $\tau_n = n(n-k+1)$ bits consistent with the all-zero input to $F'_n$. Group each bit $x_i(p)$ with all bits of $z_{i,p}$. The groups are disjoint, so some block $i$ has at least $k$ groups containing no fixed bit. Choose their positions as $S$ and put $p = \min S$. Set block $i$ to have support $S$ and every other block to zero. This satisfies $(i,S)$, and every bit of $z_{i,p}$ is unfixed, so it can be filled with an accepted description. Set every other unfixed description bit to zero. The resulting positive completion proves $\operatorname{C}(F'_n) \geq \tau_n$. For the upper bound, reading all of $x$ determines whether a term is satisfied and, if so, the only address that can accept. Reading that description then determines $F'_n$. Thus $\operatorname{D}(F'_n) \leq n^2 + d' = O(n^2)$, and $\operatorname{C}(F'_n) = \Theta(n^2)$.

The bounded-error quantum algorithm runs \textsc{FindSource} on $x$, then uses \Cref{lem:prefix-search} to find the least nonzero position $p$ of the returned block $i$. It returns zero if that search returns $\bot$; otherwise it uses Grover search, with error at most $1/100$, to find a rejecting check of $V_{i,p}(x,z_{i,p})$, verifying every reported rejection. It returns one if no rejection is found. On a positive input, the first two steps find the accepting address with probability at least $(5/6)(1-n^{-5})$, and every check there accepts. On a negative input, every description rejects, so any selected description is rejected with probability at least $99/100$. The three steps use $O(\sqrt{n}\ln^2 n)$, $O(\sqrt{n\ln n})$, and $O(\sqrt{n}\ln n)$ queries, respectively, by \Cref{lem:block-finding,lem:prefix-search} and \cite[Theorem~3]{BuhrmanCleveDeWolfZalka}. Hence $\operatorname{Q}(F'_n) = O(\sqrt{n}\ln^2 n)$.

On a positive input to the DNF, \Cref{lem:weighted-success} and the equality $\mu = 0$ on rejection give $1 = \mathbb E\mu \leq 2\Pr[\mathrm{Eliminate}(x) \neq \bot]$. Two independent runs therefore find its term with probability at least $3/4$. The randomized algorithm for $F'_n$ runs these searches and returns zero if no term is returned; otherwise it checks the description at the address determined by a returned term. This gives one-sided error at most $1/4$ and $\operatorname{R}(F'_n) = O(n\ln n)$. Replacing the two runs by \textsc{FindTerm} gives an exact algorithm with the same asymptotic bound. The convention in \Cref{app:implementations} gives its unitary implementation a fixed query count on every input.

For the zero-error lower bound, fix $x_1$ to have support $[k]$ and every other block to zero. Only the description at $(1,1)$ can accept. Fix an accepted description there except for its $n$ count fields, and fix every other description. The remaining $m = n\lceil\log_2(n+1)\rceil$ bits have a unique accepting assignment, since $w_q$ must encode $\min\{q,k\}$. After complementing the coordinates where that assignment is zero, the restriction equals $\operatorname{AND}_m$. By \cite[Proposition~6.1]{Beals2001}, this gives $\operatorname{Q}_0(F'_n) \geq m$. Together with the exact upper bound and $\operatorname{Q}_0(F'_n) \leq \operatorname{Q}_{\mathrm E}(F'_n)$, this proves that both complexities are $\Theta(n\ln n)$.

\end{document}